\documentclass[12pt,a4paper]{article}

\usepackage[utf8]{inputenc}
\usepackage[T1]{fontenc}
\usepackage{lmodern}
\usepackage[english]{babel}
\usepackage{microtype}
\usepackage{setspace}
\usepackage{makecell}

\usepackage[a4paper,margin=1in]{geometry}
\usepackage[svgnames]{xcolor}

\definecolor{MainRed}{RGB}{139,0,0}
\definecolor{DarkRed}{RGB}{100,0,0}
\definecolor{SoftRed}{RGB}{178,72,72}
\definecolor{PaleRed}{RGB}{250,235,235}
\definecolor{TextGray}{RGB}{70,70,70}
\definecolor{LightGray}{RGB}{230,230,230}

\colorlet{square}{MainRed}
\colorlet{codered}{MainRed}

\definecolor{codegreen}{RGB}{46,111,64}
\definecolor{codegray}{RGB}{105,105,105}
\definecolor{codepurple}{RGB}{122,62,142}
\definecolor{backcolour}{RGB}{250,246,246}

\usepackage{amsmath,amssymb,amsthm,mathtools,bm}

\usepackage{graphicx}
\usepackage{booktabs}
\usepackage{float}
\usepackage{pdflscape}
\usepackage{caption}
\usepackage{subcaption}
\usepackage{placeins}

\usepackage{longtable}
\usepackage{tabularx}
\usepackage{array}
\usepackage{ragged2e}
\newcolumntype{Y}{>{\raggedright\arraybackslash}X}

\newcommand{\tablestretch}{\renewcommand{\arraystretch}{1.10}}

\newcommand{\compacttablecols}{\setlength{\tabcolsep}{3.5pt}}

\usepackage{tikz}
\usetikzlibrary{arrows.meta,fit,calc,positioning}

\usepackage[toc,page]{appendix}

\usepackage{natbib}
\setcitestyle{authoryear,open={(},close={)}}
\usepackage{bibunits}
\defaultbibliographystyle{apalike}

\theoremstyle{plain}
\newtheorem{thm}{Theorem}[section]
\newtheorem{prop}[thm]{Proposition}
\newtheorem{lem}[thm]{Lemma}
\newtheorem{cor}[thm]{Corollary}

\theoremstyle{definition}
\newtheorem{defn}[thm]{Definition}

\theoremstyle{remark}
\newtheorem{remark}[thm]{Remark}

\DeclareMathOperator{\Var}{Var}
\DeclareMathOperator{\Cov}{Cov}

\newcommand{\1}{\mathbf{1}}

\usepackage{xurl}
\usepackage{hyperref}
\hypersetup{
  colorlinks=true,
  linkcolor=MainRed,
  citecolor=MainRed,
  urlcolor=MainRed,
  anchorcolor=MainRed
}

\newcommand{\weqref}[1]{\begingroup\hypersetup{linkcolor=white}\eqref{#1}\endgroup}

\newif\ifshowonline
\showonlinefalse

\newcommand{\sidoi}{https://doi.org/10.5281/zenodo.21796549}
\newcommand{\siref}[1]{\ref{#1}}

\newcommand{\mt}{}

\newif\ifinsertfigmarkers
\insertfigmarkerstrue

\newlength{\figuregap}

\makeatletter
\newcommand{\definesinum}[2]{\expandafter\gdef\csname sinum@#1\endcsname{#2}}
\newcommand{\sinum}[1]{\@ifundefined{sinum@#1}{\textbf{??}}{\csname sinum@#1\endcsname}}
\makeatother

\definesinum{app:alternative_information_sets}{D.1}
\definesinum{app:calibration_invariance}{A.5}
\definesinum{app:orthogonality}{B}
\definesinum{app:proof_Moments}{A.1}
\definesinum{app:proof_Proposition_GIRF_PD}{A.2}
\definesinum{app:proof_pd_ar}{A.3}
\definesinum{app:proof_pd_es}{A.4}
\definesinum{app:satellite_bma}{D.3}
\definesinum{app:simulation_design}{C}
\definesinum{app:variables_detail}{D.2}
\definesinum{lem:closed}{A.1}
\definesinum{tab:direct_channel_pd}{6}
\definesinum{tab:direct_channel_satellite}{5}
\definesinum{tab:dralacbn_information_set_robustness}{7}
\definesinum{tab:var_stability_info_sets}{8}

\renewcommand{\siref}[1]{\href{\sidoi}{\sinum{#1}}}

\begin{document}

\begingroup
\singlespacing
\thispagestyle{plain}

\vspace*{1.7cm}

\begin{center}
{\Large Generalized Impulse Responses of Portfolio Default Probabilities: A Modular Framework with an Application to Geopolitical Risk
 \par}
\vspace{0.8cm}

Guillaume Flament$^{\S}$
\quad
Christophe Hurlin$^{\dagger,\ddagger,*}$
\quad
Quentin Lajaunie$^{\dagger,\S}$
\quad
Yoann Pull$^{\dagger,\S}$

\vspace{0.5cm}

{\normalsize June 2026}
\end{center}

\vspace{1.1cm}

\begin{center}
\textbf{Summary}
\end{center}

\begin{quote}
\small
Credit stress testing requires impulse responses of portfolio default probabilities, not only macro-financial drivers. We derive closed-form generalized impulse responses for the mean, quantiles (PD-at-Risk), and expected shortfall in a modular framework combining a Bayesian VAR, a Gaussian satellite, and the Merton–Vasicek model underlying Basel IRB regulation. Results extend to any probit-Gaussian mapping of a latent factor. Nonlinearity makes responses depend on conditional means and variances; plug-in evaluations understate projected default probability levels by 6-8\% and miss tail quantiles. For U.S. geopolitical risk shocks, 99\%-quantile responses exceed mean responses by 50\%, and peak responses vary 4.6-fold across the credit cycle.

\end{quote}

\vspace{0.2cm}

\noindent\textbf{Keywords:} generalized impulse responses, probit models, portfolio default probabilities, Bayesian VAR, credit stress testing, geopolitical risk.

\vspace{0.2cm}

\noindent\textbf{JEL codes:} C11, C32, E44, G01, G21, G32.

\vfill

\noindent\rule{\textwidth}{0.4pt}

\vspace{0.1cm}

\scriptsize
\noindent$^{\dagger}$University of Orl\'eans, Rue de Blois, 45067 Orl\'eans, France.\\
$^{\ddagger}$Institut Universitaire de France (IUF), 75231 Paris, France.\\
$^{\S}$Square Research Center, 173 Av.\ Achille Peretti, 92200 Neuilly-sur-Seine, France.
\smallskip

\noindent$^{*}$Corresponding author: Christophe Hurlin, University of Orl\'eans, Rue de Blois, 45067 Orl\'eans, France.\\
Email: \href{mailto:christophe.hurlin@univ-orleans.fr}{christophe.hurlin@univ-orleans.fr}.

\smallskip

\noindent\textbf{Author emails:}\\
Guillaume Flament, \href{mailto:guillaume.f.flament@hotmail.fr}{guillaume.f.flament@hotmail.fr};\\
Christophe Hurlin, \href{mailto:christophe.hurlin@univ-orleans.fr}{christophe.hurlin@univ-orleans.fr};\\
Quentin Lajaunie, \href{mailto:quentin_lajaunie@hotmail.fr}{quentin\_lajaunie@hotmail.fr};\\
Yoann Pull, \href{mailto:yoann.pull.pro@gmail.com}{yoann.pull.pro@gmail.com}.

\smallskip

\noindent\textbf{Acknowledgments.}
We are grateful for the valuable insights, feedback, and discussions received throughout the development of this paper. We also thank participants at the 19th Financial Risks International Forum, held in Paris in March 2026, and at the 18th Annual SoFiE Conference, held in Macau in June 2026, for their helpful comments.

\smallskip

\noindent\textbf{Conflict of interest.}
The authors declare no conflict of interest.

\smallskip

\noindent\textbf{Supporting information.}
Proofs, additional derivations, robustness checks, and implementation details are
provided in the Supporting Information, archived at \url{\sidoi}.

\smallskip

\noindent\textbf{Data availability statement.}
The data and code required to reproduce the results of this study are archived at
\url{https://doi.org/10.5281/zenodo.21797928}. All input series are publicly
available from the sources documented in the repository.
\endgroup

\doublespacing
\setlength{\parskip}{\medskipamount}

\section{Introduction}

In many macro-financial applications the object of interest is not a variable of the econometric model itself, but a nonlinear transformation of a latent Gaussian state produced by a separate modeling block. Credit stress testing is a leading example, in which a macro-financial model generates scenarios and a separate satellite equation maps them into a latent systematic credit factor and, through it, into portfolio default probabilities. The generalized impulse response of this default-probability measure need not coincide with that of the macro-financial variables, because a nonlinear map depends on the entire conditional distribution of the factor, not only on its mean path. The ``plug-in'' convention used in practice, which evaluates the map along the expected scenario path, reproduces average responses reasonably well but understates default-probability \emph{levels} and cannot recover their upper-tail \emph{quantiles}.

This paper develops closed-form mean and upper-tail generalized impulse responses of portfolio default probabilities in such modular systems combining a Bayesian vector autoregression, a Gaussian satellite equation for the latent credit factor, and a credit-risk map. The leading credit-risk map is the Merton--Vasicek asymptotic single-risk-factor (ASRF) model \citep{merton1974pricing,vasicek2002distribution,Gordy2003}, the structural portfolio model underlying the Basel internal-ratings-based (IRB) capital function. Since this map is a Gaussian probit transformation of the latent systematic factor $Z$, the analytical results can be stated for the broader class of probit-Gaussian maps $f(Z)=\Phi(a+bZ)$. It makes two contributions. First, we derive the response of the portfolio's \emph{mean} default probability in closed form, integrating the map over the conditional distribution of the latent factor; the expression separates a location channel from a variance channel and quantifies how far the plug-in convention of supervisory practice understates default-probability \emph{levels}. Second, we show that the same conditional moments characterize the upper tail of the portfolio default-probability distribution. They yield closed-form responses of its \emph{quantiles}, which we term \emph{PD-at-Risk} by analogy with the Growth-at-Risk of \citet{adrian2019vulnerable}, and of its expected shortfall, a coherent tail-risk measure \citep{acerbi2002coherence,wang2021axiomatic} adopted in the Basel framework to better capture tail risk under stress \citep{BCBS2019}. The resulting mean, quantile, and expected-shortfall responses are exact, require no simulation of default events, and accommodate macro-financial and credit blocks estimated on samples of different lengths. Section~\ref{sec:simulation_benchmark} benchmarks the closed-form expressions against forward simulation. Throughout, U.S. geopolitical risk serves as the leading illustration, but the framework applies to any macro-financial innovation.

This modular structure is a constraint imposed by practice, not a modeling convenience, and its two outputs map to the two pillars of credit-risk regulation. Expected-credit-loss accounting under IFRS~9 and CECL ties provisions to expected loss, and hence to default-probability \emph{levels}. Regulatory capital instead targets a high \emph{quantile} of the conditional default probability, the PD-at-Risk object above, of which the Basel IRB charge is the $99.9\%$ case \citep{Gordy2003}. Its adequacy is assessed under severe but plausible scenarios, as in the ICAAP in the European Union and the CCAR and DFAST exercises in the United States. In both regimes these mappings are carried by satellite models that link the credit-risk factor to the macro-financial variables and are embedded in banks' provisioning and capital infrastructures \citep{HenryKok2013}. In supervisory exercises these satellites are pre-existing and governed, and cannot generally be changed mid-exercise.\footnote{See \citet[paras.~125--129]{EBA2025method} on internal satellite models, documentation requirements, and restrictions on changing the initial modelling approach during the EU-wide stress test.}

The econometric task is therefore not to replace the satellite with a fully joint macro-credit system re-estimated for each shock, but to attach a dynamic macro-financial model to this existing mapping. This modular separation is also empirically convenient. Macro-financial series span decades, whereas default histories are shorter and shift with regulation, accounting, and portfolio composition. The modular design then propagates the shock on the full macro-financial history while re-estimating only the credit block on the available default window.

The closest literature is macro-to-credit stress testing, where satellite equations map macroeconomic scenarios into default rates, loss rates, or portfolio risk measures \citep{Wilson1997,Virolainen2004,pesaran2006macroeconomic,camara2015mercure}. These satellite architectures underpin both supervisory and internal stress-testing systems \citep{Quagliariello2009,HenryKok2013,borio2014stress}. Our contribution is to keep this modular architecture but to replace plug-in propagation and default-event simulation with an exact analytical response of the credit-risk measure itself, a generalized impulse response of portfolio default probabilities, not only of the macro-financial variables that enter the satellite.

The paper also relates to dynamic models of event probabilities and nonlinear impulse-response analysis \citep{koop1996impulse,pesaranShin1998}, in which latent Gaussian representations are a standard device for threshold probabilities \citep{albert1993bayesian,dueker2005dynamic,ChanPfarrhofer2025}. Relative to this literature, our object is not a probability \emph{forecast} but a generalized impulse \emph{response}: the difference between a shocked and a baseline conditional probability, derived for both the \emph{mean} and the \emph{quantiles} of portfolio default probabilities, with an explicit variance channel and across blocks of different sample lengths. The closest antecedent, \citet{FornariLemke2010}, integrates a probit over the predictive distribution of a latent Gaussian state to obtain conditional recession probabilities; we share that integration logic but deliver an impulse response, in the mean and in the tail, rather than a level forecast. The question thus shifts from ``how likely is the event'' to ``how does a macro-financial innovation move the entire distribution of portfolio default risk''.

Finally, the paper connects to recent work on stressed scenarios and distributional risk measures \citep{gonzalezrivera2024expecting,ChavleishviliManganelli2024}. In quantile vector autoregressions, the propagation mechanism itself may vary across conditional quantiles. In our framework, macro-financial propagation is governed by the Bayesian VAR, while the distributional responses arise from the conditional distribution of the latent systematic credit factor and its monotone transformation into default probabilities.

We apply the framework to U.S. credit risk under geopolitical-risk shocks, measured by the Geopolitical Risk Index (GPR) of \citet{caldara2022geopolitical}, a newspaper-based index that has proved informative for identifying the economic effects of geopolitical events \citep{laudati2023identifying}. Geopolitical risk is a natural illustration for a tail-oriented framework: it has become a supervisory stress-testing priority \citep{ECB_SSM_PR_2025}, it materializes through discrete, potentially large events, and it is difficult to represent through standard macroeconomic scenarios. We study the forward response to a given innovation, complementing the reverse scenario-design problem studied in \citet{hurlin2026reverse}. The macro-financial block is estimated on quarterly U.S. data from 1986:Q1 to 2024:Q4, and the credit-risk block uses the delinquency rate on all loans and leases at U.S. commercial banks as a long aggregate proxy for portfolio default risk.

The empirical results concern the entire conditional distribution of portfolio default probabilities, not only its mean. A one-standard-deviation geopolitical-risk innovation raises the mean portfolio default probability by $0.033$ percentage points at its three-quarter peak, about $1\%$ of the through-the-cycle level. The same innovation moves the upper tail substantially more: the $99\%$ default-probability quantile rises about $1.5$ times the mean response. The response is also strongly state-dependent: across the historically observed range of credit conditions, the peak response varies by a factor of $4.6$, and is largest when the shock strikes the most stressed credit states. A comparison with the perfect-foresight convention of supervisory stress tests shows that treating the scenario path as known leaves impulse responses almost unchanged but understates projected default-probability \emph{levels} by about seven to eight percent at the three-year horizon, through the convexity of the credit-risk map in the empirically relevant low-PD region.

The rest of the paper is organized as follows. Section~\ref{sec:framework} presents the econometric framework, derives the closed-form mean, quantile, and expected-shortfall responses of default probabilities, and benchmarks them against forward simulation. Section~\ref{sec:empirical_analysis} applies it to U.S. geopolitical risk through four exercises: a standard impulse-response analysis, a historical-episode analysis of state dependence, a short-default-sample exercise, and a comparison with the perfect-foresight convention. Section~\ref{sec:conclusion} concludes.

\section{General framework}
\label{sec:framework}

We consider a macro-financial system containing a variable of interest and a
set of variables that describe aggregate economic and financial conditions.
Let $R_t$ denote the variable of interest and let
$\mathbf X_t=(X_{1,t},\ldots,X_{n-1,t})^\top$ collect the remaining
macro-financial variables. We stack the variables as
\begin{equation}
    Y_t=(R_t,\mathbf X_t^\top)^\top \in \mathbb R^n.
\end{equation}
The first element of $Y_t$ is the variable whose innovation defines the shock
of interest.\footnote{Placing the variable of interest first is a notational
convention; the generalized impulse responses derived below are invariant to the
ordering of the remaining variables. The identifying content of treating its
reduced-form innovation as the structural shock is discussed in
Section~\ref{sec:irf_transmission}.}

The object of interest is the dynamic effect of a reduced-form innovation in
$R_t$ on the default probability of a credit portfolio. Formally, writing
$f(Z_{t+h})$ for the conditional event probability implied by a
probit observation block with conditionally Gaussian systematic factor $Z_t$
(both defined below), $u_{gt}$ for the innovation to the equation of the variable of
interest, and $\Omega_{t-1}$ for the VAR information set, we study the
generalized impulse response (GIRF)
\begin{equation}
    \psi_{f(Z)}^g(h,\delta_g,\omega_{t-1})
    =
    \mathbb E\!\left[
        f(Z_{t+h})\mid u_{gt}=\delta_g,\Omega_{t-1}=\omega_{t-1}
    \right]
    -
    \mathbb E\!\left[
        f(Z_{t+h})\mid \Omega_{t-1}=\omega_{t-1}
    \right].
    \label{eq:GIRF_PD_Def}
\end{equation}
The leading credit-risk specification for $f$ is the
Merton–Vasicek ASRF map used in the Basel IRB framework
(Section~\ref{sec:mv_instance}). We nevertheless derive the results for the
broader class of probit transformations of a Gaussian latent factor, of which
the Merton–Vasicek model is a particular case.
Because the probit map is nonlinear, this response is not obtained by
evaluating the map at the mean response of the systematic factor $Z_{t+h}$.

The framework is summarized in Figure~\ref{fig:twostep_approach}. It combines two separately estimated blocks: a macro-financial VAR and a credit-risk satellite linked through a subset of current and lagged macro-financial variables, denoted $Y_t^{(s)}$. The architecture is deliberately recursive: macro-financial conditions drive credit risk, while credit risk does not feed back into the macro-financial block. This mirrors internal and supervisory stress-testing systems and keeps the credit-risk bridge separate from the macro-financial model.

The two blocks may be estimated over different sample lengths. Macro-financial variables are observed over $t=1,\ldots,T_Y$, whereas portfolio default rates are available only over $t=T_0,\ldots,T_Y$, with $T_0>1$ and $T_d=T_Y-T_0+1$. The VAR is estimated on the full macro-financial sample, the latent factor is reconstructed from the shorter default sample, and the satellite is estimated on their overlap.

This sample separation concerns estimation only. The shock is a VAR innovation, and the GIRFs of $Y_{t+h}$, $Z_{t+h}$, and $f(Z_{t+h})$ are all defined conditional on the same VAR information set. As shown in Figure~\ref{fig:twostep_approach}, the innovation propagates through the VAR, the satellite, and the Merton–Vasicek map to the mean and upper tail of the portfolio default-probability distribution.

\begin{figure}[!htbp]
\centering
\resizebox{\textwidth}{!}{%
\begin{tikzpicture}[
  font=\small,
  >=Latex,
  tag/.style={rounded corners=6pt, draw=none, text=white, align=center,
              minimum width=3.9cm, minimum height=1.15cm, inner sep=5pt,
              font=\small},
  mainbox/.style={rounded corners=10pt, draw=none, text=white, align=center,
               minimum width=3.9cm, minimum height=1.75cm, inner sep=7pt},
  obox/.style={rounded corners=8pt, draw=none, text=white, align=center,
              minimum width=3.6cm, minimum height=0.95cm, inner sep=5pt},
  titlebadge/.style={draw=MainRed!45, fill=PaleRed, text=DarkRed,
                     rounded corners=5pt, inner xsep=10pt, inner ysep=5pt,
                     font=\bfseries},
  frame/.style={draw=MainRed!25, rounded corners=18pt, line width=0.9pt,
                inner sep=14pt},
  feed/.style={-Latex, line width=1.0pt, draw=MainRed!55},
  flow/.style={-Latex, line width=1.7pt, draw=DarkRed}
]
\def\dx{4.9cm}

\node (tY) at (0,2.6)     [tag, fill=SoftRed] {Macro-financial\\ sample $Y_t$~\weqref{eq:var}};
\node (tD) at (\dx,2.6)   [tag, fill=SoftRed] {Default sample $\{d_t\}$\\ factor $Z_t$~\weqref{eq:reconstruct_general}};

\node (bVAR) at (0,0)     [mainbox, fill=MainRed]
  {\textbf{Macro-financial VAR}\\[2mm] response $\psi_Y^{g}(h)$~\weqref{eq:GIRF-Y-closed}};
\node (bSat) at (\dx,0)   [mainbox, fill=MainRed]
  {\textbf{Satellite equation}\\[2mm] factor response\\ $\psi_Z^{g}(h)$~\weqref{eq:GIRF-Z}};
\node (bMap) at (2*\dx,0) [mainbox, fill=DarkRed]
  {\textbf{Merton--Vasicek ASRF}\\[2mm] $\boldsymbol{\pi(Z)}$~\weqref{eq:pd_conditional_z}};
\node[text=DarkRed, font=\footnotesize\itshape, align=center] (mvlab)
  at ($(bMap.south)+(0,-0.55cm)$) {extends to any probit map\\ $f(Z)=\Phi(a+bZ)$};

\node (oMean) at (3*\dx,1.25) [obox, fill=DarkRed] {Mean~\weqref{eq:GIRF_PD}};
\node (oAR)   at (3*\dx,0)     [obox, fill=DarkRed] {PD-at-Risk~\weqref{eq:pd_ar_girf}};
\node (oES)   at (3*\dx,-1.25) [obox, fill=DarkRed] {Expected shortfall~\weqref{eq:pd_es_girf}};

\draw[feed] (tY.south) -- (bVAR.north);
\draw[feed] (tD.south) -- (bSat.north);

\draw[-{Latex[length=4.2mm,width=3.4mm]}, line width=2.8pt, draw=MainRed]
   ($(bVAR.west)+(-2.0cm,0)$) -- (bVAR.west)
   node[midway,above=1pt,text=MainRed,font=\bfseries] {Shock $\delta_g$};
\draw[flow] (bVAR.east) -- (bSat.west);
\draw[flow] (bSat.east) -- (bMap.west);
\draw[flow] (bMap.east) -- (oMean.west);
\draw[flow] (bMap.east) -- (oAR.west);
\draw[flow] (bMap.east) -- (oES.west);

\end{tikzpicture}%
}
\caption{\textbf{A modular VAR--Merton framework.}
{\footnotesize\textit{Notes: A macro-financial VAR is estimated from the
macro-financial sample and a latent Gaussian factor $Z_t$ is reconstructed from
the default sample and linked to the VAR through a satellite equation. A
reduced-form innovation $\delta_g$ propagates from left to right: its impulse
response on $Y_t$ feeds the satellite to give the factor response $\psi_Z^{g}$,
which the Merton--Vasicek ASRF map turns into closed-form responses of the
default-probability distribution: its mean, its quantiles (PD-at-Risk), and
its expected shortfall. The closed forms hold for any probit map
$f(Z)=\Phi(a+bZ)$ of a Gaussian latent factor.}}}
\label{fig:twostep_approach}
\end{figure}

\subsection{Macro-credit architecture}
\label{sec:macro_credit_architecture}

\subsubsection{Macro-financial block}

The macro-financial block is a Gaussian VAR($P$) for $Y_t$:
\begin{equation}
    Y_t = c + \sum_{i=1}^{P} A_i Y_{t-i} + u_t,\qquad
    u_t \stackrel{\text{i.i.d.}}{\sim} \mathcal N(0,\Sigma_u),
    \label{eq:var}
\end{equation}
where $c\in\mathbb R^n$, $A_i\in\mathbb R^{n\times n}$, and
$\Sigma_u\succ 0$. The process is assumed covariance-stationary, with
moving-average representation
\begin{equation}
    Y_t=\mu_Y+\sum_{h=0}^{\infty}\Psi_h u_{t-h},
    \qquad \Psi_0=I_n.
\end{equation}
All derivations in this section are stated conditional on the parameters of
the two blocks, $(c,A_1,\ldots,A_P,\Sigma_u)$ and
$(\beta_0,\beta,\sigma_\eta^2)$ below; estimation uncertainty is introduced in
Section~\ref{sec:posterior_implementation}.

\subsubsection{Credit-risk block}

The second block maps the macro-financial state into the probability of a
threshold-type event. Let $Z_t$ be a scalar latent index that is conditionally
Gaussian given the macro-financial information set, oriented so that higher
values correspond to a lower event probability. The object of interest is a
\emph{probit} transformation of this index,
\begin{equation}
    f(Z_t)=\Phi(a+bZ_t),
    \qquad a\in\mathbb R,\quad b<0,
    \label{eq:probit_map_general}
\end{equation}
where $\Phi(\cdot)$ is the standard normal cumulative distribution function and
the pair $(a,b)$ is fixed by the application.\footnote{The probit-Gaussian pairing is what keeps the responses in closed form: for Gaussian $Z$, $\mathbb E[\Phi(a+bZ)]$ is again a probit (Lemma~\siref{lem:closed}), whereas a logistic link admits no such elementary expression.}

The index is linked to the macro-financial variables through a Gaussian
satellite equation. Let $Y_t^{(s)}$ collect the subset of current and lagged VAR
variables that enter the satellite. We specify
\begin{equation}
    Z_t = \beta_0+\beta^\top Y_t^{(s)}+\eta_t,
    \qquad
    \eta_t\sim\mathcal N(0,\sigma_\eta^2),
    \qquad
    \mathbb E[\eta_t\mid Y_t^{(s)}]=0,
    \label{eq:Z}
\end{equation}
and assume that $\{\eta_t\}$ is serially independent and independent of the
VAR innovation sequence $\{u_s\}_{s\in\mathbb Z}$.\footnote{We discuss this
assumption in Appendix~\siref{app:orthogonality}.} The independence of $\eta_t$ and
$u_t$ is an exclusion restriction: it rules out any effect of the innovation of
interest on the latent index beyond the macro-financial channel $Y_t^{(s)}$.

How the index is measured determines how $(\beta,a,b)$ are estimated. When the
event is observed only as a binary outcome, the parameters are estimated jointly
by probit maximum likelihood. When instead a continuous outcome $d_t$ with
$\mathbb E[d_t\mid Z_t]=f(Z_t)$ is observed (as with aggregate portfolio
default rates in the credit application), the index can be reconstructed by
inverting the map,
\begin{equation}
    Z_t=\frac{\Phi^{-1}(d_t)-a}{b},
    \label{eq:reconstruct_general}
\end{equation}
and the satellite \eqref{eq:Z} estimated by least squares. We follow this second
route, with $(a,b)$ given by the Merton--Vasicek calibration of
Section~\ref{sec:mv_instance}.

\subsection{Transmission to default risk}
\label{sec:irf_transmission}

We now derive the three steps of the lower panel of
Figure~\ref{fig:twostep_approach}: the macro-financial GIRF, its transmission
to the systematic factor, and the closed-form default-probability response
\eqref{eq:GIRF_PD_Def}. Let $g\in\{1,\ldots,n\}$ denote the index of the
variable of interest, $e_g$ the $g$-th canonical unit vector, and
$u_{gt}=e_g^\top u_t$. The shock is a reduced-form innovation to equation $g$,
conditional on the information set $\Omega_{t-1}$ generated by the VAR history
up to $t-1$, with realization $\omega_{t-1}$. Throughout, we use the
convention $\psi_Y^g(h',\cdot,\cdot)=0$ for $h'<0$.

\subsubsection{Macro-financial generalized impulse responses}

\begin{defn}[\textbf{Generalized impulse response}]
\label{def:GIRF}
For a horizon $h\geq0$, a history $\omega_{t-1}$, and a scalar innovation
$\delta_g$ to equation $g$, the generalized impulse response of $Y_t$ is
\begin{equation}
    \psi_Y^g(h,\delta_g,\omega_{t-1})
    =
    \mathbb E\!\left[
        Y_{t+h}\mid u_{gt}=\delta_g,\Omega_{t-1}=\omega_{t-1}
    \right]
    -
    \mathbb E\!\left[
        Y_{t+h}\mid \Omega_{t-1}=\omega_{t-1}
    \right].
    \label{eq:GIRF-def-interest}
\end{equation}
\end{defn}

In the linear Gaussian VAR, this response does not depend on the history
$\omega_{t-1}$: Gaussian conditioning gives
$\mathbb E[u_t\mid u_{gt}=\delta_g]=\Sigma_u e_g\,\delta_g/\sigma_{gg}$ with
$\sigma_{gg}=e_g^\top\Sigma_u e_g$, so that
\begin{equation}
    \psi_Y^g(h,\delta_g)
    =
    \Psi_h
    \frac{\Sigma_u e_g}{\sigma_{gg}}\delta_g,
    \label{eq:GIRF-Y-general}
\end{equation}
and, for a one-standard-deviation innovation
$\delta_g=\sqrt{\sigma_{gg}}$,
\begin{equation}
    \psi_Y^g(h)
    =
    \sigma_{gg}^{-1/2}\Psi_h\Sigma_u e_g.
    \label{eq:GIRF-Y-closed}
\end{equation}
This is the generalized impulse response of \citet{koop1996impulse}. In the
linear Gaussian VAR, it coincides numerically with the orthogonalized impulse
response obtained from a Cholesky decomposition in which variable $g$ is ordered
first \citep[Proposition~3.1]{pesaranShin1998}. The generalized impulse response
is invariant to the ordering of the remaining variables. Interpreting the
reduced-form innovation $u_{gt}$ as the structural shock of interest does,
however, maintain that variable $g$ is contemporaneously exogenous, in the sense
that it does not respond within the period to innovations in the other variables.
We take this contemporaneous exogeneity as a maintained identifying restriction,
plausible for a slow-moving, event-based index\footnote{Such as the Geopolitical Risk (GPR) index of \citet{caldara2022geopolitical} used in the empirical application (Section~\ref{sec:empirical_analysis}).} at the quarterly frequency,
and read the responses accordingly.

\subsubsection{Transmission to the systematic factor}

The satellite equation may include selected variables and lags from the VAR.
We represent this selection by a linear lag operator. Let $L$ denote the lag
operator, $L^\ell x_t=x_{t-\ell}$, and for a maximum lag
$L_{\max}\in\mathbb N_0$ define
\begin{equation}
    S^{(s)}(L)
    =
    \sum_{\ell=0}^{L_{\max}} S_\ell^{(s)}L^\ell,
    \qquad
    S_\ell^{(s)}\in\{0,1\}^{m\times n},
    \qquad
    Y_t^{(s)}=S^{(s)}(L)Y_t\in\mathbb R^m,
    \label{eq:selection_operator}
\end{equation}
where each row of $\{S_\ell^{(s)}\}_{\ell=0}^{L_{\max}}$ selects exactly one
variable at one lag. In the empirical implementation, the variable of interest
does not enter the satellite equation: the $g$-th column of each
$S_\ell^{(s)}$ is zero.

By linearity, generalized impulse responses propagate through the selection
operator and the satellite equation. The intercept is common to the shocked and
baseline forecasts, and $\eta_t$ is independent of the VAR innovations
$\{u_s\}$, so its generalized response to $u_{gt}$ is zero; both therefore drop
out of the difference between shocked and baseline forecasts, and the
systematic-factor response is
\begin{equation}
    \psi_Z^g(h,\delta_g,\omega_{t-1})
    =
    \beta^\top S^{(s)}(L)
    \psi_Y^g(h,\delta_g,\omega_{t-1})
    =
    \sum_{\ell=0}^{L_{\max}}
        \beta^\top S_\ell^{(s)}
        \psi_Y^g(h-\ell,\delta_g,\omega_{t-1}).
    \label{eq:GIRF-Z}
\end{equation}

\subsubsection{Conditional moments of the systematic factor}

The closed-form response \eqref{eq:GIRF_PD_Def} requires the conditional
distribution of $Z_{t+h}$ under the baseline forecast and under the scalar
innovation. Both are Gaussian, hence characterized by
\begin{equation}
    \mu_{t+h}
    =
    \mathbb E[Z_{t+h}\mid\Omega_{t-1}=\omega_{t-1}],
    \qquad
    \mu_{t+h}^{(\delta_g)}
    =
    \mathbb E[Z_{t+h}\mid u_{gt}=\delta_g,\Omega_{t-1}=\omega_{t-1}],
    \label{eq:mu-definitions}
\end{equation}
\begin{equation}
    s_{t+h}^2
    =
    \Var[Z_{t+h}\mid\Omega_{t-1}=\omega_{t-1}],
    \qquad
    \big(s_{t+h}^{(\delta_g)}\big)^2
    =
    \Var[Z_{t+h}\mid u_{gt}=\delta_g,\Omega_{t-1}=\omega_{t-1}].
    \label{eq:var-definitions}
\end{equation}
Conditioning on $u_{gt}=\delta_g$ fixes one component of the contemporaneous
innovation vector; the remaining components stay random with conditional
covariance matrix
\begin{equation}
    \Sigma_{u\mid g}
    =
    \Var(u_t\mid u_{gt})
    =
    \Sigma_u
    -
    \frac{\Sigma_u e_g e_g^\top \Sigma_u}{\sigma_{gg}}.
    \label{eq:sigma-u-cond-g}
\end{equation}
For $h\geq0$ and $0\leq q\leq h$, define
\begin{equation}
    G_{h,q}
    =
    \sum_{\ell=0}^{L_{\max}}
        S_\ell^{(s)}\Psi_{h-\ell-q}
    \in\mathbb R^{m\times n},
    \qquad
    B(h,q)
    =
    \beta^\top G_{h,q}
    \in\mathbb R^{1\times n},
    \label{eq:G-hq}
\end{equation}
with the convention $\Psi_r=0$ for $r<0$.

\begin{prop}[\textbf{Conditional moments of the systematic factor}]
\label{prop:Moments-Z}
Under \eqref{eq:var}--\eqref{eq:Z}, the baseline conditional moments of
$Z_{t+h}$ are
\begin{equation}
    \mu_{t+h}
    =
    \beta_0
    +
    \sum_{\ell=0}^{L_{\max}}
        \beta^\top S_\ell^{(s)}
        \mu_{t+h-\ell}^{Y},
    \qquad
    \mu_{t+h-\ell}^{Y}
    =
    \mathbb E[Y_{t+h-\ell}\mid\Omega_{t-1}=\omega_{t-1}],
    \label{eq:mu-baseline}
\end{equation}
\begin{equation}
    s_{t+h}^2
    =
    \sum_{q=0}^{h}
        B(h,q)\Sigma_u B(h,q)^\top
    +
    \sigma_\eta^2.
    \label{eq:s2-baseline}
\end{equation}
Under the scalar innovation $u_{gt}=\delta_g$, the shocked conditional moments
are
\begin{equation}
    \mu_{t+h}^{(\delta_g)}
    =
    \mu_{t+h}
    +
    \psi_Z^g(h,\delta_g,\omega_{t-1}),
    \label{eq:mu-shocked}
\end{equation}
\begin{equation}
    \big(s_{t+h}^{(\delta_g)}\big)^2
    =
    B(h,0)\Sigma_{u\mid g}B(h,0)^\top
    +
    \sum_{q=1}^{h}
        B(h,q)\Sigma_u B(h,q)^\top
    +
    \sigma_\eta^2,
    \label{eq:s2-shocked}
\end{equation}
with $\psi_Z^g$ given in \eqref{eq:GIRF-Z}.
\end{prop}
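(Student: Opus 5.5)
The plan is to write $Z_{t+h}$ as a linear function of the future VAR innovations $u_t,\dots,u_{t+h}$, the history, and the idiosyncratic term $\eta_{t+h}$. Its conditional moments then follow from linearity of expectation and from independence across innovation dates and between $\eta$ and $\{u_s\}$.

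\emph{Decomposition.} By the moving-average representation, for any $r\geq 0$,
\[
Y_{t+r}=\mathbb E[Y_{t+r}\mid\Omega_{t-1}]+\sum_{j=0}^{r}\Psi_j u_{t+r-j}.
\]
Indeed, $\mathbb E[Y_{t+r}\mid\Omega_{t-1}]=\mu_Y+\sum_{j>r}\Psi_j u_{t+r-j}$ is $\Omega_{t-1}$-measurable, and the innovations $u_t,\dots,u_{t+r}$ are independent of $\Omega_{t-1}$ with mean zero. For $r<0$, $Y_{t+r}$ is itself $\Omega_{t-1}$-measurable, which is consistent with the convention $\Psi_r=0$ for $r<0$.

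Substitute this into $Z_{t+h}=\beta_0+\sum_\ell\beta^\top S_\ell^{(s)}Y_{t+h-\ell}+\eta_{t+h}$. Re-indexing the innovation $u_{t+q}$ with $q=h-\ell-j$ gives
\[
Z_{t+h}=\mu_{t+h}+\sum_{q=0}^{h}B(h,q)\,u_{t+q}+\eta_{t+h}.
\]
Here $\mu_{t+h}$ is exactly \eqref{eq:mu-baseline}, and the coefficient on $u_{t+q}$ collects $\sum_\ell\beta^\top S_\ell^{(s)}\Psi_{h-\ell-q}=\beta^\top G_{h,q}=B(h,q)$ by \eqref{eq:G-hq}. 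The only bookkeeping to check is that terms with $h-\ell-q<0$ vanish by the $\Psi_r=0$ convention, so the sum over $q$ can run over the full range $0,\dots,h$.

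\emph{Baseline moments.} Condition on $\Omega_{t-1}$. We assume $\eta_{t+h}$ is independent of the whole VAR sequence, hence of $\Omega_{t-1}$, and is serially independent; this also rules out any dependence of $\eta_{t+h}$ on the $\eta$ terms used to build the history. The vectors $u_t,\dots,u_{t+h}$ are i.i.d.\ $\mathcal N(0,\Sigma_u)$ and independent of $\Omega_{t-1}$. The conditional mean is therefore $\mu_{t+h}$. The conditional variance is the sum of the independent contributions $B(h,q)\Sigma_uB(h,q)^\top$ plus $\sigma_\eta^2$, which is \eqref{eq:s2-baseline}. Gaussianity is preserved because $Z_{t+h}$ is a linear combination of jointly Gaussian variables.

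\emph{Shocked moments.} Additionally conditioning on $u_{gt}=\delta_g$ affects only the $q=0$ term, because $u_{t+q}$ for $q\geq1$ and $\eta_{t+h}$ are independent of $u_t$. Gaussian conditioning gives $\mathbb E[u_t\mid u_{gt}=\delta_g]=\Sigma_ue_g\delta_g/\sigma_{gg}$ and $\Var(u_t\mid u_{gt})=\Sigma_{u\mid g}$ from \eqref{eq:sigma-u-cond-g}.

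For the mean, the shocked mean exceeds $\mu_{t+h}$ by $B(h,0)\Sigma_ue_g\delta_g/\sigma_{gg}=\sum_\ell\beta^\top S_\ell^{(s)}\Psi_{h-\ell}\Sigma_ue_g\delta_g/\sigma_{gg}$. By \eqref{eq:GIRF-Y-general} this equals $\sum_\ell\beta^\top S_\ell^{(s)}\psi_Y^g(h-\ell,\delta_g)$, i.e.\ $\psi_Z^g$ in \eqref{eq:GIRF-Z}, with the convention $\psi_Y^g(h',\cdot)=0$ for $h'<0$ matching $\Psi_{h'}=0$. This proves \eqref{eq:mu-shocked}.

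For the variance, replacing $\Sigma_u$ by $\Sigma_{u\mid g}$ in the $q=0$ term yields \eqref{eq:s2-shocked}.

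I expect the main obstacle to be the index bookkeeping in the decomposition step: matching the double sum over $(\ell,j)$ to the single sum over $q$, and applying the zero conventions consistently. A secondary point requiring care is justifying that $\Omega_{t-1}$ and $\eta_{t+h}$ are independent, which follows from the stated independence assumptions.
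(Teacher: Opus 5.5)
Your proposal is correct and follows essentially the same route as the paper. Both arguments build the centered representation $Z_{t+h}-\mu_{t+h}=\sum_{q=0}^{h}B(h,q)u_{t+q}+\eta_{t+h}$, use independence across dates and between $\eta$ and $\{u_s\}$ for the baseline variance, and replace $\Sigma_u$ by $\Sigma_{u\mid g}$ in the $q=0$ term for the shocked variance. The only cosmetic difference is that the paper obtains the mean shift by differencing the two conditional forecasts term by term via Definition~\ref{def:GIRF}, whereas you read it off $B(h,0)\,\mathbb E[u_t\mid u_{gt}=\delta_g]$ and then match it to \eqref{eq:GIRF-Y-general}.
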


The derivation is in Appendix~\siref{app:proof_Moments}. The scalar innovation
shifts the conditional mean through the systematic-factor GIRF and changes the
contemporaneous ($q=0$) contribution to the conditional variance through
$\Sigma_{u\mid g}$, while future VAR innovations remain governed by
$\Sigma_u$.

\subsubsection{Closed-form default-probability response}

We can now state the main analytical result: the default-probability response
\eqref{eq:GIRF_PD_Def} is an explicit function of the four conditional moments
of Proposition~\ref{prop:Moments-Z}.

\begin{prop}[\textbf{Closed-form GIRF of portfolio default probabilities}]
\label{prop:GIRF-PD}
Under \eqref{eq:var}--\eqref{eq:Z}, the generalized impulse response of the
event probability $f(Z_{t+h})=\Phi(a+bZ_{t+h})$ to a scalar innovation
$\delta_g$ is
\begin{equation}
    \psi_{f(Z)}^g(h,\delta_g,\omega_{t-1})
    =
    \Phi\!\left(
        \frac{a+b\,\mu_{t+h}^{(\delta_g)}}
        {\sqrt{1+b^2\big(s_{t+h}^{(\delta_g)}\big)^2}}
    \right)
    -
    \Phi\!\left(
        \frac{a+b\,\mu_{t+h}}
        {\sqrt{1+b^2 s_{t+h}^2}}
    \right).
    \label{eq:GIRF_PD}
\end{equation}
\end{prop}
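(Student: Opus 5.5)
The plan is to reduce the claim to a single Gaussian identity and then apply it twice, once under the shocked conditioning and once under the baseline conditioning. The first step establishes that, conditional on $\Omega_{t-1}=\omega_{t-1}$, the factor $Z_{t+h}$ is Gaussian, and that it remains Gaussian when we additionally condition on $u_{gt}=\delta_g$. This follows from the moving-average representation together with the selection operator \eqref{eq:selection_operator} and the satellite \eqref{eq:Z}. Write $Z_{t+h}$ as its conditional mean plus the term $\sum_{q=0}^{h}B(h,q)u_{t+q}$ plus $\eta_{t+h}$. The innovations $u_{t},\dots,u_{t+h}$ are jointly Gaussian and independent of $\Omega_{t-1}$, and $\eta_{t+h}$ is Gaussian and independent of $\{u_s\}$. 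Conditioning a Gaussian vector on one of its linear coordinates leaves a Gaussian vector with mean shifted by $\Sigma_u e_g\delta_g/\sigma_{gg}$ and covariance $\Sigma_{u\mid g}$. Hence both conditional laws are $\mathcal N(\mu_{t+h},s_{t+h}^2)$ and $\mathcal N(\mu^{(\delta_g)}_{t+h},(s^{(\delta_g)}_{t+h})^2)$, with the moments given by Proposition~\ref{prop:Moments-Z}, which we take as established.

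The second step is the key lemma (Lemma~\siref{lem:closed}). For $Z\sim\mathcal N(\mu,s^2)$ and constants $a,b$,
\[
\mathbb E[\Phi(a+bZ)]=\Phi\!\left(\frac{a+b\mu}{\sqrt{1+b^2s^2}}\right).
\]
I would prove it probabilistically rather than by integrating. Let $W\sim\mathcal N(0,1)$ be independent of $Z$. Then $\Phi(a+bZ)=\Pr(W\le a+bZ\mid Z)$, so by the tower property $\mathbb E[\Phi(a+bZ)]=\Pr(W-bZ\le a)$. The variable $W-bZ$ is Gaussian with mean $-b\mu$ and variance $1+b^2s^2$, and standardizing gives the displayed formula. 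The identity holds for any real $b$, so the sign restriction $b<0$ plays no role here. It also covers the degenerate case $s=0$.

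The third step combines the two. Apply the lemma to each conditional expectation in the definition \eqref{eq:GIRF_PD_Def}, plugging in the shocked moments and the baseline moments respectively. Taking the difference yields \eqref{eq:GIRF_PD} directly.

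The only delicate point is the first step: justifying that conditioning on the event $u_{gt}=\delta_g$, which has probability zero, is well defined and preserves Gaussianity of $Z_{t+h}$. I would handle this with the standard regular conditional distribution of a jointly Gaussian vector $(Z_{t+h},u_{gt})$ given $\Omega_{t-1}$. On this view the conditioning is a Gaussian regression, and $\eta_{t+h}$ is unaffected because it is independent of $u_{gt}$. Once Gaussianity is secured, the remaining steps are routine.
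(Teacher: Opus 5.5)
Your proposal is correct and follows the paper's proof essentially step for step. The paper also obtains Gaussianity of both conditional laws from the affine representation of $Z_{t+h}$ in the innovations. It proves $\mathbb E[\Phi(a+bW)]=\Phi\big((a+b\mu)/\sqrt{1+b^2\sigma^2}\big)$ with the same auxiliary independent standard normal argument, and then applies this to the shocked and baseline laws and subtracts; your remark on handling the null-probability conditioning event through the regular conditional distribution of the jointly Gaussian vector adds a point of rigor that the paper leaves implicit.
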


Appendix~\siref{app:proof_Proposition_GIRF_PD} establishes
Proposition~\ref{prop:GIRF-PD}.

\begin{remark}
The closed form in \eqref{eq:GIRF_PD} is related to the
recession-probability forecasts of \citet{FornariLemke2010}, which also
integrate over the predictive distribution of a latent Gaussian state. The
object considered here is a generalized impulse response, it compares the
default probability conditional on a realized innovation with its baseline
conditional value. This comparison may change both the conditional mean and the
conditional variance of the systematic factor, yielding a location channel and,
when the variances differ, a variance channel.
\end{remark}

Substituting \eqref{eq:mu-shocked} into \eqref{eq:GIRF_PD} expresses the response directly
in terms of the systematic-factor GIRF, making explicit that the innovation
operates through two channels: it shifts the location of the conditional
distribution by $\psi_Z^g(h)$, and it tightens its dispersion at impact
through $\Sigma_{u\mid g}$. Evaluating the probit map $\pi$ only along the
mean response, the \emph{plug-in} convention of supervisory practice, amounts to
setting the conditional variances in \eqref{eq:GIRF_PD} to zero, which leaves the
response $f(\mu_{t+h}^{(\delta_g)})-f(\mu_{t+h})$. This approximation ignores
both the dispersion of the systematic factor and the way this dispersion changes
after the shock.\footnote{We quantify this bias in
Section~\ref{sec:perfect_foresight} of the empirical application.}

The same conditional moments deliver closed-form quantiles of the portfolio
default-probability distribution. We refer to these quantiles as
PD-at-Risk (PD-aR), in the spirit of the at-risk terminology of
\citet{adrian2019vulnerable}. Unlike quantile impulse responses in QVAR
models, where macro-financial propagation may vary across quantiles
\citep{ChavleishviliManganelli2024}, quantiles here arise from the conditional
distribution of the latent systematic factor and its monotone transformation
into default probabilities.

\begin{cor}[\textbf{Quantile response of portfolio default probabilities}]
\label{cor:pd_ar}
Under \eqref{eq:var}--\eqref{eq:Z}, for a confidence level $\alpha\in(0,1)$,
the $\alpha$-level PD-at-Risk (PD-aR) at horizon $h$ is
\begin{equation}
\mathrm{PD}^{\mathrm{aR}}_\alpha(h)
=
\Phi\!\left(
a+b\big(\mu_{t+h}-\Phi^{-1}(\alpha)\,s_{t+h}\big)
\right),
\label{eq:pd_ar_level}
\end{equation}
and its generalized impulse response to a scalar innovation $\delta_g$ is
\begin{equation}
\psi_{\mathrm{aR},\alpha}^g(h,\delta_g,\omega_{t-1})
=
\Phi\!\left(
a+b\big(\mu_{t+h}^{(\delta_g)}-\Phi^{-1}(\alpha)\,s_{t+h}^{(\delta_g)}\big)
\right)
-
\mathrm{PD}^{\mathrm{aR}}_\alpha(h).
\label{eq:pd_ar_girf}
\end{equation}
\end{cor}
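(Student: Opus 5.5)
The plan is to treat $f(Z_{t+h})=\Phi(a+bZ_{t+h})$ as a strictly monotone transformation of a conditionally Gaussian scalar, and to compute its quantile through the quantile of $Z_{t+h}$. The first step is to recall from Proposition~\ref{prop:Moments-Z} that, conditional on $\Omega_{t-1}=\omega_{t-1}$, $Z_{t+h}\sim\mathcal N(\mu_{t+h},s_{t+h}^2)$. This holds because $Z_{t+h}$ is an affine function of the future VAR innovations $u_t,\dots,u_{t+h}$ (through $G_{h,q}$) and of $\eta_{t+h}$, all jointly Gaussian and independent of $\Omega_{t-1}$. Under the additional conditioning $u_{gt}=\delta_g$, the same representation applies with $u_t$ replaced by its conditional Gaussian law (mean $\Sigma_u e_g\delta_g/\sigma_{gg}$, covariance $\Sigma_{u\mid g}$). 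Hence $Z_{t+h}\sim\mathcal N(\mu^{(\delta_g)}_{t+h},(s^{(\delta_g)}_{t+h})^2)$. I take both Gaussian laws as given from that proposition, noting only that the remaining components of $u_t$ stay Gaussian after conditioning on one linear combination.

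The second step is the quantile calculation. Since $b<0$, the map $z\mapsto\Phi(a+bz)$ is continuous and strictly decreasing. For any continuous, strictly decreasing $f$ and continuous $Z$, the event $\{f(Z)\le f(z)\}$ equals $\{Z\ge z\}$. Therefore the $\alpha$-quantile of $f(Z)$ is $f$ evaluated at the $(1-\alpha)$-quantile of $Z$. To verify this, set $q=F_Z^{-1}(1-\alpha)$; then $\Pr(f(Z)\le f(q))=\Pr(Z\ge q)=\alpha$, and strict monotonicity together with continuity of the Gaussian cdf gives uniqueness. For the Gaussian law, $F_Z^{-1}(1-\alpha)=\mu_{t+h}+s_{t+h}\Phi^{-1}(1-\alpha)=\mu_{t+h}-\Phi^{-1}(\alpha)s_{t+h}$, using the symmetry $\Phi^{-1}(1-\alpha)=-\Phi^{-1}(\alpha)$. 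Substituting into $f$ yields \eqref{eq:pd_ar_level}.

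The third step repeats the argument under the shocked conditional law, which gives the shocked PD-aR $\Phi\bigl(a+b(\mu^{(\delta_g)}_{t+h}-\Phi^{-1}(\alpha)s^{(\delta_g)}_{t+h})\bigr)$. The response \eqref{eq:pd_ar_girf} is then, by definition, the difference between the shocked and baseline conditional quantiles, in analogy with \eqref{eq:GIRF_PD_Def} with the conditional expectation replaced by the conditional $\alpha$-quantile.

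The main point needing care is the direction of the tail. Because $b<0$, the upper tail of default probabilities corresponds to the lower tail of the factor, so the sign flip $\Phi^{-1}(1-\alpha)=-\Phi^{-1}(\alpha)$ must be tracked carefully. A secondary point is that the GIRF of a quantile should be stated explicitly as a difference of conditional quantiles rather than derived from \eqref{eq:GIRF_PD_Def}, since quantiles do not commute with expectations. Degenerate variances cause no problem, since $s_{t+h}^2\ge\sigma_\eta^2>0$.
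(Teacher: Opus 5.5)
Your proposal is correct and follows essentially the same route as the paper. Both take the conditional Gaussian laws from Proposition~\ref{prop:Moments-Z}, use the fact that the strictly decreasing map $f$ gives $Q_\alpha(f(Z_{t+h}))=f(Q_{1-\alpha}(Z_{t+h}))$ with $Q_{1-\alpha}(Z_{t+h})=\mu_{t+h}-\Phi^{-1}(\alpha)s_{t+h}$, and then repeat the argument under the shocked law and subtract; your direct check of the quantile identity just spells out the equivariance property the paper cites.
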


Appendix~\siref{app:proof_pd_ar} gives the derivation. The mean response
\eqref{eq:GIRF_PD} integrates the conditional variance through the denominator,
whereas the quantile response shifts the conditional factor quantile in the
numerator; it is an exact monotone transform of the corresponding quantile of
$Z_{t+h}$.\footnote{Because $\pi$ is monotone, the PD-aR is a quantile of the
\emph{model-implied} conditional distribution of default probabilities. It is
obtained by applying the probit map $\pi$ to a Gaussian quantile of
$Z_{t+h}$, not to an empirical default-rate quantile. Its tail behaviour
therefore reflects the curvature of the map and the
Gaussianity of the latent factor.}

The PD-aR is a value-at-risk and is not subadditive. The same conditional
moments deliver in closed form the corresponding expected shortfall (ES), a coherent
risk measure \citep{acerbi2002coherence,wang2021axiomatic} and the tail statistic targeted by the
regulatory shift to expected shortfall \citep{BCBS2019}.

\begin{cor}[\textbf{Expected-shortfall response of portfolio default
probabilities}]
\label{cor:pd_es}
Under \eqref{eq:var}--\eqref{eq:Z}, for a confidence level $\alpha\in(0,1)$, the
$\alpha$-level expected shortfall of the portfolio default probability,
$\mathrm{ES}_\alpha(h)=\mathbb E[f(Z_{t+h})\mid
f(Z_{t+h})\ge\mathrm{PD}^{\mathrm{aR}}_\alpha(h)]$, is
\begin{equation}
\mathrm{ES}_\alpha(h)
=
\frac{1}{1-\alpha}\,
\Phi_2\!\left(m_{t+h},\,\Phi^{-1}(1-\alpha)\,;\,\rho^\star_{t+h}\right),
\label{eq:pd_es_level}
\end{equation}
where $\Phi_2(\cdot,\cdot;\rho^\star)$ is the standard bivariate normal CDF with
correlation $\rho^\star$ and
\begin{equation}
m_{t+h}
=
\frac{a+b\,\mu_{t+h}}
{\sqrt{1+b^2\,s_{t+h}^2}},
\qquad
\rho^\star_{t+h}
=
\frac{-b\,s_{t+h}}{\sqrt{1+b^2\,s_{t+h}^2}}.
\label{eq:pd_es_coeffs}
\end{equation}
Its generalized impulse response to a scalar innovation $\delta_g$ is
\begin{equation}
\psi^g_{\mathrm{ES},\alpha}(h,\delta_g,\omega_{t-1})
=
\frac{1}{1-\alpha}\left[
\Phi_2\!\left(m^{(\delta_g)}_{t+h},\Phi^{-1}(1-\alpha);\rho^{\star(\delta_g)}_{t+h}\right)
-
\Phi_2\!\left(m_{t+h},\Phi^{-1}(1-\alpha);\rho^\star_{t+h}\right)
\right],
\label{eq:pd_es_girf}
\end{equation}
with $m^{(\delta_g)}_{t+h}$ and $\rho^{\star(\delta_g)}_{t+h}$ defined as in
\eqref{eq:pd_es_coeffs} with $(\mu_{t+h},s_{t+h})$ replaced by
$(\mu^{(\delta_g)}_{t+h},s^{(\delta_g)}_{t+h})$.
\end{cor}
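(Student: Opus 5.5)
The plan is to reduce the expected shortfall to a Gaussian orthant probability, using the same latent-variable device that underlies Proposition~\ref{prop:GIRF-PD}. By Proposition~\ref{prop:Moments-Z}, conditional on $\Omega_{t-1}=\omega_{t-1}$, $Z_{t+h}\sim\mathcal N(\mu_{t+h},s_{t+h}^2)$, with $s_{t+h}^2\ge\sigma_\eta^2>0$. I write $Z_{t+h}=\mu_{t+h}+s_{t+h}W$ with $W\sim\mathcal N(0,1)$.

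\textbf{Step 1: translate the tail event into an event on $W$.} Since $b<0$, the map $f(z)=\Phi(a+bz)$ is strictly decreasing and continuous. Corollary~\ref{cor:pd_ar} gives $\mathrm{PD}^{\mathrm{aR}}_\alpha(h)=f\big(\mu_{t+h}-\Phi^{-1}(\alpha)s_{t+h}\big)$. Hence the event $\{f(Z_{t+h})\ge \mathrm{PD}^{\mathrm{aR}}_\alpha(h)\}$ coincides with $\{Z_{t+h}\le \mu_{t+h}-\Phi^{-1}(\alpha)s_{t+h}\}$. Using $-\Phi^{-1}(\alpha)=\Phi^{-1}(1-\alpha)=:c_\alpha$, this is the event $\{W\le c_\alpha\}$. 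Because $W$ has a continuous distribution, this event has probability exactly $1-\alpha$, so
\begin{equation*}
\mathrm{ES}_\alpha(h)=\frac{1}{1-\alpha}\,\mathbb E\!\left[\Phi(a+b\mu_{t+h}+bs_{t+h}W)\,\1\{W\le c_\alpha\}\right].
\end{equation*}

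\textbf{Step 2: represent the probit as a probability over an auxiliary normal.} This is the step I expect to carry the argument, as in Lemma~\siref{lem:closed}. Introduce $V\sim\mathcal N(0,1)$ independent of $W$. Then $\Phi(a+b\mu+bsW)=\Pr(V\le a+b\mu+bsW\mid W)$. By the tower property, the expectation above equals
\begin{equation*}
\Pr\big(V-bsW\le a+b\mu,\;W\le c_\alpha\big).
\end{equation*}
Set $X=(V-bsW)/\sqrt{1+b^2s^2}$. The pair $(X,W)$ is bivariate normal with standard marginals and correlation
\begin{equation*}
\operatorname{Corr}(X,W)=\frac{-bs}{\sqrt{1+b^2s^2}}=\rho^\star_{t+h}.
\end{equation*}
The first inequality becomes $X\le m_{t+h}$. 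The probability is therefore $\Phi_2(m_{t+h},c_\alpha;\rho^\star_{t+h})$, which gives \eqref{eq:pd_es_level}. As a consistency check, letting $\alpha\to0$ recovers the mean formula of Proposition~\ref{prop:GIRF-PD}.

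\textbf{Step 3: the generalized impulse response.} Under $u_{gt}=\delta_g$, Proposition~\ref{prop:Moments-Z} again gives a Gaussian law, now $\mathcal N\big(\mu^{(\delta_g)}_{t+h},(s^{(\delta_g)}_{t+h})^2\big)$, and the shocked variance is still at least $\sigma_\eta^2>0$. The shocked ES is defined relative to the shocked PD-aR threshold of \eqref{eq:pd_ar_girf}. Steps 1--2 therefore apply verbatim with the shocked moments. Subtracting the baseline value yields \eqref{eq:pd_es_girf}.

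The main obstacle is bookkeeping rather than conceptual. First, the orientation $b<0$ must be handled correctly, since it turns an upper tail of default probabilities into a lower tail of $Z$ and fixes the sign of $\rho^\star$. Second, the ES threshold must be treated consistently: in the shocked ES, the conditioning event uses the shocked quantile, not the baseline one.
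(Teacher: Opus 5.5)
Your proposal is correct and follows essentially the same route as the paper's proof: map the upper PD tail to the lower $Z$-tail $\{W\le\Phi^{-1}(1-\alpha)\}$, write the probit as a probability over an independent auxiliary $V\sim\mathcal N(0,1)$, read off a standard bivariate normal CDF with correlation $-bs/\sqrt{1+b^2s^2}$, and then repeat with the shocked moments and subtract. The only cosmetic difference is that you standardize $X=(V-bsW)/\sqrt{1+b^2s^2}$ directly. The paper instead works with the unstandardized $U=V-bsW$ and normalizes at the end.
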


Appendix~\siref{app:proof_pd_es} gives the derivation. The argument $m_{t+h}$ is
the probit of the mean response \eqref{eq:GIRF_PD} and $\rho^\star_{t+h}$ is a
dispersion loading vanishing as $s_{t+h}\to0$, so the expected shortfall reuses
the conditional moments already computed at no additional cost. It satisfies
$\mathrm{ES}_\alpha(h)\ge\mathrm{PD}^{\mathrm{aR}}_\alpha(h)$, reduces to
$f(\mu_{t+h})$ as $s_{t+h}\to0$, and converges to the mean response as
$\alpha\to0$. Together, Propositions~\ref{prop:Moments-Z} and~\ref{prop:GIRF-PD}
and Corollaries~\ref{cor:pd_ar} and~\ref{cor:pd_es} characterize the closed-form
dynamic response of the mean, the quantiles, and the expected shortfall of
portfolio default probabilities, without simulation.

\subsection{The Merton--Vasicek ASRF}
\label{sec:mv_instance}

In credit-risk practice, the leading probit specification is the Merton–Vasicek ASRF model
\citep{merton1974pricing,vasicek2002distribution,Gordy2003}, which underlies the Basel IRB capital framework. For an asymptotically granular, homogeneous portfolio with through-the-cycle default probability $p\in(0,1)$ and asset correlation $\rho\in(0,1)$, the point-in-time conditional default probability is given by the map \eqref{eq:probit_map_general},
\begin{equation}
    \pi(Z_t)
    =\Phi\!\left(\frac{\Phi^{-1}(p)-\sqrt{\rho}\,Z_t}{\sqrt{1-\rho}}\right)
    =f(Z_t),
    \qquad
    a=\frac{\Phi^{-1}(p)}{\sqrt{1-\rho}},\quad
    b=-\frac{\sqrt{\rho}}{\sqrt{1-\rho}}<0,
    \label{eq:pd_conditional_z}
\end{equation}
so that all responses of Section~\ref{sec:irf_transmission} apply verbatim under
this $(a,b)$. We calibrate from aggregate default rates $d_t\in(0,1)$,
$t=T_0,\ldots,T_Y$: the through-the-cycle PD is the sample mean
$\widehat p=T_d^{-1}\sum_{t}d_t$, $T_d=T_Y-T_0+1$, and, interpreting
$d_t=\pi(Z_t)$, the factor is reconstructed by inverting
\eqref{eq:pd_conditional_z}, the general reconstruction
\eqref{eq:reconstruct_general}:\footnote{Default rates equal to zero or one must
be adjusted before applying the inverse normal transformation.}
\begin{equation}
    Z_t(\rho,\widehat p)
    =\frac{\Phi^{-1}(\widehat p)-\sqrt{1-\rho}\,\Phi^{-1}(d_t)}{\sqrt{\rho}}.
    \label{eq:determines_z_hist}
\end{equation}
The asset correlation $\widehat\rho$ is chosen so that
$Z_t(\widehat\rho,\widehat p)$ has unit sample variance, which standardizes the
latent factor and separates its scale from the satellite sensitivity
\eqref{eq:Z}.

\begin{remark}[Invariance to the calibration choice]
\label{rem:calibration_invariance}
For a fixed observed default-rate series $\{d_t\}$, the calibration pair
$(\widehat p,\widehat\rho)$ only rescales the reconstructed latent factor. The
reported conditional PD responses are therefore invariant to this choice once the
satellite is re-estimated consistently; see
Appendix~\siref{app:calibration_invariance}.
\end{remark}

\subsection{Uncertainty propagation}
\label{sec:posterior_implementation}

The results above are stated conditional on the parameters of the two blocks.
In the empirical implementation we estimate both blocks by Bayesian methods
and evaluate the closed-form expressions over posterior draws, so that the
reported responses inherit the estimation uncertainty of the macro-financial
transmission and of the credit-risk bridge.

The VAR is estimated under a conjugate Normal--Inverse--Wishart prior on
$(c,A_1,\ldots,A_P,\Sigma_u)$ \citep{kadiyala1997numerical}, following
standard practice for macroeconomic VARs \citep{banbura2010large}. Conditional
on the satellite regressors $Y_t^{(s)}$, the satellite equation is estimated
by conjugate Bayesian linear regression under the non-informative prior
$p(\beta_0,\beta,\sigma_\eta^2)\propto\sigma_\eta^{-2}$, which yields a
Normal--Inverse--Gamma posterior centered at the least-squares estimates
\citep{koop2003bayesian}. The Merton parameters $\widehat p$ and $\widehat\rho$ are calibrated as in
Section~\ref{sec:mv_instance} (sample-mean default rate and unit-variance
normalization), and the reconstructed
factor $Z_t$ is treated as observed in the satellite estimation.\footnote{By Remark~\ref{rem:calibration_invariance}, these calibration choices
only rescale the latent factor and therefore do not add uncertainty to the
reported conditional PD responses; see Appendix~\siref{app:calibration_invariance}.
The posterior bands are conditional on the observed default-rate series
$\{d_t\}$ and do not propagate sampling uncertainty. A binomial observation
equation could be used when default counts and exposures are available; in
Section~\ref{sec:empirical_analysis}, only the aggregate delinquency-rate ratio
is observed.}

The two posteriors are independent by construction, since the blocks are
estimated separately. We form joint draws by pairing independent draws
\[
    \theta_Y^{(b)}
    =
    \big(c^{(b)},A_1^{(b)},\ldots,A_P^{(b)},\Sigma_u^{(b)}\big),
    \qquad
    \theta_Z^{(b)}
    =
    \big(\beta_0^{(b)},\beta^{(b)},(\sigma_\eta^2)^{(b)}\big),
    \qquad b=1,\ldots,B.
\]
For each macro-financial draw we compute the moving-average coefficients
$\{\Psi_h^{(b)}\}_{h\ge0}$ and the response $\psi_Y^{g,(b)}(h)$; each paired
satellite draw then propagates it to the systematic factor and to default
probabilities through \eqref{eq:GIRF-Z}, Proposition~\ref{prop:Moments-Z}, and
Proposition~\ref{prop:GIRF-PD}. This yields a posterior sample
$\{\psi_{\pi(Z)}^{g,(b)}(h)\}_{b=1}^{B}$, $h=0,\ldots,H$, which we summarize
by pointwise posterior medians and credible bands, the standard reporting
convention in stress-testing applications; joint inference about the entire
response path is possible along the lines of \citet{InoueKilian2022}. The
posterior dispersion of $\psi_{\pi(Z)}^{g}(h)$ aggregates two distinct
sources of uncertainty: the propagation of the shock through the
macro-financial system, carried by the VAR posterior, and the mapping from
macro-financial conditions to credit risk, carried by the satellite
posterior.

\subsection{A numerical benchmark against forward simulation}
\label{sec:simulation_benchmark}

We benchmark the closed-form responses of
Section~\ref{sec:irf_transmission} against forward simulation of the system at
the point estimate. Both are built from the same representation
$Z_{t+h}=\mu_{t+h}+\sum_{q=0}^{h}\beta^\top G_{h,q}\,u_{t+q}+\eta_{t+h}$: the
simulation draws the innovations $\{u_{t+q}\}$ and $\eta_{t+h}$ forward,
conditions the impact innovation on $u_{g,t}=\delta_g$, and maps the simulated
factor through the Merton--Vasicek model, using common random numbers across
the baseline and shocked paths. Parameters are held fixed, so the
comparison measures the computing time and the sampling precision required to
recover each response object by simulation;
Appendix~\siref{app:simulation_design} details the design.

The posterior exercises reported below evaluate each response at
every posterior draw and, for the state-dependent results, at every historical
state. The closed form requires $B$ evaluations, each a small number of
univariate normal (and, for the expected shortfall, one bivariate normal)
distribution function calls, independent of the confidence level. A
simulation-based band nests an inner Monte Carlo of $N$ draws within each of
the $B\approx10^{4}$ posterior draws, an $O(B\times N)$ computation. With
$N=10^{6}$, reproducing the tail bands takes about twelve hours in our
implementation, against a few seconds for the closed form.

Reducing $N$ lowers this cost at the expense of precision in the
tail. At $N=10^{6}$, the Monte Carlo standard error of the peak response is
$3\times10^{-5}$ percentage points for the mean, $9\times10^{-4}$ for the
$99\%$ quantile, and $3\times10^{-3}$ for the $99.9\%$ quantile, about $6\%$
of the corresponding response; the error declines at the $1/\sqrt{N}$ rate
(Table~\ref{tab:sim_convergence}). At this rate, matching the precision of the
closed form in the deep tail requires on the order of tens of millions of
draws per evaluation.

The two estimates agree: across confidence levels, the difference
between the simulated and closed-form peak responses is of the order of the
Monte Carlo standard error, at most $2\times10^{-4}$ percentage points
(Figure~\ref{fig:simulation_benchmark}). Since both estimate the same
population object, this agreement is a numerical check of
Propositions~\ref{prop:Moments-Z} and~\ref{prop:GIRF-PD} and of their
implementation.

\begin{figure}[!htbp]
\centering
\includegraphics[width=0.62\linewidth]{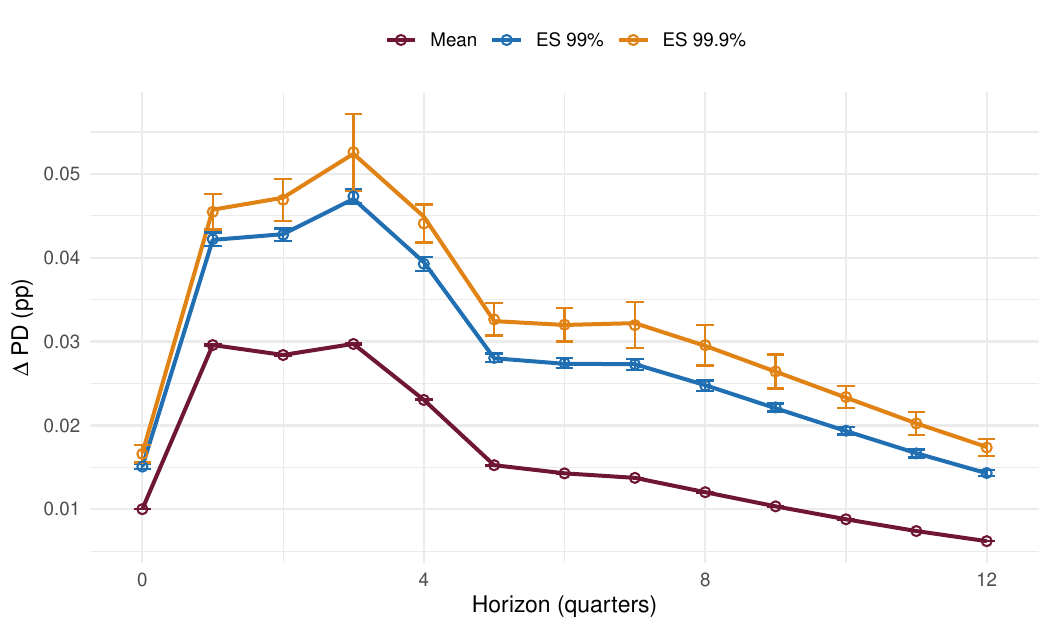}
\caption{\textbf{Closed-form responses versus forward simulation.}
\textit{Notes: Lines are the closed-form mean, $99\%$ and $99.9\%$
expected-shortfall responses to a one-standard-deviation GPR innovation at the
point estimate; points are the forward-simulation means over $100$ replications
($N=10^{6}$) with $\pm 2$ Monte Carlo standard-error bars. Both are built from
identical innovation draws with common random numbers across the baseline and
shocked paths.}}
\label{fig:simulation_benchmark}
\end{figure}

\begin{table}[!htbp]
\centering
  \scriptsize
  \tablestretch
  \compacttablecols
\caption{\textbf{Monte Carlo sampling error of the $99.9\%$ tail responses.}}
\label{tab:sim_convergence}
\begin{tabular}{lcccc}
\toprule
Draws $N$ & $10^3$ & $10^4$ & $10^5$ & $10^6$ \\
\midrule
RMSE, $99.9\%$ PD-aR (pp) & $0.051$ & $0.027$ & $0.0100$ & $0.0030$ \\
\quad in \% of the peak response & $101$ & $53$ & $20$ & $6$ \\
RMSE, $99.9\%$ ES (pp)    & $0.058$ & $0.020$ & $0.0061$ & $0.0017$ \\
\quad in \% of the peak response & $110$ & $37$ & $12$ & $3$ \\
\bottomrule
\end{tabular}%
\par\medskip
\begin{minipage}{0.84\textwidth}
\footnotesize
\textit{Notes: Root mean squared error of the forward-simulated peak $99.9\%$
response relative to the closed form, over $100$ replications per $N$, at the
point estimate. The error declines at approximately the $1/\sqrt{N}$ rate; the closed form is
exact. The closed-form peak responses are $0.050$ (PD-aR) and
$0.053$ (ES) percentage points.}
\end{minipage}
\end{table}

\section{Geopolitical stress testing of U.S. credit risk}
\label{sec:empirical_analysis}
This section applies the framework to U.S. credit risk.\footnote{Code and data:
\url{https://doi.org/10.5281/zenodo.21797928}.} The empirical exercise traces
the transmission of geopolitical-risk innovations from the macro-financial
system to the latent systematic credit factor and to the distribution of
portfolio default probabilities (PDs). After presenting the baseline data and
specification, we use the model for four exercises: a standard impulse-response
analysis, a historical-episode analysis that evaluates large GPR innovations at
their own initial conditions, a short-sample exercise for the credit-risk block,
and a comparison with the perfect-foresight convention used in supervisory
stress tests.\footnote{We report three robustness exercises in the appendix: a
direct test of the satellite exclusion restriction
(Appendix~\siref{app:orthogonality}, Table~\siref{tab:direct_channel_satellite});
the credit-risk response across alternative VAR information sets
(Appendix~\siref{app:alternative_information_sets},
Table~\siref{tab:dralacbn_information_set_robustness}); and BVAR stability checks
(Table~\siref{tab:var_stability_info_sets}). The qualitative conclusions are
unchanged throughout.}

\subsection{Data and empirical specification}
\label{sec:data_empirical_specification}

The application uses quarterly U.S. data from 1986:Q1 to 2024:Q4. The shock
variable is the Geopolitical Risk Index (GPR) of
\citet{caldara2022geopolitical}, aggregated to calendar-quarter frequency.
The index is well suited to the exercise because it provides a long,
event-based measure of geopolitical threats and acts, with peaks corresponding
to major geopolitical episodes (Figure~\ref{fig:GPR_index}).

The baseline credit-risk measure is the Delinquency Rate on All Loans and
Leases at all U.S. commercial banks (FRED: \texttt{DRALACBN}). This is a
delinquency rate (the share of loans past due), which is a broader measure than,
and not identical to, a regulatory (Article~178) default rate; we use it as a
long, publicly available proxy for realized portfolio-level default risk
because it spans several credit cycles and allows us to illustrate a key
motivation for the modular architecture: combining long macro-financial histories
with credit-risk histories that are shorter, coarser, or more limited in scope.
Its aggregate nature also implies that the estimated PD response is likely to be
attenuated, since the series pools heterogeneous borrowers, loan types, sectors,
regions, and bank exposures and therefore mainly reflects the systematic
component of default risk.

Over its available sample, 1986:Q1--2024:Q4, the delinquency rate averages $3.20\%$ and ranges from
$1.19\%$ to $7.50\%$, with the maximum reached during the 2009–2010 credit
downturn (Figure~\ref{fig:US_default}). The Merton–Vasicek inversion of
Section~\ref{sec:macro_credit_architecture} gives a through-the-cycle default
probability $\widehat p=3.20\%$ and an asset correlation
$\widehat\rho=0.051$. This correlation is low relative to the Basel IRB
benchmarks for corporate exposures (between $0.12$ and $0.24$), as expected for
a broadly diversified aggregate delinquency proxy. Since the gap between
quantile and mean responses widens with $\rho$, the tail amplification reported
below for this aggregate proxy is a lower bound on that of a less diversified
portfolio, or of a stressed regime in which correlation rises, rather than a
built-in margin of prudence.

\begin{figure}[!htbp]
\centering
\begin{subfigure}[t]{0.49\linewidth}
\centering
\includegraphics[width=\linewidth]{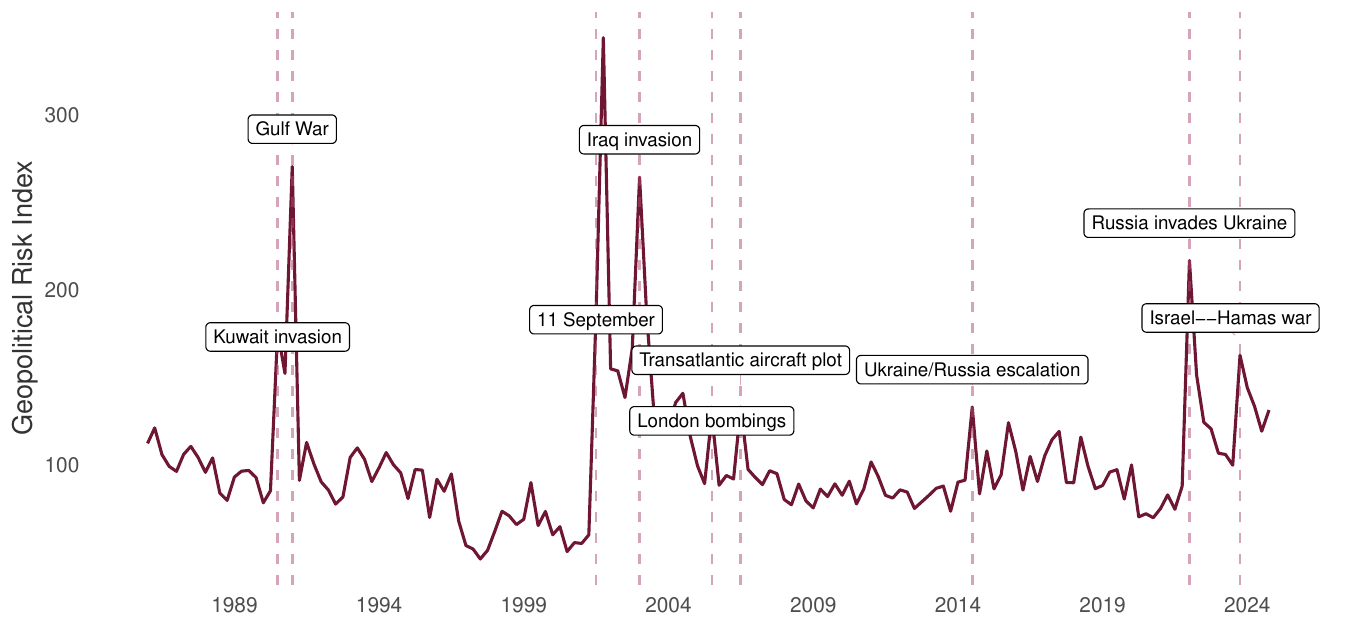}
\caption{Geopolitical Risk Index.}
\label{fig:GPR_index}
\end{subfigure}
\hfill
\begin{subfigure}[t]{0.49\linewidth}
\centering
\includegraphics[width=\linewidth]{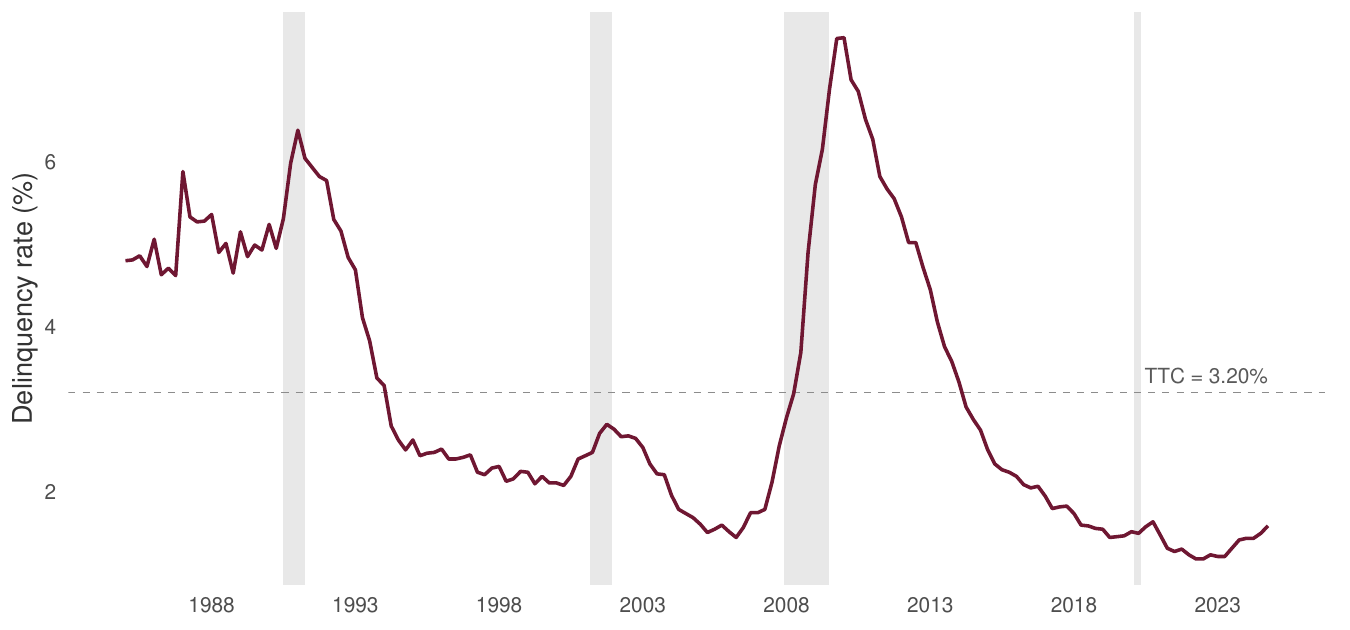}
\caption{U.S. aggregate delinquency rate.}
\label{fig:US_default}
\end{subfigure}
\caption{\textbf{Data: geopolitical risk and U.S. credit risk.}
\textit{Notes: Panel (a) plots quarterly averages of the Geopolitical Risk
Index of \citet{caldara2022geopolitical}, 1986:Q1--2024:Q4, whose peaks
correspond to major geopolitical episodes. Panel (b) plots the delinquency
rate on all loans and leases at all U.S. commercial banks (FRED
\texttt{DRALACBN}), quarterly{, 1986:Q1--2024:Q4,} with NBER recessions shaded; over the sample it
averages $3.20\%$ and ranges from $1.19\%$ to $7.50\%$, peaking in the
2009--2010 credit downturn. The delinquency rate is a broad proxy for realized
portfolio default risk and serves as the credit-risk series throughout.}}
\label{fig:data_series}
\end{figure}

The macro-financial VAR includes the GPR index and five real-side variables:
real private investment per capita, real GDP per capita, private employment
per capita, the real oil price, and year-on-year CPI inflation. The baseline
information set is intentionally parsimonious and focuses on the real-activity
channel through which geopolitical shocks can affect credit risk. Broader
information sets including monetary and uncertainty variables are considered in
Appendix~\siref{app:alternative_information_sets}, and variable definitions are
reported in Appendix~\siref{app:variables_detail}. The VAR is estimated under the
Normal–Inverse–Wishart prior described in
Section~\ref{sec:posterior_implementation}. For each posterior draw, we compute
the generalized impulse responses to a one-standard-deviation innovation in the
GPR equation and propagate them through the credit-risk block.

The satellite equation links the reconstructed factor $Z_t$ to current and
lagged non-GPR covariates from the baseline information set, with lags up to four
quarters. To avoid conditioning the transmission on a single specification, we
average across parsimonious satellite models using Schwarz weights
(Appendix~\siref{app:satellite_bma}). The GPR index is excluded from the satellite,
so geopolitical risk reaches credit only through the macro-financial block; this
exclusion restriction, orthogonality of the satellite error to the
macro-financial innovations, identifies the closed-form transmission.
Appendix~\siref{app:orthogonality} relaxes it with a control-function term and
derives the corresponding correction: the estimated direct channel is
statistically insignificant and leaves the peak PD response unchanged
(Tables~\siref{tab:direct_channel_satellite}–\siref{tab:direct_channel_pd}).

Table~\ref{tab:regZ} reports the model-averaged coefficients and posterior
inclusion probabilities for the most relevant satellite terms. The satellite is
estimated on $152$ quarterly observations. The systematic factor loads
positively on real investment, whose contemporaneous term has inclusion
probability one, and negatively on the real oil price, with weaker contributions
from inflation and GDP. This gives the transmission a simple economic
interpretation: geopolitical shocks that weaken real activity lower the
systematic credit factor and therefore increase portfolio PDs.

\begin{table}[!htbp]
  \centering
  \scriptsize
  \tablestretch
  \compacttablecols
  \caption{\textbf{Model-averaged satellite for the systematic factor $Z$.}}
  \label{tab:regZ}
    \begin{tabular}{lccc}
      \toprule
      & Post. mean & Post. SD & Incl. prob. \\
      \midrule
      Intercept
      & 12.827 & 0.636 & 1.00 \\
      $\log(\text{Investment pc})_{t}$
      & $0.039^{***}$ & 0.010 & 1.00 \\
      $\log(\text{Investment pc})_{t-2}$
      & $0.016^{***}$ & 0.014 & 0.64 \\
      $\log(\text{Oil real})_{t-4}$
      & $-0.004^{***}$ & 0.003 & 0.63 \\
      Inflation YoY$_{t-4}$
      & $0.059^{***}$ & 0.059 & 0.52 \\
      $\log(\text{Oil real})_{t}$
      & $-0.001^{***}$ & 0.002 & 0.39 \\
      $\log(\text{Oil real})_{t-1}$
      & $-0.002^{***}$ & 0.003 & 0.33 \\
      $\log(\text{GDP pc})_{t-4}$
      & $-0.015^{***}$ & 0.023 & 0.32 \\
      $\log(\text{GDP pc})_{t}$
      & $-0.017^{***}$ & 0.026 & 0.30 \\
      \midrule
      Observations
      & \multicolumn{3}{r}{152} \\
      $R^2$ 
      & \multicolumn{3}{r}{0.89} \\
      Out-of-sample $R^2$
      & \multicolumn{3}{r}{0.85} \\
      \bottomrule
    \end{tabular}%
  \par
  \medskip
  \begin{minipage}{0.72\textwidth}
  \footnotesize
  \textit{Notes: The table reports the posterior mean, posterior
  standard deviation, and inclusion probability of each satellite coefficient
  under Schwarz model averaging. The reported mean and standard deviation are
  moments of the model-averaging mixture and therefore include the zeros of the
  specifications that exclude the term. Stars $^{*}/^{**}/^{***}$ denote a
  posterior probability of the coefficient sign exceeding $0.95$, $0.975$, and
  $0.99$, respectively, \emph{conditional on inclusion} (evaluated on the draws
  in which the term is included); they are shown only for terms with inclusion
  probability at least $0.10$, below which the conditional sign probability is
  rebuilt from too few draws to be reliable. $R^2$ and out-of-sample $R^2$ are
  computed for the single best specification.
  }
\end{minipage}
\end{table}

\subsection{Dynamic responses}
\label{sec:empirical_results}

We propagate the one-standard-deviation innovation in the GPR equation through
the estimated system: first to the macro-financial variables, then to the
systematic factor, and finally to portfolio PDs. All responses are reported as
posterior medians with pointwise $68\%$ and $90\%$ credible bands.

\textit{Macro-financial responses.}
Figure~\ref{fig:irfs_var} reports the responses of the macro-financial
variables. The GPR innovation depresses real activity: investment, GDP per
capita, and private employment per capita fall and stay below baseline for
several quarters, while the real oil price and inflation respond more
transiently. The pattern is consistent with the contractionary effects of
geopolitical risk documented by \citet{caldara2022geopolitical}.

\begin{figure}[!htbp]
\centering
\includegraphics[width=0.85\linewidth]{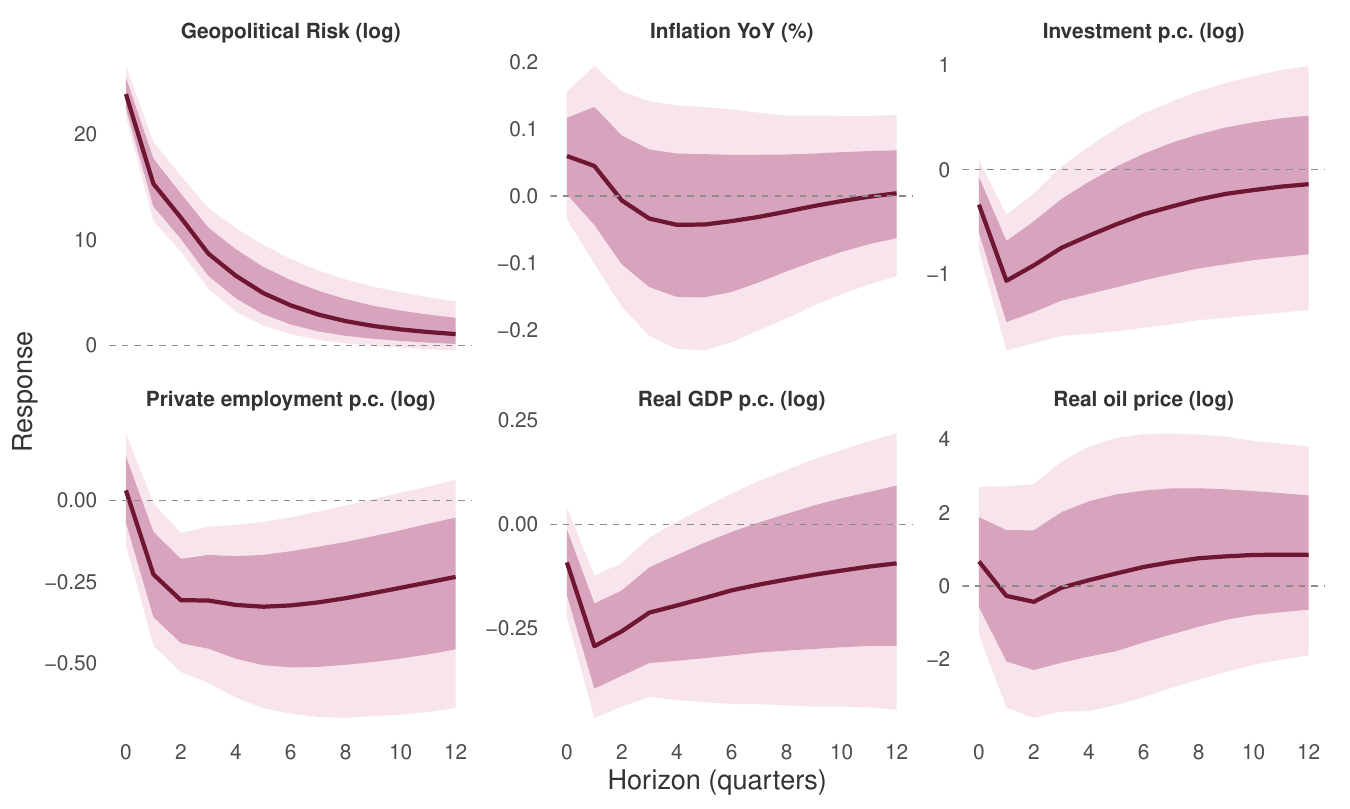}
\caption{\textbf{Macro-financial responses to a GPR innovation.}
\textit{Notes: Posterior medians (solid) with $68\%$ (dark) and $90\%$
(light) pointwise credible bands. Responses are generalized impulse
responses to a one-standard-deviation innovation in the GPR equation.}}
\label{fig:irfs_var}
\end{figure}

\medskip

\textit{Transmission to the systematic factor and portfolio PDs.}
The innovation first lowers the systematic credit factor $Z$: its response is
negligible on impact, turns negative within a few quarters, and reverts toward
baseline thereafter, combining the macro-financial propagation in the VAR with
the positive exposure of $Z$ to real activity in the satellite equation.
Through the Merton–Vasicek mapping, this movement translates into an increase
in portfolio PDs (Figure~\ref{fig:girf_Z_PD}). The median PD response peaks
after three quarters at $+0.033$ percentage points, about $1\%$ of the
through-the-cycle level; the posterior probability of a positive response at
the peak is $0.94$. The median response cumulates to about $0.26$
percentage-point-quarters of additional default risk over the twelve-quarter
horizon.

\begin{figure}[!htbp]
\centering
\includegraphics[width=0.60\linewidth]{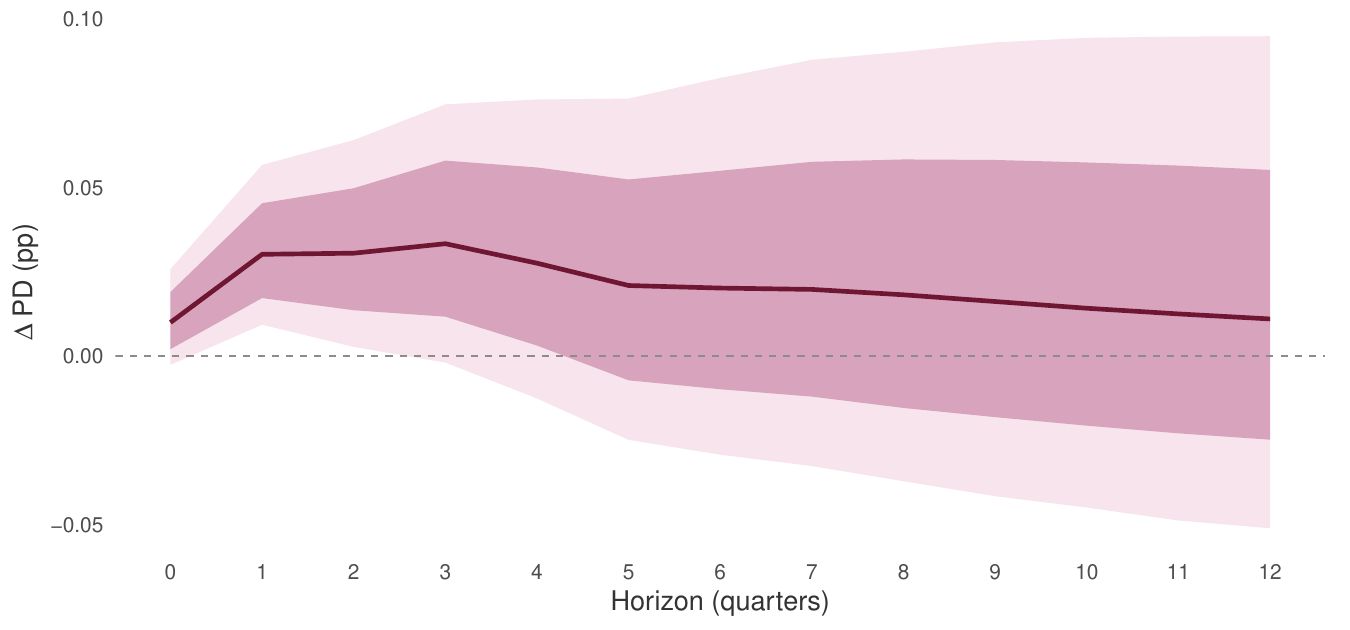}
\caption{\textbf{Generalized response of portfolio default probabilities to a
GPR innovation.}
\textit{Notes: Generalized response of portfolio default probabilities (in
percentage points) to a one-standard-deviation innovation in the GPR equation,
obtained by mapping the response of the systematic credit factor $Z$ through
the Merton–Vasicek model with $\widehat p=3.20\%$ and $\widehat\rho=0.051$.
Posterior median (solid) with $68\%$ (dark) and $90\%$ (light) pointwise
credible bands.}}
\label{fig:girf_Z_PD}
\end{figure}

\textit{Tail responses.}
Capital and provisioning depend on the upper tail of the conditional
default-probability distribution rather than on its center.
Corollary~\ref{cor:pd_ar} delivers this tail response in closed form.
Panel (a) of Figure~\ref{fig:girf_pd_ar} reports the generalized response of the
portfolio PD-at-Risk (PD-aR). Its timing matches the mean response, with a peak at the
third quarter, but its magnitude is larger: the $99\%$ PD-aR rises by
$+0.050$ percentage points at the peak and the $99.9\%$ PD-aR by $+0.056$,
against $+0.033$ for the mean, a ratio of about $1.5$ at the $99\%$ level and
$1.7$ at the $99.9\%$ level. This amplification
follows from the curvature of the Merton–Vasicek mapping, under which a given
decline in the systematic factor shifts the upper tail of the conditional PD
distribution more than its center. Measured against the through-the-cycle
PD-aR levels of $8.6\%$ ($99\%$) and $11.8\%$ ($99.9\%$), the response is
modest. The tail deteriorates more than the center in absolute (percentage-point)
terms, but less in proportional terms ($0.050/8.6\approx0.6\%$ at the $99\%$
level against $0.033/3.20\approx1.0\%$ for the mean). The larger tail effect is
thus one of absolute magnitude, the quantity relevant for capital and
provisions. The expected shortfall, the coherent tail measure of
Corollary~\ref{cor:pd_es} shown in panel (b) of
Figure~\ref{fig:girf_pd_ar}, delivers the same message: the $99\%$ and
$99.9\%$ ES responses peak at $+0.053$ and $+0.059$ percentage points, each
slightly above the corresponding PD-aR as $\mathrm{ES}_\alpha\ge
\mathrm{PD}^{\mathrm{aR}}_\alpha$ requires, with the same three-quarter timing.

\medskip
\textit{A modest aggregate response.}
The moderate mean response reflects the credit-risk measure and the shock. The
delinquency rate on all loans and leases is a broad aggregate proxy; in a
granular Merton--Vasicek--Gordy portfolio idiosyncratic risk is diversified
away, so the aggregate PD responds only to the systematic component, much as
aggregate export volumes respond modestly to large exchange-rate movements in
\citet{berman2012different}. The interpretation is also specific to the GPR
measure of \citet{caldara2022geopolitical}, which captures adverse military and
security risks rather than trade fragmentation, sanctions, or technology
restrictions. A modest aggregate mean effect thus coexists with
larger tail responses and substantial state dependence.

\begin{figure}[!htbp]
  \centering
  \begin{subfigure}[t]{0.49\linewidth}
    \centering
    \includegraphics[width=\linewidth]{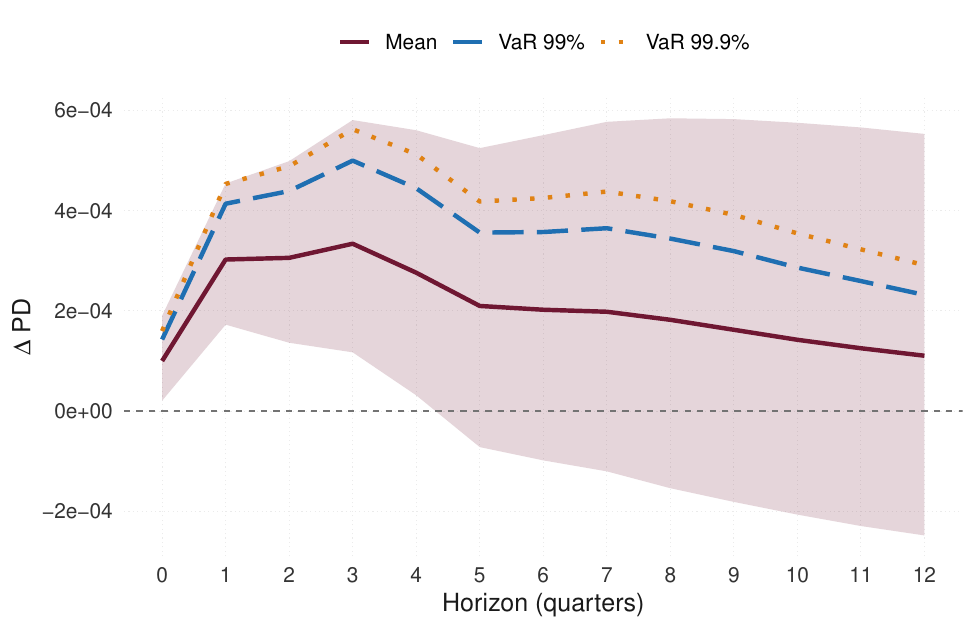}
    \caption{PD-at-Risk (Corollary~\ref{cor:pd_ar}).}
    \label{fig:girf_pd_ar_panel}
  \end{subfigure}
  \hfill
  \begin{subfigure}[t]{0.49\linewidth}
    \centering
    \includegraphics[width=\linewidth]{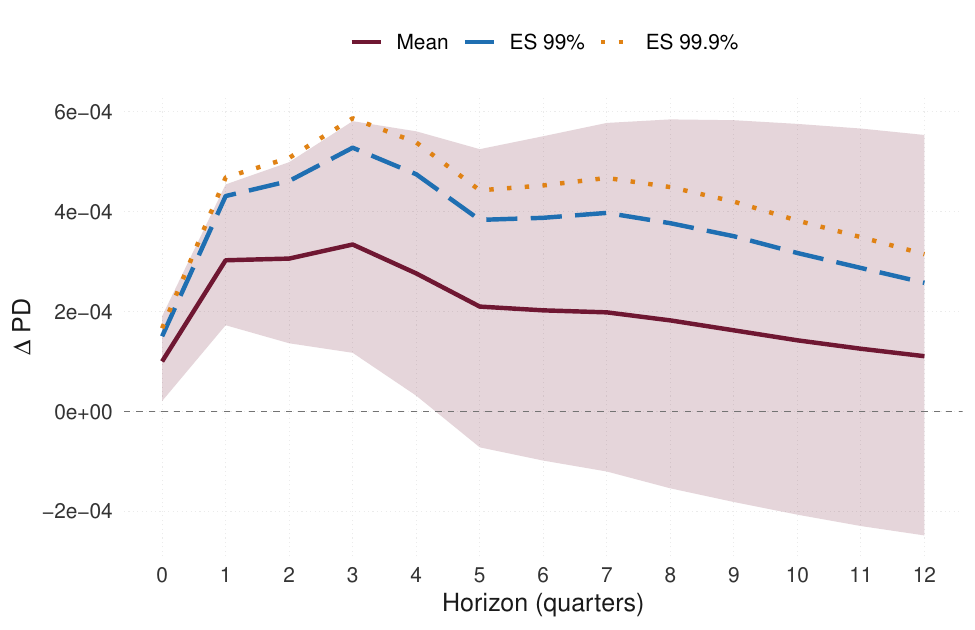}
    \caption{Expected shortfall (Corollary~\ref{cor:pd_es}).}
    \label{fig:girf_pd_es_panel}
  \end{subfigure}
  \caption{\textbf{Tail responses of portfolio default probabilities to a GPR
  innovation.}
  \textit{Notes: Posterior-median generalized responses (percentage points) of
  the mean portfolio PD and of its $99\%$ and $99.9\%$ tail measures to a
  one-standard-deviation GPR innovation, obtained by mapping the
  systematic-factor response through the Merton–Vasicek model
  ($\widehat p=3.20\%$, $\widehat\rho=0.051$). Panel (a) reports the PD-at-Risk
  and panel (b) the expected shortfall. In both panels the mean response is shown with its $68\%$
  pointwise credible band. Both tail measures peak at the third quarter, like
  the mean, but are larger, reflecting the curvature of the map that shifts the
  upper tail of the conditional PD distribution more than its center; the
  expected shortfall lies above the corresponding PD-at-Risk at each level, as a
  coherent measure requires.}}
  \label{fig:girf_pd_ar}
\end{figure}

\subsection{Geopolitical episodes and state dependence}
\label{sec:scenario_2001}
Stress-testing frameworks require severe yet plausible scenarios. The
estimated VAR provides a direct device for constructing them: its history of
standardized reduced-form GPR innovations identifies large historical shocks
in the units of the model (Figure~\ref{fig:gpr_innov}). The three largest
positive innovations in the sample associated with distinct events correspond
to recognizable episodes: the
invasion of Kuwait in 1990:Q3 ($3.14$ standard deviations), the September 11
attacks in 2001:Q3 ($4.45$), and the invasion of Ukraine in 2022:Q1
($3.54$).\footnote{The 2001:Q4 innovation ($3.90$ standard deviations) is
larger than the Gulf War one but continues the September 11 episode; we retain
one innovation per event.} Because the response of Proposition~\ref{prop:GIRF-PD} is
history-dependent through the conditional mean of the systematic factor, each
episode can be evaluated at its own initial conditions, and these episodes
happen to span the credit cycle: the Gulf War struck a stressed credit system
(model-implied PD of $4.6\%$ at the shock date), the September 11 attacks an
intermediate one ($2.0\%$), and the Ukraine invasion a benign one ($1.4\%$).
The macro-financial propagation is identical across these experiments, since
the VAR is linear; the comparison isolates the position of the credit cycle
in the Merton--Vasicek map. Parameters are full-sample estimates, so the
exercise is a counterfactual within the estimated model rather than a
real-time analysis.
\begin{figure}[!htbp]
    \centering
    \includegraphics[width=0.70\linewidth]{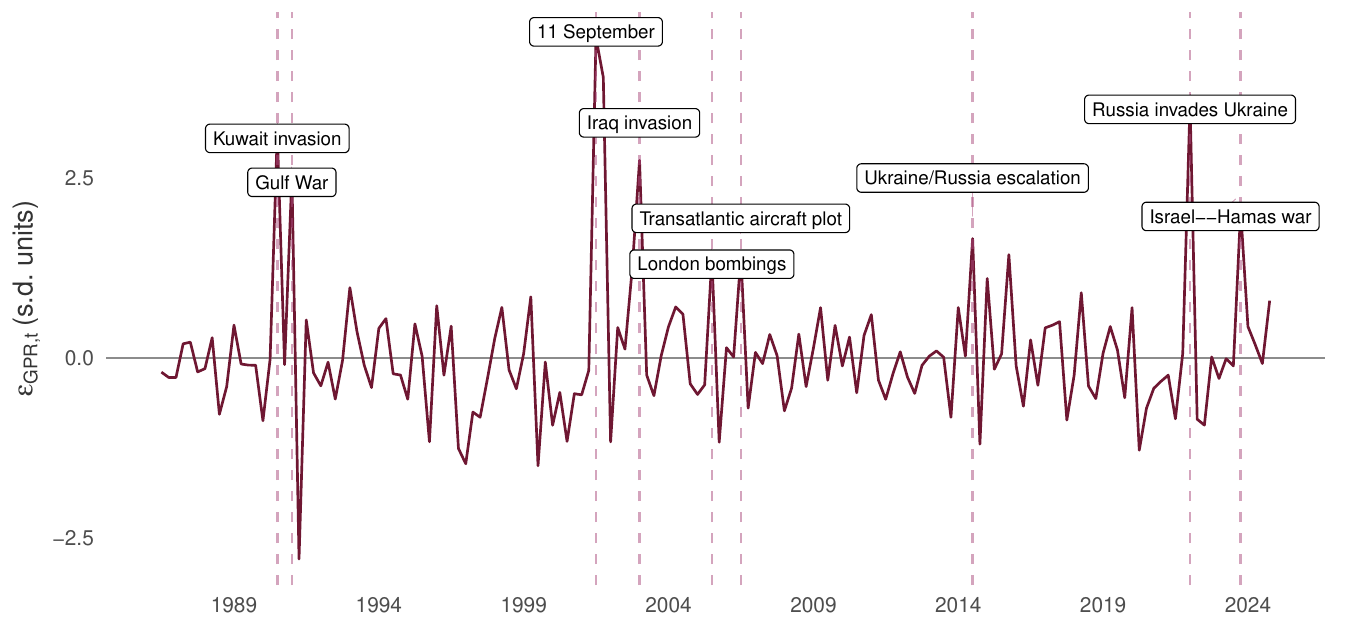}
    \caption{\textbf{Standardized GPR innovations.}
    \textit{Notes: Standardized reduced-form innovations in the GPR equation,
    $\varepsilon_t^{g}=u_{gt}/\sqrt{\sigma_{gg}}$, where $g$ indexes the GPR
    equation, $u_{gt}$ is its reduced-form innovation in the Bayesian VAR, and
    $\sigma_{gg}$ is the corresponding innovation variance, evaluated at
    posterior-median parameters. Values are in standard-deviation units; positive values
    correspond to unexpected increases in geopolitical risk.}}
    \label{fig:gpr_innov}
\end{figure}
Table~\ref{tab:state_dependence} reports the responses. Evaluated
at its own date, the Gulf War innovation produces a larger PD response
($+0.278$ percentage points at the peak) than the September 11 innovation
($+0.219$), although the latter shock is more than forty percent larger. A smaller geopolitical shock
generated a larger credit-risk response because it struck a weaker credit
system. The marginal credible bands of the two responses overlap
(Table~\ref{tab:state_dependence} reports $68\%$ intervals, $[0.099,0.477]$ for
the Gulf War against $[0.077,0.376]$ for September 11), but the comparison is
sharper than this overlap suggests: the two experiments are evaluated on
identical posterior draws, so estimation uncertainty is common to both, and the
posterior probability that the Gulf War response exceeds the September 11
response at the peak is $0.94$. Beyond the three episodes, the peak response to a
common one-standard-deviation innovation rises strongly with the baseline
expected PD across all admissible histories (at posterior-median parameters;
the rank correlation between the two is $0.98$), from a minimum of $0.029$
percentage points (2021:Q1) to a maximum of $0.131$ at the most stressed state
(2009:Q4), a factor of $4.6$ over the historically observed range of credit
conditions.

These results characterize the state dependence of the macro-to-credit
transmission. The state of the credit cycle is the operative margin: the same
one-standard-deviation GPR innovation generates a larger PD response when
initial credit conditions are weaker. By contrast, at a given initial state,
the size of the shock scales the response almost proportionally. From the
end-of-sample state, for instance, a September-11-sized shock yields
$+0.155$ percentage points, which is $4.6$ times the one-standard-deviation
response for a $4.45$-standard-deviation shock. Because the macro-to-factor
response $\psi_Z$ is state-independent (the VAR is linear), the posterior
probability that a GPR innovation raises the PD is the same across states at any
given horizon, equal to $0.94$ at the three-quarter peak; this is the same
quantity as the pairwise-dominance probability reported above for the Gulf
War/September~11 comparison, both being the posterior probability that $\psi_Z$
is negative at the peak. The peak occurs at three quarters for the stressed and
intermediate states, marginally earlier for the most benign histories where the
response is smallest. State dependence
therefore operates through the Merton--Vasicek credit mapping, rather than
through the macro-financial dynamics.
The same evidence rationalizes the two-block architecture. A linear VAR that
included the default rate directly would imply state-independent responses, a
flat relation between the baseline expected PD and the peak response, and
proportionality between the Gulf War and September 11 responses. Instead, the
estimated relation has an economically large positive slope, though close to
linear over the observed range, and the Gulf War/September 11 comparison
displays an inversion.
\begin{table}[!htbp]
  \centering
  \scriptsize
  \tablestretch
  \compacttablecols
  \caption{\textbf{Geopolitical episodes evaluated at their own initial conditions.}}
  \label{tab:state_dependence}
    \begin{tabular}{lccccc}
      \toprule
      Episode
      & \makecell{PD at shock\\date (pp)}
      & \makecell{Shock\\(s.d.)}
      & \makecell{Peak $\Delta$PD,\\own shock (pp)}
      & \makecell{Peak $\Delta$PD,\\1 s.d. (pp)}
      & Ratio \\
      \midrule
      Gulf War (1990:Q3)
      & 4.59 & 3.14 & \makecell{$0.278$\\{\scriptsize $[0.099,\,0.477]$}} & 0.087 & 2.59 \\
      September 11 (2001:Q3)
      & 1.97 & 4.45 & \makecell{$0.219$\\{\scriptsize $[0.077,\,0.376]$}} & 0.047 & 1.41 \\
      Ukraine invasion (2022:Q1)
      & 1.36 & 3.54 & \makecell{$0.106$\\{\scriptsize $[0.059,\,0.162]$}} & 0.029 & 0.87 \\
      End of sample (2025:Q1)
      & 1.45 & -- & -- & 0.033 & 1.00 \\
      \bottomrule
    \end{tabular}%
  \par
  \medskip
  \begin{minipage}{\textwidth}
  \footnotesize
  \textit{Notes: Generalized responses of the portfolio PD are
  evaluated conditional on the history preceding each episode, on identical
  posterior draws. ``PD at shock date'' is the posterior median of the
  model-implied baseline PD at $h=0$. ``Own shock'' replays the episode's
  standardized reduced-form GPR innovation; brackets report $68\%$ credible
  intervals. ``1 s.d.'' applies a common one-standard-deviation innovation to
  isolate state dependence. ``Ratio'' is the corresponding peak relative to the
  end-of-sample state. All peaks occur at $h=3$, except for the
  Ukraine episode, whose peak occurs at $h=1$. The episodes are the three
  largest positive GPR innovations in the sample associated with distinct
  events. Parameters are full-sample
  estimates.
  }
\end{minipage}
\end{table}
\subsection{Using shorter default histories}
\label{sec:short_default_histories}
The modular architecture is designed for the common case in which
macro-financial variables span long samples but default histories are shorter or
heterogeneous across definitions, regulation, and portfolio composition. A fully
joint model would either discard macro-financial history or impose a homogeneous
default series throughout; the proposed framework instead estimates the VAR on
the full macro sample and re-estimates only the credit-risk bridge on the
available default window. We keep the VAR on 1986:Q1--2024:Q4 and re-estimate the
Merton--Vasicek inversion and the satellite on default samples starting in
2005:Q1 and in 2015:Q1 (Figure~\ref{fig:short_default_windows}), applying the
same one-standard-deviation GPR innovation throughout. The response remains
positive in both windows, but its magnitude and uncertainty depend on the default
sample. The 2005 window contains the global financial crisis, a stressed episode
from which the satellite identifies the macro-credit sensitivity, whereas the
shorter 2015 window covers a mostly benign period, which weakens identification
and widens the bands. The long macro sample still identifies and propagates the
shock, while the credit-risk block adapts to the available default history.
\begin{figure}[!htbp]
    \centering
    \begin{subfigure}[t]{0.48\textwidth}
        \centering
        \includegraphics[width=\textwidth]{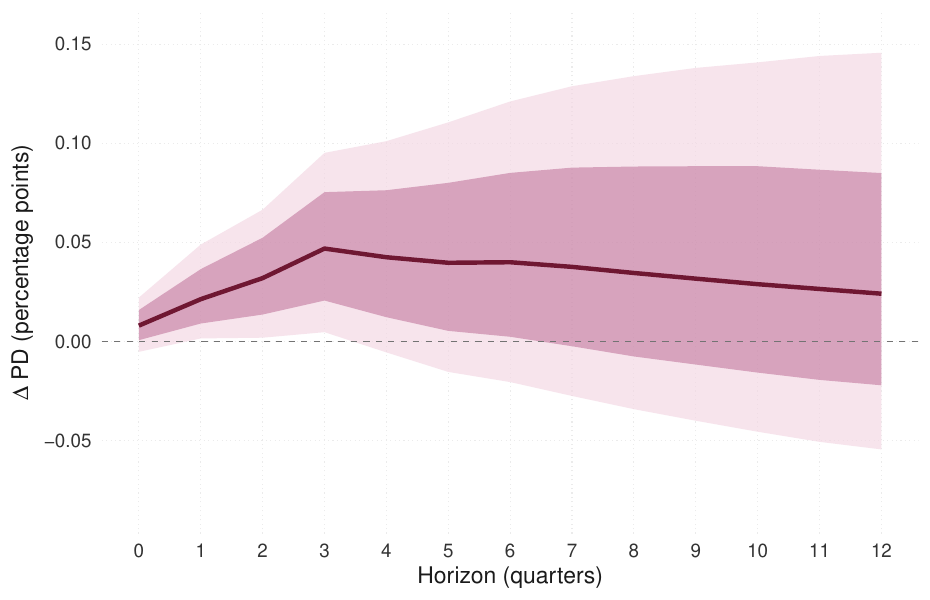}
        \caption{Default sample starts in 2005:Q1.}
        \label{fig:short_default_window_2005}
    \end{subfigure}
    \hfill
    \begin{subfigure}[t]{0.48\textwidth}
        \centering
        \includegraphics[width=\textwidth]{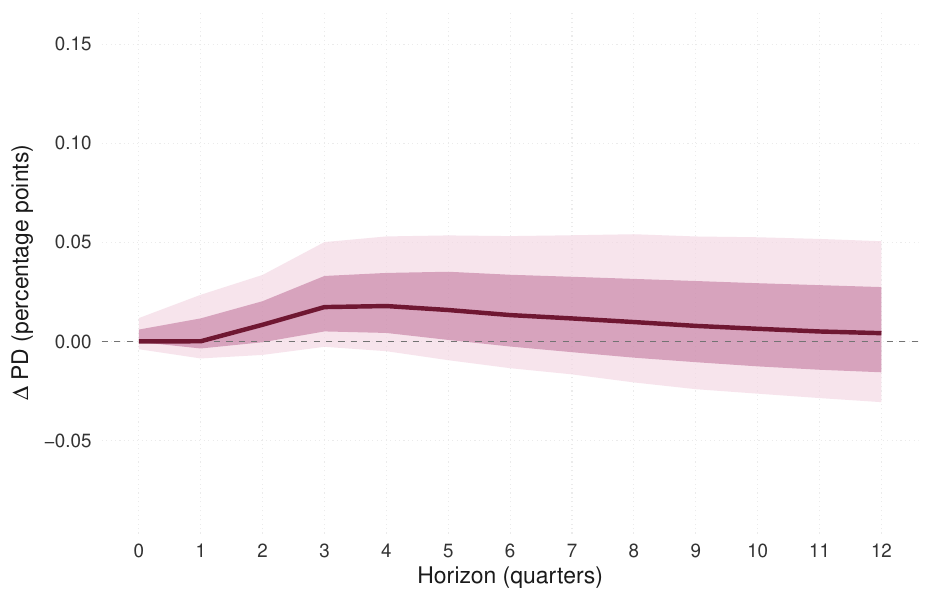}
        \caption{Default sample starts in 2015:Q1.}
        \label{fig:short_default_window_2015}
    \end{subfigure}
    \caption{\textbf{Portfolio-PD responses with shorter default histories.}
    \textit{Notes: The macro-financial VAR is estimated on the full
    1986:Q1--2024:Q4 sample, while the Merton--Vasicek credit-risk block and the
    satellite equation are re-estimated on shorter default samples starting in
    2005:Q1 and 2015:Q1. Each panel reports the portfolio-PD response to a
    one-standard-deviation GPR innovation; conventions as in
    Figure~\ref{fig:girf_pd_ar}.}}
    \label{fig:short_default_windows}
\end{figure}

\subsection{The cost of perfect foresight}
\label{sec:perfect_foresight}
The closed-form expression lets us isolate the effect of predictive
uncertainty from estimation uncertainty. The distinction matters because
supervisory stress tests treat the macroeconomic scenario as a known path:
banks ``should assume the subsequent path of a variable to be known in line
with the scenario,'' and provisions are projected under one path per scenario
\citep[Box~1, paragraphs~130 and~136]{EBA2025method}. In our framework this
convention amounts to suppressing part of the conditional variance that enters
the Merton--Vasicek map.

To make this precise, recall that the projected PD at horizon $h$ is the
nonlinear map $\pi(\cdot)$ applied to the latent factor
$Z_{t+h}=\beta_0+\beta^\top Y^{(s)}_{t+h}+\eta_{t+h}$ of equation~\eqref{eq:Z}.
Writing $\bar Y^{(s)}_{t+h}=\mathbb E[Y^{(s)}_{t+h}\mid\Omega_t]$ for the
conditional-mean scenario path, we evaluate three variants on identical
posterior draws:
\begin{align}
  \mathrm{PD}^{\mathrm{exact}}_h
    &= \mathbb E\!\left[\pi(Z_{t+h})\mid\Omega_t\right],
    \label{eq:pd_exact}\\[2pt]
  \mathrm{PD}^{\mathrm{PF}}_h
    &= \mathbb E_{\eta}\!\left[\pi\!\big(\beta_0+\beta^\top\bar Y^{(s)}_{t+h}
       +\eta_{t+h}\big)\right],
    \label{eq:pd_pf}\\[2pt]
  \mathrm{PD}^{\mathrm{plug\text{-}in}}_h
    &= \pi\!\left(\mathbb E[Z_{t+h}\mid\Omega_t]\right).
    \label{eq:pd_plugin}
\end{align}
The \emph{exact} variant integrates over the full predictive distribution of
$Z_{t+h}$, whose variance combines the VAR predictive variance of the
macro-financial path and the satellite variance $\sigma_\eta^2$. The
\emph{perfect-foresight} variant fixes the scenario path at its conditional
mean $\bar Y^{(s)}_{t+h}$ but keeps the satellite error $\eta_{t+h}$; it is the
closest counterpart to the supervisory convention. The \emph{plug-in} variant additionally drops $\eta_{t+h}$ and evaluates the
credit-risk map at the conditional mean of the latent factor,
$\pi(\mathbb E[Z_{t+h}\mid\Omega_t])$.\footnote{By
Corollary~\ref{cor:pd_ar}, the plug-in value is the median PD-aR, obtained at
$\alpha=0.5$.} Because $\pi$ is convex in the relevant region, Jensen's inequality
orders the three,
\begin{equation}
  \mathrm{PD}^{\mathrm{plug\text{-}in}}_h
  \;\le\; \mathrm{PD}^{\mathrm{PF}}_h
  \;\le\; \mathrm{PD}^{\mathrm{exact}}_h ,
  \label{eq:pf_jensen}
\end{equation}
so both conventions understate the expected PD, and the gap widens with the
predictive variance, hence with the horizon and in adverse states.

Table~\ref{tab:pf_bias} confirms that perfect foresight mainly affects
projected PD levels, not impulse responses. The perfect-foresight variant
understates the exact PD path by $2.1\%$ at the one-year horizon and by
$6.7\%$ at the three-year horizon; the plug-in approximation produces larger
biases, $3.9\%$ and $8.4\%$. By contrast, impulse responses are almost
unchanged: the convexity correction enters both the shocked and baseline
evaluations and largely cancels in the difference, leaving peak-response biases
of at most about $2\%$ for both the one-standard-deviation and the
September-11-sized shock. Perfect foresight therefore has limited effects on the
estimated sensitivity of credit risk to shocks, but it lowers projected
default-risk levels, which are directly relevant for provisioning and capital
projections.

A stylized provisioning calculation makes the level bias concrete. With a
loss-given-default of $\mathrm{LGD}=45\%$ and expected credit loss
$\mathrm{ECL}_h=\mathrm{EAD}\cdot\mathrm{LGD}\cdot\mathrm{PD}_h$, the ECL is
proportional to the projected PD, so the percentage biases of
Table~\ref{tab:pf_bias} carry over to provisions. At the three-year horizon the
exact ECL is $67.6$ basis points of EAD, against $63.0$ bp under perfect
foresight and $62.0$ bp under plug-in, an understatement of $4.6$ and $5.7$ bp
(about \$$0.46$m and \$$0.57$m per \$$1$bn of exposure); at the one-year horizon
the gap is $1.3$ to $2.5$ bp. The bias is small but systematic and one-signed,
and it grows with the horizon and in adverse states.

Relative to the plug-in benchmark, the gain from the closed form
is small for impulse responses (at most $2\%$ at the peak, since the convexity
correction enters the shocked and baseline evaluations symmetrically and
largely cancels) and larger for projected levels and the provisions based on
them ($6$ to $8\%$ at the three-year horizon). The plug-in evaluation also
returns no measure of dispersion, so the quantile and expected-shortfall
responses of Corollaries~\ref{cor:pd_ar} and~\ref{cor:pd_es} have no plug-in
counterpart.
\begin{table}[!htbp]
  \centering
  \scriptsize
  \tablestretch
  \compacttablecols
  \caption{\textbf{The cost of perfect foresight: PD levels and responses.}}
  \label{tab:pf_bias}
    \begin{tabular}{llccccc}
      \toprule
      & & Exact & PF & Plug-in & Bias PF (\%) & Bias plug-in (\%) \\
      \midrule
      PD level (pp) & $h=4$
      & 1.427 & 1.397 & 1.371 & $-2.1$ & $-3.9$ \\
      PD level (pp) & $h=8$
      & 1.458 & 1.390 & 1.366 & $-4.7$ & $-6.3$ \\
      PD level (pp) & $h=12$
      & 1.503 & 1.401 & 1.377 & $-6.7$ & $-8.4$ \\
      \midrule
      Peak $\Delta$PD (pp), one s.d. & $h=3$
      & 0.033 & 0.033 & 0.033 & $-0.1$ & $-1.2$ \\
      Peak $\Delta$PD (pp), 2001:Q3 & $h=3$
      & 0.155 & 0.154 & 0.152 & $-0.8$ & $-2.0$ \\
      \bottomrule
    \end{tabular}%
  \par
  \medskip
  \begin{minipage}{0.96\textwidth}
  \footnotesize
  \textit{Notes: The table compares three evaluations of the projected
  portfolio PD on identical posterior draws, differing only in the predictive
  variance. ``Exact'' integrates over the full predictive distribution of the
  latent factor (Proposition~\ref{prop:GIRF-PD}); ``PF'' (perfect foresight)
  fixes the macro-financial path at its conditional mean but keeps the
  satellite-error variance, the closest counterpart to supervisory stress tests
  \citep{EBA2025method}; ``Plug-in'' additionally drops the satellite error. The
  upper block reports PD levels (pp) at the one-, two-, and three-year horizons
  ($h=4,8,12$); the lower block reports peak PD responses (pp) to a
  one-standard-deviation GPR innovation and to a September-11-sized ($4.45$ s.d.)
  shock at the end-of-sample state ($0.155$ pp, distinct from the $+0.219$ pp at
  its own 2001:Q3 conditions in Section~\ref{sec:scenario_2001}). Biases are
  relative to the exact variant; negative values indicate understatement.
  }
\end{minipage}
\end{table}

\section{Conclusion}
\label{sec:conclusion}
This paper develops a closed-form impulse-response approach to credit-portfolio stress testing. The framework links a macro-financial VAR and a Gaussian satellite to the Merton--Vasicek ASRF map of bank capital regulation; the closed-form results hold for any probit-Gaussian observation model, recession and qualitative-outcome probits among them. Its contribution is to derive the exact response of the mean, the quantiles, and the expected shortfall of portfolio default probabilities to a macro-financial innovation, rather than evaluating the nonlinear map along a deterministic conditional-mean path. The method preserves the modular structure of internal and supervisory stress-testing systems, while allowing the macro-financial and credit-risk blocks to be estimated on different samples and their estimation uncertainty to be propagated through posterior simulation.

In the application to U.S. geopolitical risk, a one-standard-deviation innovation to the Geopolitical Risk Index raises mean portfolio default probabilities modestly, with a peak response of 0.033 percentage points after three quarters, but has larger effects on upper-tail default-probability quantiles. Historical episodes also show that the same type of shock generates larger credit-risk responses when initial credit conditions are weaker. A comparison with the perfect-foresight convention of supervisory stress tests further shows that treating the macroeconomic scenario as a known path leaves impulse responses almost unchanged but understates projected PD \emph{levels} at the three-year horizon (by about seven percent under perfect foresight, and up to about eight percent under the stricter plug-in approximation) because of the convexity of the Merton--Vasicek map in the empirically relevant low-PD region. These results indicate that geopolitical stress testing should account for both the distribution of the latent credit factor and the state of the credit cycle.

Two limitations qualify the empirical application. The credit-risk block uses an aggregate delinquency rate as a proxy for portfolio default; it is broader than a regulatory default rate and, being pooled across borrowers, loan types, sectors, and banks, likely attenuates the estimated responses. The framework is also deliberately top-down and does not capture feedback from credit conditions back to the macroeconomy. Applying the method to a disaggregated bank portfolio with regulatory default probabilities, which the modular design directly accommodates, is a natural next step.


\label{bib}

\setlength{\bibsep}{8pt}
\bibliography{bibliography.bib}

@article{acerbi2002coherence,
  author  = {Acerbi, Carlo and Tasche, Dirk},
  title   = {On the Coherence of Expected Shortfall},
  journal = {Journal of Banking \& Finance},
  year    = {2002},
  volume  = {26},
  number  = {7},
  pages   = {1487--1503},
  doi     = {10.1016/S0378-4266(02)00283-2}
}

@article{adrian2019vulnerable,
  author  = {Adrian, Tobias and Boyarchenko, Nina and Giannone, Domenico},
  title   = {Vulnerable Growth},
  journal = {American Economic Review},
  year    = {2019},
  volume  = {109},
  number  = {4},
  pages   = {1263--1289}
}

@article{albert1993bayesian,
  author  = {Albert, James H. and Chib, Siddhartha},
  title   = {{Bayesian} Analysis of Binary and Polychotomous Response Data},
  journal = {Journal of the American Statistical Association},
  year    = {1993},
  volume  = {88},
  number  = {422},
  pages   = {669--679},
  doi     = {10.1080/01621459.1993.10476321}
}

@article{baker2016measuring,
  author  = {Baker, Scott R. and Bloom, Nicholas and Davis, Steven J.},
  title   = {Measuring Economic Policy Uncertainty},
  journal = {The Quarterly Journal of Economics},
  year    = {2016},
  volume  = {131},
  number  = {4},
  pages   = {1593--1636}
}

@article{banbura2010large,
  author  = {Ba\'{n}bura, Marta and Giannone, Domenico and Reichlin, Lucrezia},
  title   = {Large {Bayesian} Vector Auto Regressions},
  journal = {Journal of Applied Econometrics},
  year    = {2010},
  volume  = {25},
  number  = {1},
  pages   = {71--92}
}

@techreport{BCBS2019,
  author      = {{Basel Committee on Banking Supervision}},
  title       = {Minimum Capital Requirements for Market Risk},
  institution = {Bank for International Settlements},
  year        = {2019},
  url         = {https://www.bis.org/bcbs/publ/d457.pdf}
}

@article{berman2012different,
  author  = {Berman, Nicolas and Martin, Philippe and Mayer, Thierry},
  title   = {How Do Different Exporters React to Exchange Rate Changes?},
  journal = {The Quarterly Journal of Economics},
  year    = {2012},
  volume  = {127},
  number  = {1},
  pages   = {437--492}
}

@article{borio2014stress,
  author  = {Borio, Claudio and Drehmann, Mathias and Tsatsaronis, Kostas},
  title   = {Stress-Testing Macro Stress Testing: Does It Live Up to Expectations?},
  journal = {Journal of Financial Stability},
  year    = {2014},
  volume  = {12},
  pages   = {3--15}
}

@article{caldara2022geopolitical,
  author  = {Caldara, Dario and Iacoviello, Matteo},
  title   = {Measuring Geopolitical Risk},
  journal = {American Economic Review},
  year    = {2022},
  volume  = {112},
  number  = {4},
  pages   = {1194--1225},
  doi     = {10.1257/aer.20191823}
}

@techreport{camara2015mercure,
  author      = {Camara, Boubacar and Castellani, Fran{\c c}ois-Daniel and Fraisse, Henri and Frey, Laure and H{\'e}am, Jean-Cyprien and Labonne, Claire and Martin, Vincent},
  title       = {Mercure: A Macroprudential Stress Testing Model Developed at the {ACPR}},
  institution = {Autorit{\'e} de Contr{\^o}le Prudentiel et de R{\'e}solution, Banque de France},
  type        = {D{\'e}bats {\'E}conomiques et Financiers},
  number      = {19},
  year        = {2015}
}

@unpublished{ChanPfarrhofer2025,
  author = {Chan, Joshua C. C. and Pfarrhofer, Michael},
  title  = {Large {Bayesian} {VARs} for Binary and Censored Variables},
  year   = {2025},
  note   = {arXiv:2506.01422}
}

@article{ChavleishviliManganelli2024,
  author  = {Chavleishvili, Sulkhan and Manganelli, Simone},
  title   = {Forecasting and Stress Testing with Quantile Vector Autoregression},
  journal = {Journal of Applied Econometrics},
  year    = {2024},
  volume  = {39},
  number  = {1},
  pages   = {66--85},
  doi     = {10.1002/jae.3009}
}

@article{dueker2005dynamic,
  author  = {Dueker, Michael J.},
  title   = {Dynamic Forecasts of Qualitative Variables: A {Qual} {VAR} Model of {U.S.} Recessions},
  journal = {Journal of Business \& Economic Statistics},
  year    = {2005},
  volume  = {23},
  number  = {1},
  pages   = {96--104},
  doi     = {10.1198/073500104000000613}
}

@techreport{EBA2025method,
  author      = {{European Banking Authority}},
  title       = {{Methodological Note for the 2025 EU-Wide Stress Test}},
  institution = {European Banking Authority},
  year        = {2025},
  url         = {https://www.eba.europa.eu/risk-and-data-analysis/risk-analysis/eu-wide-stress-testing},
  note        = {Final methodology, published 20 January 2025}
}

@misc{ECB_SSM_PR_2025,
  author       = {{European Central Bank}},
  title        = {{ECB} to Assess Banks' Stress Testing Capabilities to Capture Geopolitical Risk},
  year         = {2025},
  howpublished = {Banking Supervision press release, 12 December 2025},
  url          = {https://www.bankingsupervision.europa.eu/press/pr/date/2025/html/ssm.pr251212~69f656d4bf.en.html}
}

@techreport{FornariLemke2010,
  author      = {Fornari, Fabio and Lemke, Wolfgang},
  title       = {Predicting Recession Probabilities with Financial Variables over Multiple Horizons},
  institution = {European Central Bank},
  type        = {Working Paper Series},
  number      = {1255},
  year        = {2010}
}

@article{gonzalezrivera2024expecting,
  author  = {Gonz{\'a}lez-Rivera, Gloria and Rodr{\'i}guez-Caballero, C. Vladimir and Ruiz, Esther},
  title   = {Expecting the Unexpected: Stressed Scenarios for Economic Growth},
  journal = {Journal of Applied Econometrics},
  year    = {2024},
  volume  = {39},
  pages   = {926--942}
}

@article{Gordy2003,
  author  = {Gordy, Michael B.},
  title   = {A Risk-Factor Model Foundation for Ratings-Based Bank Capital Rules},
  journal = {Journal of Financial Intermediation},
  year    = {2003},
  volume  = {12},
  number  = {3},
  pages   = {199--232},
  doi     = {10.1016/S1042-9573(03)00040-8}
}

@incollection{HenryKok2013,
  author    = {Henry, J{\'e}r{\^o}me and Kok, Christoffer},
  title     = {A Macro Stress Testing Framework for Assessing Systemic Risks in the Banking Sector},
  booktitle = {Handbook of Systemic Risk},
  editor    = {Fouque, Jean-Pierre and Langsam, Joseph A.},
  publisher = {Cambridge University Press},
  year      = {2013},
  pages     = {619--640}
}

@unpublished{hurlin2026reverse,
  author = {Hurlin, Christophe and Lajaunie, Quentin and Pull, Yoann},
  title  = {Reverse Stress Testing Geopolitical Risk in Corporate Credit Portfolios: A Formal and Operational Framework},
  year   = {2026},
  note   = {arXiv:2601.03983}
}

@article{InoueKilian2022,
  author  = {Inoue, Atsushi and Kilian, Lutz},
  title   = {Joint {Bayesian} Inference about Impulse Responses in {VAR} Models},
  journal = {Journal of Econometrics},
  year    = {2022},
  volume  = {231},
  number  = {2},
  pages   = {457--476}
}

@book{jeffreys1998theory,
  author    = {Jeffreys, Harold},
  title     = {Theory of Probability},
  edition   = {3rd},
  publisher = {Oxford University Press},
  address   = {Oxford},
  year      = {1998}
}

@article{kadiyala1997numerical,
  author  = {Kadiyala, K. Rao and Karlsson, Sune},
  title   = {Numerical Methods for Estimation and Inference in {Bayesian} {VAR}-Models},
  journal = {Journal of Applied Econometrics},
  year    = {1997},
  volume  = {12},
  number  = {2},
  pages   = {99--132}
}

@article{kass1996selection,
  author  = {Kass, Robert E. and Wasserman, Larry},
  title   = {The Selection of Prior Distributions by Formal Rules},
  journal = {Journal of the American Statistical Association},
  year    = {1996},
  volume  = {91},
  number  = {435},
  pages   = {1343--1370}
}

@article{koop1996impulse,
  author  = {Koop, Gary and Pesaran, M. Hashem and Potter, Simon M.},
  title   = {Impulse Response Analysis in Nonlinear Multivariate Models},
  journal = {Journal of Econometrics},
  year    = {1996},
  volume  = {74},
  number  = {1},
  pages   = {119--147}
}

@book{koop2003bayesian,
  author    = {Koop, Gary},
  title     = {Bayesian Econometrics},
  publisher = {John Wiley \& Sons},
  address   = {Chichester},
  year      = {2003}
}

@article{laudati2023identifying,
  author  = {Laudati, Dario and Pesaran, M. Hashem},
  title   = {Identifying the Effects of Sanctions on the {Iranian} Economy Using Newspaper Coverage},
  journal = {Journal of Applied Econometrics},
  year    = {2023},
  volume  = {38},
  number  = {3},
  pages   = {271--294}
}

@article{merton1974pricing,
  author  = {Merton, Robert C.},
  title   = {On the Pricing of Corporate Debt: The Risk Structure of Interest Rates},
  journal = {The Journal of Finance},
  year    = {1974},
  volume  = {29},
  number  = {2},
  pages   = {449--470}
}

@article{pesaran2006macroeconomic,
  author  = {Pesaran, M. Hashem and Schuermann, Til and Treutler, Bj{\"o}rn-Jakob and Weiner, Scott M.},
  title   = {Macroeconomic Dynamics and Credit Risk: A Global Perspective},
  journal = {Journal of Money, Credit and Banking},
  year    = {2006},
  volume  = {38},
  number  = {5},
  pages   = {1211--1261}
}

@article{pesaranShin1998,
  author  = {Pesaran, M. Hashem and Shin, Yongcheol},
  title   = {Generalized Impulse Response Analysis in Linear Multivariate Models},
  journal = {Economics Letters},
  year    = {1998},
  volume  = {58},
  number  = {1},
  pages   = {17--29},
  doi     = {10.1016/S0165-1765(97)00214-0}
}

@book{Quagliariello2009,
  author    = {Quagliariello, Mario},
  title     = {Stress-Testing the Banking System: Methodologies and Applications},
  publisher = {Cambridge University Press},
  address   = {Cambridge},
  year      = {2009}
}

@book{van2000asymptotic,
  author    = {van der Vaart, Aad W.},
  title     = {Asymptotic Statistics},
  publisher = {Cambridge University Press},
  address   = {Cambridge},
  year      = {2000}
}

@article{vasicek2002distribution,
  author  = {Vasicek, Oldrich},
  title   = {The Distribution of Loan Portfolio Value},
  journal = {Risk},
  year    = {2002},
  volume  = {15},
  number  = {12},
  pages   = {160--162}
}

@techreport{Virolainen2004,
  author      = {Virolainen, Kimmo},
  title       = {Macro Stress Testing with a Macroeconomic Credit Risk Model for {Finland}},
  institution = {Bank of Finland},
  type        = {Discussion Paper},
  number      = {18/2004},
  year        = {2004}
}

@article{wang2021axiomatic,
  author  = {Wang, Ruodu and Zitikis, Ri{\v{c}}ardas},
  title   = {An Axiomatic Foundation for the Expected Shortfall},
  journal = {Management Science},
  year    = {2021},
  volume  = {67},
  number  = {3},
  pages   = {1413--1429},
  doi     = {10.1287/mnsc.2020.3617}
}

@article{Wilson1997,
  author  = {Wilson, Thomas C.},
  title   = {Portfolio Credit Risk},
  journal = {Risk},
  year    = {1997},
  volume  = {10},
  number  = {10},
  pages   = {56--61}
}
\ifshowonline

\appendix
\clearpage
\onehalfspacing

\numberwithin{equation}{section}

\begin{bibunit}
\begingroup
\centering
{\bfseries\Large SUPPORTING INFORMATION}\\[6pt]

{\large Generalized Impulse Responses of Portfolio Default Probabilities:\\
A Modular Framework with an Application to Geopolitical Risk}\\[10pt]

Guillaume Flament$^{\S}$
\quad
Christophe Hurlin$^{\dagger,\ddagger,*}$
\quad
Quentin Lajaunie$^{\dagger,\S}$
\quad
Yoann Pull$^{\dagger,\S}$\\[10pt]

{\footnotesize
$^{\dagger}$University of Orl\'eans, Rue de Blois,
45067 Orl\'eans, France.\\
$^{\ddagger}$Institut Universitaire de France (IUF),
75231 Paris, France.\\
$^{\S}$Square Research Center, 173 Av. Achille Peretti,
92200 Neuilly-sur-Seine, France.\\
$^{*}$Corresponding author:
\href{mailto:christophe.hurlin@univ-orleans.fr}{christophe.hurlin@univ-orleans.fr}.
}\\[10pt]

\textit{This file contains proofs, additional derivations, robustness checks,
and implementation details.}\par
\medskip
\textit{Note on numbering.} Equations and formal results introduced in this
Supporting Information are numbered by appendix section (A.1, A.2, B.1, \dots)
and are independent of the main text. Figures and tables continue the numbering
of the main text.\par
\endgroup
\bigskip

\section{Proofs and derivations}
\label{app:proofs}

\subsection{Proof of Proposition~\ref{prop:Moments-Z}}
\label{app:proof_Moments}

\begin{proof}
Conditional on $\Omega_{t-1}=\omega_{t-1}$, the VAR forecast admits the
moving-average representation
\begin{equation}
    Y_{t+r}
    =
    \mu_{t+r}^{Y}
    +
    \sum_{q=0}^{r}\Psi_{r-q}u_{t+q},
    \qquad r\ge0,
    \qquad
    \mu_{t+r}^{Y}
    =
    \mathbb E[Y_{t+r}\mid \Omega_{t-1}=\omega_{t-1}],
    \label{eq:MA-forecast-proof}
\end{equation}
where $\mu_{t+r}^{Y}$ is the baseline VAR forecast computed from the last $P$
observed lags. For $r<0$, $Y_{t+r}$ belongs to $\Omega_{t-1}$, so
$\mu_{t+r}^{Y}=Y_{t+r}$ and $\Var(Y_{t+r}\mid\Omega_{t-1})=0$; throughout, we
use the convention $\Psi_r=0$ for $r<0$. By
\eqref{eq:selection_operator} and \eqref{eq:Z}\mt{},
\begin{equation}
    Z_{t+h}
    =
    \beta_0
    +
    \sum_{\ell=0}^{L_{\max}}
        \beta^\top S_\ell^{(s)} Y_{t+h-\ell}
    +
    \eta_{t+h},
    \label{eq:Z-expanded-proof}
\end{equation}
with $\eta_{t+h}\sim\mathcal N(0,\sigma_\eta^2)$ independent of
$\{u_s\}_{s\in\mathbb Z}$.

\textit{Mean path.}
Taking conditional expectations in \eqref{eq:Z-expanded-proof} and using
$\mathbb E[\eta_{t+h}\mid\Omega_{t-1}]=0$,
\begin{equation}
    \mu_{t+h}
    =
    \beta_0
    +
    \sum_{\ell=0}^{L_{\max}}
        \beta^\top S_\ell^{(s)}
        \mu_{t+h-\ell}^{Y},
\end{equation}
which proves \eqref{eq:mu-baseline}\mt{}. Under the additional conditioning event
$u_{gt}=\delta_g$, the same expansion holds with
$\mathbb E[Y_{t+h-\ell}\mid u_{gt}=\delta_g,\Omega_{t-1}=\omega_{t-1}]$ in
place of $\mu_{t+h-\ell}^{Y}$. Subtracting the two expressions term by term,
each difference equals $\psi_Y^g(h-\ell,\delta_g,\omega_{t-1})$ by
Definition~\ref{def:GIRF}; it is zero for $h-\ell<0$, since $Y_{t+h-\ell}$
then belongs to $\Omega_{t-1}$ and is unaffected by the innovation at date
$t$, consistent with the convention $\psi_Y^g(h',\cdot,\cdot)=0$ for $h'<0$.
Hence
\begin{equation}
    \mu_{t+h}^{(\delta_g)}
    -
    \mu_{t+h}
    =
    \sum_{\ell=0}^{L_{\max}}
        \beta^\top S_\ell^{(s)}
        \psi_Y^g(h-\ell,\delta_g,\omega_{t-1})
    =
    \psi_Z^g(h,\delta_g,\omega_{t-1}),
\end{equation}
where the second equality is \eqref{eq:GIRF-Z}\mt{}, and where each
$\psi_Y^g(h-\ell,\delta_g,\omega_{t-1})$ is given in closed form by
\eqref{eq:GIRF-Y-general}\mt{}. This proves \eqref{eq:mu-shocked}\mt{}.

\textit{Variance path.}
Substituting \eqref{eq:MA-forecast-proof} into \eqref{eq:Z-expanded-proof},
reordering the double sum, and using $\Psi_r=0$ for $r<0$ gives, with
$G_{h,q}$ and $B(h,q)$ as defined in \eqref{eq:G-hq}\mt{},
\begin{equation}
    Z_{t+h}-\mu_{t+h}
    =
    \sum_{q=0}^{h}
        B(h,q)u_{t+q}
    +
    \eta_{t+h}.
    \label{eq:Z-centered-proof}
\end{equation}
Since $\Var(u_t)=\Sigma_u$, the VAR innovations are serially independent, and
the satellite error is independent of the VAR innovations, the baseline
conditional variance is
\begin{equation}
    s_{t+h}^2
    =
    \sum_{q=0}^{h}
        B(h,q)\Sigma_u B(h,q)^\top
    +
    \sigma_\eta^2,
\end{equation}
which proves \eqref{eq:s2-baseline}\mt{}. For the shocked conditional variance, we
condition on the scalar innovation $u_{gt}=\delta_g$ but not on the full
innovation vector $u_t$: the contemporaneous innovation ($q=0$) retains the
conditional covariance matrix $\Sigma_{u\mid g}$ of
\eqref{eq:sigma-u-cond-g}\mt{}, while future innovations $\{u_{t+q}\}_{q\ge1}$ are
independent of $u_{gt}$ and keep covariance matrix $\Sigma_u$. Using
\eqref{eq:Z-centered-proof},
\begin{equation}
    \big(s_{t+h}^{(\delta_g)}\big)^2
    =
    B(h,0)\Sigma_{u\mid g}B(h,0)^\top
    +
    \sum_{q=1}^{h}
        B(h,q)\Sigma_u B(h,q)^\top
    +
    \sigma_\eta^2,
\end{equation}
which proves \eqref{eq:s2-shocked}\mt{} and completes the proof of
Proposition~\ref{prop:Moments-Z}.
\end{proof}

\subsection{Proof of Proposition~\ref{prop:GIRF-PD}}
\label{app:proof_Proposition_GIRF_PD}

We first establish the Gaussian integration identity used to evaluate the
expectation of a probit map of a Gaussian index.

\begin{lem}
\label{lem:closed}
Let $W\sim\mathcal N(\mu,\sigma^2)$ and $a,b\in\mathbb R$. Then
\begin{equation}
    \mathbb E\!\left[\Phi(a+bW)\right]
    =
    \Phi\!\left(\frac{a+b\mu}{\sqrt{1+b^2\sigma^2}}\right).
    \label{eq:closed}
\end{equation}
\end{lem}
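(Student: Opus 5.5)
The plan is to represent $\Phi$ itself as a probability and reduce the expectation to the probability of a linear event in jointly Gaussian variables. Introduce an auxiliary $V\sim\mathcal N(0,1)$ independent of $W$. For every fixed $w$ we have $\Phi(a+bw)=\Pr(V\le a+bw)$. Conditioning on $W$ and applying the tower property then gives
\[
\mathbb E[\Phi(a+bW)]=\mathbb E\big[\Pr(V\le a+bW\mid W)\big]=\Pr(V-bW\le a).
\]
This step is legitimate because $\Phi$ is bounded and measurable, so Fubini applies without any integrability issue.

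Next we identify the law of $V-bW$. Since $V$ and $W$ are independent Gaussians, $V-bW$ is Gaussian with mean $-b\mu$ and variance $1+b^2\sigma^2$. The variance is strictly positive even when $b=0$ or $\sigma=0$, so the event can be standardized:
\[
\Pr(V-bW\le a)=\Phi\!\left(\frac{a+b\mu}{\sqrt{1+b^2\sigma^2}}\right).
\]
This is \eqref{eq:closed}.

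No real obstacle is expected. The only points to check are the degenerate cases, and they are covered automatically: if $\sigma=0$, then $W=\mu$ almost surely and the formula reduces to $\Phi(a+b\mu)$; if $b=0$, both sides equal $\Phi(a)$. The sign of $b$ never enters, since only $b^2$ appears in the variance. This matters because the Merton--Vasicek instance has $b<0$, and the lemma then applies verbatim there.

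An alternative route is to differentiate $\int\Phi(a+b(\mu+\sigma x))\phi(x)\,dx$ with respect to $a$ and complete the square, then integrate back using the limit $a\to-\infty$. I would only mention this as a check, since the probabilistic argument is shorter and avoids the limit bookkeeping.
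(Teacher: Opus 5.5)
Your proof is correct and is essentially the paper's own: both introduce an independent $V\sim\mathcal N(0,1)$, write $\Phi(a+bW)=\Pr(V\le a+bW\mid W)$, apply iterated expectations, and standardize $V-bW\sim\mathcal N(-b\mu,\,1+b^2\sigma^2)$. Your remarks on the degenerate cases $b=0$ and $\sigma=0$ are a harmless addition that the paper leaves implicit.
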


\begin{proof}
Let $V\sim\mathcal N(0,1)$ be independent of $W$. Conditional on $W$,
\begin{equation}
    \Phi(a+bW)
    =
    \Pr\!\left(V\le a+bW\,\middle|\,W\right)
    =
    \Pr\!\left(V-bW\le a\,\middle|\,W\right).
\end{equation}
Taking expectations and applying the law of iterated expectations gives
$\mathbb E[\Phi(a+bW)]=\Pr(V-bW\le a)$. Since $V$ and $W$ are independent
Gaussians, $V-bW\sim\mathcal N\!\big(-b\mu,\,1+b^2\sigma^2\big)$, and therefore
\begin{equation}
    \Pr\!\left(V-bW\le a\right)
    =
    \Phi\!\left(\frac{a+b\mu}{\sqrt{1+b^2\sigma^2}}\right),
\end{equation}
which proves \eqref{eq:closed}.
\end{proof}

We can now prove Proposition~\ref{prop:GIRF-PD}.

\begin{proof}
By \eqref{eq:Z-centered-proof}, $Z_{t+h}$ is an affine function of Gaussian
innovations, so its conditional distribution is Gaussian under both
conditioning events, with the moments of Proposition~\ref{prop:Moments-Z}:
\begin{equation}
\begin{aligned}
    \left( Z_{t+h} \mid \Omega_{t-1}=\omega_{t-1} \right)
    &\sim
    \mathcal N\!\left(\mu_{t+h},\, s_{t+h}^2\right), \\
    \left( Z_{t+h} \mid u_{gt}=\delta_g,\Omega_{t-1}=\omega_{t-1} \right)
    &\sim
    \mathcal N\!\left(\mu_{t+h}^{(\delta_g)},\, \big(s_{t+h}^{(\delta_g)}\big)^2\right).
\end{aligned}
    \label{eq:Z-conditional-laws}
\end{equation}
Applying Lemma~\siref{lem:closed} to each conditional distribution with the probit
map $f(Z)=\Phi(a+bZ)$ gives
\begin{equation}
    \mathbb E\!\left[
        f(Z_{t+h})
        \mid
        \Omega_{t-1}=\omega_{t-1}
    \right]
    =
    \Phi\!\left(
        \frac{a+b\,\mu_{t+h}}
        {\sqrt{1+b^2 s_{t+h}^2}}
    \right),
\end{equation}
\begin{equation}
    \mathbb E\!\left[
        f(Z_{t+h})
        \mid
        u_{gt}=\delta_g,\Omega_{t-1}=\omega_{t-1}
    \right]
    =
    \Phi\!\left(
        \frac{a+b\,\mu_{t+h}^{(\delta_g)}}
        {\sqrt{1+b^2\big(s_{t+h}^{(\delta_g)}\big)^2}}
    \right).
\end{equation}
Subtracting the baseline expectation from the shocked expectation yields
\eqref{eq:GIRF_PD}\mt{}.
\end{proof}

\subsection{Proof of Corollary~\ref{cor:pd_ar}}
\label{app:proof_pd_ar}

\begin{proof}
Fix the information set $\Omega_{t-1}=\omega_{t-1}$, a horizon $h\ge0$, and
a confidence level $\alpha\in(0,1)$. Let $Q_\alpha(X)$ denote the
$\alpha$-quantile of $X$, and recall the conditional Gaussian laws
\eqref{eq:Z-conditional-laws}. The probit map $f(Z)=\Phi(a+bZ)$ is continuous
and, since $b<0$, strictly decreasing in its argument. By the equivariance of
quantiles under monotone transformations \citep{van2000asymptotic},
\[
    Q_\alpha\big(f(Z_{t+h})\big)
    =
    f\!\big(Q_{1-\alpha}(Z_{t+h})\big),
\]
and since
$Q_{1-\alpha}(Z_{t+h})=\mu_{t+h}-s_{t+h}\Phi^{-1}(\alpha)$,
\[
    \mathrm{PD}^{\mathrm{aR}}_\alpha(h)
    =
    f\!\big(\mu_{t+h}-s_{t+h}\Phi^{-1}(\alpha)\big)
    =
    \Phi\!\big(
        a+b\big(\mu_{t+h}-\Phi^{-1}(\alpha)\,s_{t+h}\big)
    \big),
\]
which is \eqref{eq:pd_ar_level}\mt{}. Applying the same argument to the shocked
distribution and subtracting the baseline quantity yields
\eqref{eq:pd_ar_girf}\mt{}.
\end{proof}

\subsection{Proof of Corollary~\ref{cor:pd_es}}
\label{app:proof_pd_es}

\begin{proof}
Fix $\Omega_{t-1}=\omega_{t-1}$ and a horizon $h\ge0$, and write $\mu=\mu_{t+h}$,
$s=s_{t+h}$, so that $Z_{t+h}\sim\mathcal N(\mu,s^2)$. Since $f$ is strictly
decreasing, $f(Z_{t+h})\ge\mathrm{PD}^{\mathrm{aR}}_\alpha(h)$ if and only if
$Z_{t+h}\le z_\alpha$, where $z_\alpha=Q_{1-\alpha}(Z_{t+h})=\mu+s\,\Phi^{-1}(1-\alpha)$
satisfies $\mathbb P(Z_{t+h}\le z_\alpha)=1-\alpha$. With the probit map
$f(z)=\Phi(a+bz)$ of \eqref{eq:probit_map_general}\mt{},
\[
    \mathrm{ES}_\alpha(h)
    =
    \mathbb E\!\left[f(Z_{t+h})\mid Z_{t+h}\le z_\alpha\right]
    =
    \frac{1}{1-\alpha}\,
    \mathbb E\!\left[\Phi(a+bZ_{t+h})\,\mathbf 1\{Z_{t+h}\le z_\alpha\}\right].
\]
Introduce $V\sim\mathcal N(0,1)$ independent of $Z_{t+h}$, so that
$\Phi(a+bZ_{t+h})=\mathbb P(V\le a+bZ_{t+h}\mid Z_{t+h})$. Then
\[
    \mathbb E\!\left[\Phi(a+bZ_{t+h})\,\mathbf 1\{Z_{t+h}\le z_\alpha\}\right]
    =
    \mathbb P\!\left(V-bZ_{t+h}\le a,\;Z_{t+h}\le z_\alpha\right).
\]
Let $W=(Z_{t+h}-\mu)/s\sim\mathcal N(0,1)$ and $U=V-bsW$. The pair $(U,W)$ is
jointly Gaussian with $\Var(U)=1+b^2s^2$, $\mathbb E[U]=0$, and
$\Cov(U,W)=-bs$. The events become $\{U\le a+b\mu\}$ and
$\{W\le\Phi^{-1}(1-\alpha)\}$, so
\[
    \mathbb P\!\left(U\le a+b\mu,\;W\le\Phi^{-1}(1-\alpha)\right)
    =
    \Phi_2\!\left(\frac{a+b\mu}{\sqrt{1+b^2s^2}},\,\Phi^{-1}(1-\alpha)\,;\,
    \frac{-bs}{\sqrt{1+b^2s^2}}\right).
\]
The two arguments are exactly $m_{t+h}$ and $\rho^\star_{t+h}$ as defined in
\eqref{eq:pd_es_coeffs}\mt{} (with $\rho^\star_{t+h}>0$ since $b<0$), which
establishes \eqref{eq:pd_es_level}--\eqref{eq:pd_es_coeffs}\mt{}. Applying the
same argument to the shocked conditional law and subtracting the baseline
quantity yields the generalized impulse response \eqref{eq:pd_es_girf}\mt{}. Finally,
$m_{t+h}=\Phi^{-1}\!\big(\mathbb E[f(Z_{t+h})]\big)$ by
Proposition~\ref{prop:GIRF-PD}; letting $\alpha\to0$ gives
\[
    \Phi_2(m_{t+h},+\infty;\rho^\star_{t+h})
    =
    \Phi(m_{t+h})
    =
    \mathbb E[f(Z_{t+h})],
\]
and $s_{t+h}\to0$ gives $\rho^\star_{t+h}\to0$ and
$\Phi(m_{t+h})\to f(\mu_{t+h})$, so
$\mathrm{ES}_\alpha(h)\to f(\mu_{t+h})$.
\end{proof}

\subsection{Remark~\ref{rem:calibration_invariance}: Calibration invariance}
\label{app:calibration_invariance}

This appendix details the algebra behind
Remark~\ref{rem:calibration_invariance}. Fix the observed default-rate series
$\{d_t\}$ and define
\begin{equation}
    Q_t=\Phi^{-1}(d_t).
    \label{eq:app_calib_q}
\end{equation}
Consider two calibration choices $c=(p,\rho)$ and $c'=(p',\rho')$, with
$p,p'\in(0,1)$ and $\rho,\rho'\in(0,1)$. By the inversion of the
Merton--Vasicek map in \eqref{eq:determines_z_hist}\mt{}, the corresponding
reconstructed factors are
\begin{equation}
    Z_t
    =
    \frac{\Phi^{-1}(p)-\sqrt{1-\rho}\,Q_t}{\sqrt{\rho}},
    \qquad
    Z_t'
    =
    \frac{\Phi^{-1}(p')-\sqrt{1-\rho'}\,Q_t}{\sqrt{\rho'}} .
    \label{eq:app_calib_two_factors}
\end{equation}
Solving the first equation for $Q_t$ and substituting into the second gives
\begin{equation}
    Z_t'
    =
    \frac{\Phi^{-1}(p')}{\sqrt{\rho'}}
    -
    \frac{\sqrt{1-\rho'}}{\sqrt{\rho'}}
    \frac{\Phi^{-1}(p)-\sqrt{\rho}Z_t}{\sqrt{1-\rho}} .
    \label{eq:app_calib_substitution}
\end{equation}
Hence
\begin{equation}
    Z_t' = A+B Z_t,
    \label{eq:app_calib_affine}
\end{equation}
where
\begin{equation}
    A
    =
    \frac{\Phi^{-1}(p')}{\sqrt{\rho'}}
    -
    \frac{\sqrt{1-\rho'}}{\sqrt{\rho'}}
    \frac{\Phi^{-1}(p)}{\sqrt{1-\rho}},
    \qquad
    B
    =
    \frac{\sqrt{1-\rho'}}{\sqrt{\rho'}}
    \frac{\sqrt{\rho}}{\sqrt{1-\rho}} .
    \label{eq:app_calib_AB}
\end{equation}
Thus, changing the calibration only applies an affine transformation to the
reconstructed factor.

This affine transformation is absorbed by the linear satellite
\eqref{eq:Z}\mt{}, which includes an intercept. In particular, if under calibration
$c$
\begin{equation}
    Z_t=\beta_0+\beta^\top Y_t^{(s)}+\eta_t,
    \label{eq:app_calib_sat_c}
\end{equation}
then under calibration $c'$,
\begin{equation}
    Z_t'
    =
    A+B Z_t
    =
    (A+B\beta_0)+(B\beta)^\top Y_t^{(s)}+B\eta_t .
    \label{eq:app_calib_sat_cp}
\end{equation}
Therefore the intercept is shifted, the slopes are rescaled, and the innovation
variance is multiplied by $B^2$:
\begin{equation}
    \beta_0'=A+B\beta_0,
    \qquad
    \beta'=B\beta,
    \qquad
    \sigma_{\eta'}^2=B^2\sigma_\eta^2 .
    \label{eq:app_calib_param_transform}
\end{equation}

The same equivariance applies to the Bayesian satellite used in
Section~\ref{sec:posterior_implementation}. The prior
$p(\beta_0,\beta,\sigma_\eta^2)\propto\sigma_\eta^{-2}$ is the usual
scale-invariant Jeffreys prior for the Gaussian linear regression with unknown
variance \citep{kass1996selection,jeffreys1998theory}. Under the affine
transformation of the dependent variable in \eqref{eq:app_calib_affine}, the
induced reparametrization leaves this prior invariant. Hence the posterior
under calibration $c'$ is the affine image of the posterior under calibration
$c$: the intercept is shifted, the slopes are rescaled, and the innovation
variance is multiplied by $B^2$. Consequently, at each posterior draw, the
conditional factor path is transformed in the same way,
\begin{equation}
    Z_{t+h}' = A+B Z_{t+h}.
    \label{eq:app_calib_factor_path}
\end{equation}

Mapping the two factors back into default probabilities with
\eqref{eq:pd_conditional_z}\mt{} gives the same object. Indeed, by construction,
\begin{equation}
    \frac{\Phi^{-1}(p)-\sqrt{\rho}Z_{t+h}}{\sqrt{1-\rho}}
    =
    \frac{\Phi^{-1}(p')-\sqrt{\rho'}Z_{t+h}'}{\sqrt{1-\rho'}} .
    \label{eq:app_calib_index_equality}
\end{equation}
Therefore
\begin{equation}
    \Phi\left(
        \frac{\Phi^{-1}(p)-\sqrt{\rho}Z_{t+h}}{\sqrt{1-\rho}}
    \right)
    =
    \Phi\left(
        \frac{\Phi^{-1}(p')-\sqrt{\rho'}Z_{t+h}'}{\sqrt{1-\rho'}}
    \right).
    \label{eq:app_calib_pd_equality}
\end{equation}
The conditional PD distribution is therefore invariant to the calibration
choice.

Consequently, any functional of this conditional PD distribution, including its
mean, quantiles, expected shortfall, and the corresponding impulse responses
defined in \eqref{eq:GIRF_PD_Def}\mt{}, is also invariant. In the empirical
implementation, uncertainty about $(\widehat p,\widehat\rho)$ is therefore not
a separate source of uncertainty to propagate: for a fixed observed
default-rate series, changing the calibration only rescales the latent factor
and the Bayesian satellite posterior in an equivariant way. The result is
conditional on the observed series $\{d_t\}$ and does not cover sampling or
measurement uncertainty in $d_t$ itself.

\section{Relaxing the satellite orthogonality restriction}
\label{app:orthogonality}

\textit{The maintained restriction.}
The satellite equation \eqref{eq:Z}\mt{} assumes that the satellite error $\eta_t$ is
independent of the VAR innovation sequence $\{u_s\}$. This exclusion restriction
enters the closed forms of Section~\ref{sec:irf_transmission} only through the
conditional variance of the systematic factor. In
Proposition~\ref{prop:Moments-Z}, the baseline conditional variance is
\begin{equation}
    s_{t+h}^2
    =
    \sum_{q=0}^{h} B(h,q)\,\Sigma_u\, B(h,q)^\top
    +
    \sigma_\eta^2 ,
    \label{eq:app_s2_baseline}
\end{equation}
with $B(h,q)=\beta^\top G_{h,q}$ as in \eqref{eq:G-hq}\mt{}. Equation
\eqref{eq:app_s2_baseline} contains \emph{no} covariance between the
macro-financial component $\beta^\top Y_{t+h}^{(s)}$ and the satellite error
$\eta_{t+h}$: that term is zero precisely because of the orthogonality
restriction. Because $Y_{t+h}^{(s)}$ loads on the entire innovation vector
through the moving-average representation, what the restriction rules out is the
correlation of $\eta$ with the \emph{full} vector $u_t$, not only with the
geopolitical innovation $u_{gt}$.

\textit{A control-function relaxation.}
We nest the restriction in a one-parameter family. Let
$\varepsilon_t^{g}=u_{gt}/\sqrt{\sigma_{gg}}$ denote the standardized
reduced-form innovation of the variable of interest, with
$\sigma_{gg}=e_g^\top\Sigma_u e_g$, and project the satellite error on it,
\begin{equation}
    \eta_t
    =
    \lambda\, \varepsilon_t^{g} + \xi_t,
    \qquad
    \xi_t\sim\mathcal N(0,\sigma_\xi^2),
    \qquad
    \xi_t \perp \{u_s\}_{s\in\mathbb Z},
    \label{eq:app_control}
\end{equation}
so that the baseline of Section~\ref{sec:macro_credit_architecture} is the
restriction $\lambda=0$. Since $\varepsilon_t^{g}\perp\xi_t$, the satellite-error
variance decomposes as
\begin{equation}
    \sigma_\eta^2 = \lambda^2 + \sigma_\xi^2 .
    \label{eq:app_var_decomp}
\end{equation}
The coefficient $\lambda$ measures a direct channel from the variable of
interest to credit risk, over and above the macro-financial block
$Y_t^{(s)}$. With the contemporaneous normalization $\Psi_0=I_n$, such a direct
channel is identified only for innovations of variables \emph{excluded} from the
contemporaneous satellite regressors, because the levels of the included
variables already span their own contemporaneous innovations. By construction
the variable of interest never enters $Y_t^{(s)}$ (the $g$-th column of each
$S_\ell^{(s)}$ is zero), so $\lambda$ is identified; the analogous coefficients
on the included variables are not, which is why \eqref{eq:app_control} retains
only the channel of the variable of interest.

\textit{Generalized closed form: the baseline variance plus a covariance term.}
Under \eqref{eq:app_control} the systematic factor keeps the affine-Gaussian
form of Appendix~\siref{app:proof_Moments}, so the conditional means and the
propagation are those of Proposition~\ref{prop:Moments-Z}. The only object that
changes is the conditional variance, which now carries the covariance between
the macro-financial component and the satellite error,
\begin{equation}
    \big(s_{t+h}^{\lambda}\big)^2
    =
    \underbrace{\sum_{q=0}^{h} B(h,q)\,\Sigma_u\, B(h,q)^\top + \sigma_\eta^2}
    _{\displaystyle s_{t+h}^2\ \text{of }\eqref{eq:app_s2_baseline}}
    +
    2\,\Cov\!\big(\beta^\top Y_{t+h}^{(s)},\,\eta_{t+h}\big),
    \label{eq:app_s2_cov}
\end{equation}
the term that the orthogonality restriction had set to zero. Only the
contemporaneous innovation $u_{t+h}$ enters both $\beta^\top Y_{t+h}^{(s)}$ and
$\eta_{t+h}=\lambda\varepsilon_{t+h}^{g}+\xi_{t+h}$, so the covariance is
\begin{equation}
    \Cov\!\big(\beta^\top Y_{t+h}^{(s)},\,\eta_{t+h}\big)
    =
    \lambda\, c_0,
    \qquad
    c_0
    =
    \frac{\beta^\top S_0^{(s)}\,\Sigma_u\, e_g}{\sqrt{\sigma_{gg}}},
    \label{eq:app_c0}
\end{equation}
and \eqref{eq:app_s2_cov} reduces to the baseline variance plus a single
interaction term,
\begin{equation}
    \big(s_{t+h}^{\lambda}\big)^2
    =
    s_{t+h}^2
    +
    2\lambda c_0,
    \qquad
    \big(s_{t+h}^{(\delta_g),\lambda}\big)^2
    =
    \big(s_{t+h}^{(\delta_g)}\big)^2
    +
    2\lambda c_0\,\mathbf 1\{h\ge1\}
    -
    \lambda^2\,\mathbf 1\{h=0\}.
    \label{eq:app_var}
\end{equation}
The conditional means gain only an impact term,
\begin{equation}
    \mu_{t+h}^{(\delta_g),\lambda}
    =
    \mu_{t+h}^{(\delta_g)}
    +
    \lambda\,\frac{\delta_g}{\sqrt{\sigma_{gg}}}\,\mathbf 1\{h=0\},
    \qquad
    \psi_Z^{g,\lambda}(h)
    =
    \psi_Z^{g}(h)
    +
    \lambda\,\frac{\delta_g}{\sqrt{\sigma_{gg}}}\,\mathbf 1\{h=0\},
    \label{eq:app_mean}
\end{equation}
because $\varepsilon_{t+h}^{g}$ is serially independent: for $h\ge1$ the direct
channel contributes nothing in expectation, and in the shocked variance the
covariance interaction $2\lambda c_0$ applies only at $h\ge1$, since at impact
the innovation $\varepsilon_t^{g}$ is fixed by the conditioning event
$u_{gt}=\delta_g$. Fixing $\varepsilon_t^{g}$ at impact has a second effect on
the shocked variance: it moves $\lambda\varepsilon_t^{g}$ entirely into the
conditional mean, so the conditional error variance falls from
$\sigma_\eta^2=\lambda^2+\sigma_\xi^2$ to $\sigma_\xi^2$, removing $\lambda^2$ at
$h=0$; this is the $-\lambda^2\,\mathbf 1\{h=0\}$ term in \eqref{eq:app_var}. In
\eqref{eq:app_var} the baseline variance $s_{t+h}^2$ already contains
$\lambda^2$ through the decomposition
$\sigma_\eta^2=\lambda^2+\sigma_\xi^2$ of \eqref{eq:app_var_decomp}, so relaxing
the restriction adds the covariance term $2\lambda c_0$ for $h\ge1$ and removes
$\lambda^2$ at impact.\footnote{Writing
the variance with the augmented residual variance $\sigma_\xi^2$ instead of
$\sigma_\eta^2$ moves the $\lambda^2$ back into the additive correction,
$\big(s_{t+h}^{\lambda}\big)^2=\big[\sum_q B(h,q)\Sigma_u B(h,q)^\top+\sigma_\xi^2\big]
+\lambda^2+2\lambda c_0$; the two expressions coincide by
\eqref{eq:app_var_decomp}.} The default-probability response is then given by
Proposition~\ref{prop:GIRF-PD} evaluated at
$\big(\mu_{t+h}^{(\delta_g),\lambda},\,(s_{t+h}^{\lambda})^2\big)$, and setting
$\lambda=0$ recovers the baseline exactly. The baseline is thus a tested special
case rather than a maintained hypothesis.

\textit{Taking it to the data.}
We estimate \eqref{eq:app_control} hierarchically: for each retained VAR
posterior draw we recompute the innovation series, augment the best
specification with $\varepsilon_t^{g}$, and draw the satellite posterior by conjugate
Bayesian regression, so that generated-regressor uncertainty is propagated.
Table~\siref{tab:direct_channel_satellite} reports the result. The direct channel
is economically and statistically negligible ($\widehat\lambda=-0.014$, $90\%$
credible interval $[-0.063,\,0.034]$; a posterior-median HAC diagnostic gives
$t=-0.58$, $p=0.56$), and the satellite coefficients are unchanged from the
restricted specification. Because $\lambda$ identifies only the channel of the
variable of interest, we complement it with a joint diagnostic that does not
require structural identification: we regress the restricted satellite residual
on the \emph{entire} reduced-form innovation vector and test joint nullity. The
restriction is not rejected ($F=0.24$ on $(6,145)$ degrees of freedom,
$p=0.96$; $R^2<0.01$). Its power is concentrated on the innovations excluded
from the satellite, the geopolitical one foremost, since the residual is
orthogonal by construction to the regressors that span the included
innovations. Consistent with \eqref{eq:app_mean}--\eqref{eq:app_var},
recomputing the responses leaves the peak and the $h\ge1$ profile of the
portfolio-PD response unchanged; only the impact ($h=0$) response shifts, by the
statistically insignificant $\lambda$ channel
(Table~\siref{tab:direct_channel_pd}).\footnote{The restricted peak of
$0.035$ percentage points in Table~\siref{tab:direct_channel_pd} differs
marginally from the $0.033$ of the main text because this exercise conditions
on the single best-BIC satellite specification rather than on the model
average of Appendix~\siref{app:satellite_bma}.} We therefore adopt the baseline
$\lambda=0$ closed form in the main text.

\begin{table}[!htbp]
\centering
  \scriptsize
  \tablestretch
  \compacttablecols
\scriptsize
\caption{\textbf{Relaxing the satellite exogeneity assumption: control-function estimates.}}
\label{tab:direct_channel_satellite}
\begin{tabular}{lcc}
\toprule
 & Restricted ($\lambda = 0$) & Augmented \\
\midrule
Intercept & 12.8310 (0.5941) & 12.8411 (0.5960) \\
log(Investment p.c.)$_{t}$ & 0.0354 (0.0061) & 0.0353 (0.0062) \\
log(Investment p.c.)$_{t-2}$ & 0.0279 (0.0068) & 0.0280 (0.0069) \\
log(GDP p.c.)$_{t-2}$ & $-0.0527$ (0.0089) & $-0.0527$ (0.0091) \\
log(Oil real)$_{t}$ & $-0.0033$ (0.0011) & $-0.0033$ (0.0011) \\
log(Oil real)$_{t-4}$ & $-0.0062$ (0.0011) & $-0.0061$ (0.0011) \\
Inflation YoY$_{t-4}$ & 0.1103 (0.0196) & 0.1106 (0.0197) \\
$\varepsilon_t^{g}$ (direct channel, $\lambda$) & -- & $-0.0141$ (0.0295) \\
\midrule
90\% CI for $\lambda$ & -- & $[-0.063,\ 0.034]$ \\
$\Pr(\lambda < 0 \mid \text{data})$ & -- & 0.68 \\
HAC $t$-statistic ($p$-value) & -- & $-0.58$ (0.56) \\
Joint orthogonality $F$-test ($p$-value) & -- & 0.24 (0.96) \\
Generated-regressor share of $\Var(\lambda)$ & -- & 3.9\% \\
Observations & 152 & 152 \\
Posterior draws & 10{,}000 & 10{,}000 \\
\bottomrule
\end{tabular}
\par\medskip
\begin{minipage}{0.92\textwidth}
\footnotesize
\textit{Notes: Posterior means with posterior standard deviations in
parentheses. The augmented satellite adds the standardized reduced-form
innovation $\varepsilon_t^{g}=u_{gt}/\sqrt{\sigma_{gg}}$ to the best-BIC
specification, relaxing the exclusion restriction $\eta_t\perp u_t$
($\lambda=0$ in the baseline); generated-regressor uncertainty is propagated
across BVAR draws, and its share of the posterior variance of $\lambda$ is
reported above. The HAC diagnostic uses Newey--West standard errors at the
posterior-median innovation series; the joint orthogonality $F$-test regresses
the restricted residual on the full reduced-form innovation vector and tests
$\eta_t\perp u_t$ jointly, not only for the geopolitical component.
}
\end{minipage}
\end{table}

\begin{table}[!htbp]
\centering
  \scriptsize
  \tablestretch
  \compacttablecols
\scriptsize
\caption{\textbf{Portfolio PD responses with and without the direct channel.}}
\label{tab:direct_channel_pd}
\begin{tabular}{llcc}
\toprule
Shock & Model & Impact $\Delta$PD ($h=0$) & Peak $\Delta$PD \\
\midrule
One s.d. & Restricted & 0.011 $[0.004, 0.020]$ & 0.035 $(h=3)$ $[0.015, 0.057]$ \\
One s.d. & Augmented & 0.023 $[-0.002, 0.049]$ & 0.035 $(h=3)$ $[0.015, 0.058]$ \\
\midrule
2001:Q3 & Restricted & 0.051 $[0.018, 0.090]$ & 0.163 $(h=3)$ $[0.066, 0.275]$ \\
2001:Q3 & Augmented & 0.105 $[-0.008, 0.233]$ & 0.163 $(h=3)$ $[0.068, 0.275]$ \\
\bottomrule
\end{tabular}
\par\medskip
\begin{minipage}{0.92\textwidth}
\footnotesize
\textit{Notes: Posterior medians in percentage points, with $68\%$ credible
intervals in brackets. Consistent with \eqref{eq:app_mean}--\eqref{eq:app_var},
the direct channel affects only the impact response: with serially independent
satellite errors, the $\lambda$ correction to the PD generalized impulse
response applies at $h=0$ and leaves the propagation profile unchanged from
$h\ge1$. The impact shift is not statistically distinguishable from zero (the
augmented impact interval contains the restricted median).
}
\end{minipage}
\end{table}

\section{Design of the simulation benchmark}
\label{app:simulation_design}

The benchmark of Section~\ref{sec:simulation_benchmark} compares the
closed-form responses of Proposition~\ref{prop:GIRF-PD} and
Corollaries~\ref{cor:pd_ar}--\ref{cor:pd_es} with their simulation estimates
at a fixed parameter vector: the least-squares estimate of the baseline
real-side VAR($P$), the posterior-mean coefficients $(\beta_0,\beta)$ and
residual variance $\sigma_\eta^2$ of the model-averaged satellite of
Appendix~\siref{app:satellite_bma}, and the baseline calibration
$(\widehat p,\widehat\rho)$. Fixing the parameters separates Monte Carlo
error from estimation uncertainty, which is propagated in
Section~\ref{sec:posterior_implementation}.

Both estimates rest on the moving-average representation of the systematic
factor,
\begin{equation}
    Z_{t+h}
    =
    \mu_{t+h}
    +
    \sum_{q=0}^{h} B(h,q)\,u_{t+q}
    +
    \eta_{t+h},
    \label{eq:app_sim_representation}
\end{equation}
with $B(h,q)=\beta^\top G_{h,q}$ as in \eqref{eq:G-hq}\mt{},
$u_{t+q}\stackrel{\text{i.i.d.}}{\sim}\mathcal N(0,\Sigma_u)$, and
$\eta_{t+h}\sim\mathcal N(0,\sigma_\eta^2)$. The closed form evaluates the
conditional moments of Proposition~\ref{prop:Moments-Z}. The simulation
generates $N$ joint innovation paths of \eqref{eq:app_sim_representation} and
maps the resulting factor through the Merton--Vasicek map $\pi$. To construct
the baseline and shocked paths, the impact innovation is decomposed as
$u_t=a_g u_{gt}+\varepsilon_t$, with $a_g=\Sigma_u e_g/\sigma_{gg}$ and
$\varepsilon_t\sim\mathcal N(0,\Sigma_{u\mid g})$ independent of $u_{gt}$,
where $\Sigma_{u\mid g}$ is given in \eqref{eq:sigma-u-cond-g}\mt{}: the baseline
path draws $u_{gt}\sim\mathcal N(0,\sigma_{gg})$, while the shocked path sets
$u_{gt}=\delta_g=\sqrt{\sigma_{gg}}$. The two paths share $\varepsilon_t$,
the future innovations $\{u_{t+q}\}_{q=1}^{h}$, and the satellite errors, and
differ only through the conditioning innovation; these common random numbers
reduce the Monte Carlo variance of the response, which is a difference
between the two paths. The mean response is estimated by the difference of
the sample means of $\pi(Z_{t+h})$ across the shocked and baseline paths, the
PD-aR response by the difference of the empirical $\alpha$-quantiles, and the
expected-shortfall response by the difference of the tail means above the
respective empirical quantiles.

For each $N\in\{10^3,10^4,10^5,10^6\}$, the simulation is replicated $R=100$
times with independent draws. The mean of the $R$ estimates measures the bias
relative to the closed form, their standard deviation measures the Monte
Carlo standard error of a single simulation of size $N$, and the root mean
squared errors of Table~\ref{tab:sim_convergence} combine the two,
$\mathrm{RMSE}=(\mathrm{bias}^2+\mathrm{SE}^2)^{1/2}$. Because the simulation
targets the asymptotic single-risk-factor default rate $\pi(Z_{t+h})$, a
default-event simulation of a finite portfolio would add granularity noise,
and the reported errors are lower bounds for such designs.

The computing times reported in Section~\ref{sec:simulation_benchmark} are
defined as follows. The closed-form time is the time of the $B=10^4$
evaluations of the mean, PD-aR, and expected-shortfall responses at both
confidence levels, the computation that produces the posterior bands of
Section~\ref{sec:empirical_results}. The nested-simulation time is $B$ times
the measured time of one simulation of size $N=10^6$.

\section{Additional empirical results and implementation details}
\label{app:additional_results}

\subsection{Alternative VAR information sets}
\label{app:alternative_information_sets}
This appendix examines whether the credit-risk response depends on the
information set used in the VAR. The baseline specification in the main text
uses a parsimonious real-side VAR. We compare it with three alternatives: a
lighter real-side specification, a monetary specification, and an uncertainty
specification; their composition is given in
Table~\ref{tab:variables_macro_all}. In all exercises, the Merton--Vasicek
credit-risk parameters are kept fixed at their baseline values
($\widehat p=3.20\%$, $\widehat\rho=0.051$). Differences across
specifications therefore come only from the macro-financial transmission
mechanism and from the selected satellite bridge from macro-financial
variables to the systematic credit factor.
Figure~\ref{fig:robust_var_irf_alternatives} reports the macro-financial
responses for the three alternative information sets. The lighter real-side
specification checks that the baseline response does not rely on the full
real-side block used in the main text. The monetary and uncertainty
specifications allow the GPR innovation to propagate through other dimensions
of the aggregate environment. Across specifications, the shock remains
persistent and the macro-financial responses are consistent with a
contractionary geopolitical-risk disturbance, although the strength and
timing of the propagation differ across information sets.
\begin{figure}[!htbp]
\centering
\begin{subfigure}{0.48\textwidth}
    \centering
    \includegraphics[width=\linewidth]{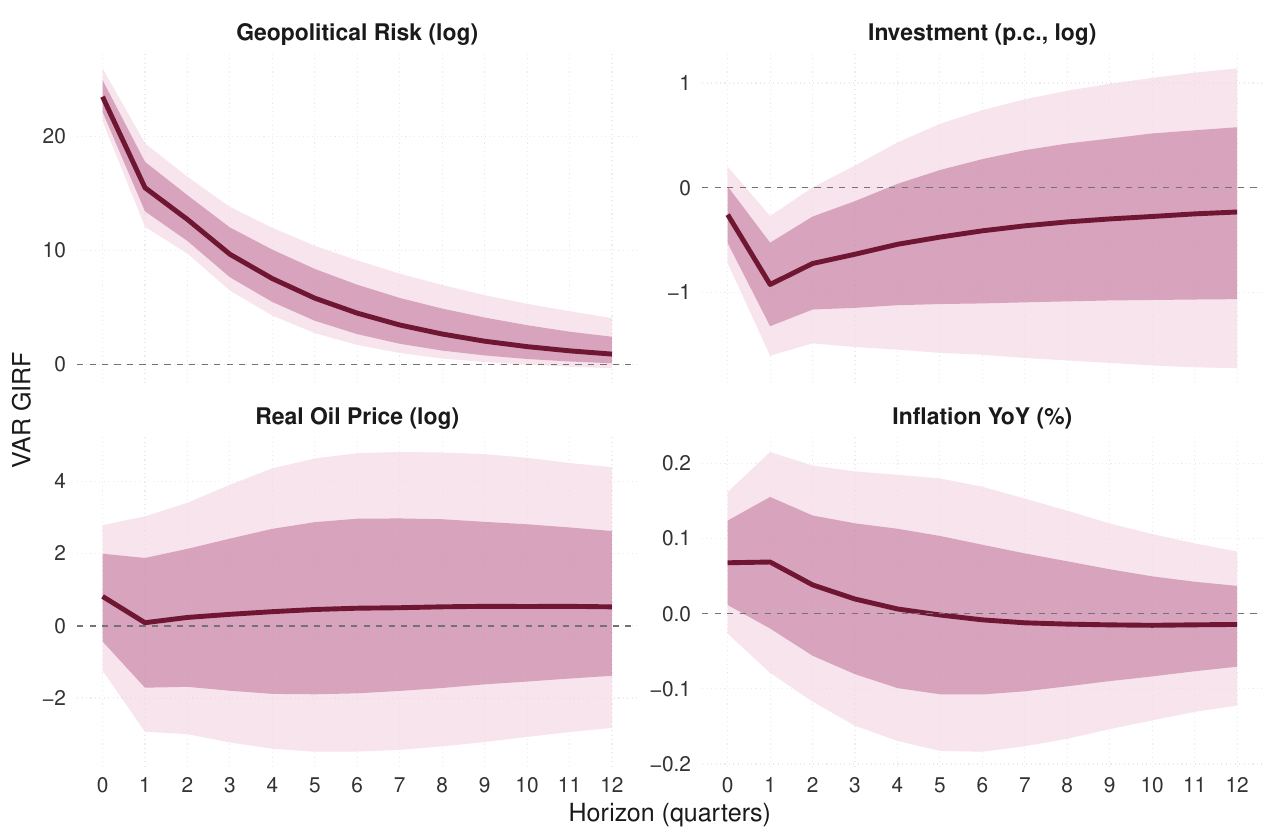}
    \caption{Lighter real-side.}
    \label{fig:robust_var_irf_real_side_light}
\end{subfigure}
\hfill
\begin{subfigure}{0.48\textwidth}
    \centering
    \includegraphics[width=\linewidth]{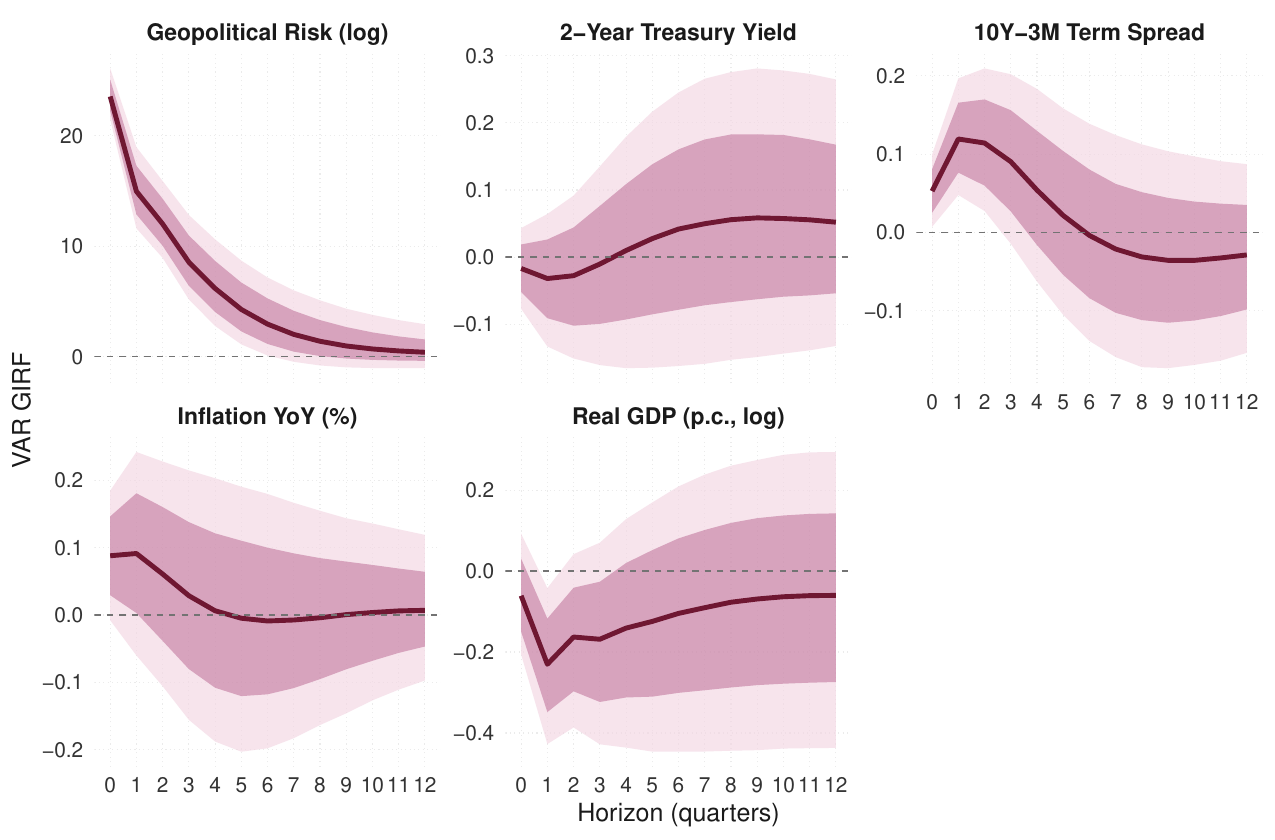}
    \caption{Monetary.}
    \label{fig:robust_var_irf_monetary}
\end{subfigure}
\vspace{0.4em}
\begin{subfigure}{0.48\textwidth}
    \centering
    \includegraphics[width=\linewidth]{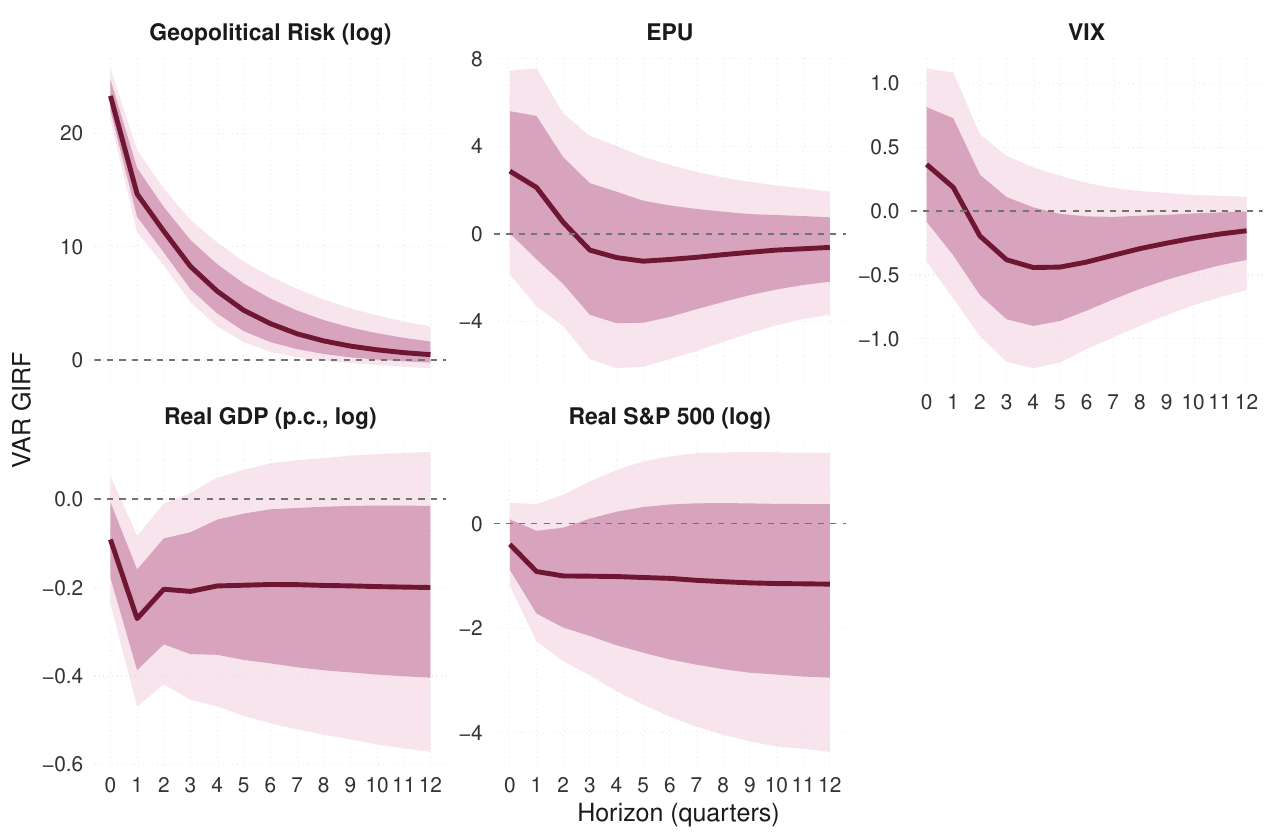}
    \caption{Uncertainty.}
    \label{fig:robust_var_irf_uncertainty}
\end{subfigure}
\caption{\textbf{Macro-financial responses under alternative VAR information
sets.}
\textit{Notes: Posterior-median generalized impulse responses to a
one-standard-deviation GPR innovation; each panel uses a different VAR
information set. Conventions as in Figure~\ref{fig:irfs_var}, where the
baseline real-side responses are reported.}}
\label{fig:robust_var_irf_alternatives}
\end{figure}
The differences in macro-financial propagation translate into different
credit-risk responses. Figure~\ref{fig:robust_pd_girf_alternatives} reports
the portfolio default-probability responses for the three alternative
information sets, and Table~\siref{tab:dralacbn_information_set_robustness}
summarizes the peaks. All specifications generate a positive median response
of portfolio PDs to a GPR innovation. Relative to the baseline peak of
$+0.033$ percentage points at $h=3$ (Section~\ref{sec:empirical_results}),
the lighter real-side specification produces a smaller and earlier response
($+0.021$ percentage points at $h=1$), the monetary specification a comparable
one ($+0.030$ at $h=1$), and the uncertainty specification a smaller one
($+0.013$ at $h=1$). The positive credit-risk effect of geopolitical-risk
innovations is therefore not confined to the baseline real-side channel,
although its magnitude and timing depend on the macro-financial channels
included in the VAR.
\begin{figure}[!htbp]
\centering
\begin{subfigure}{0.31\textwidth}
    \centering
    \includegraphics[width=\linewidth]{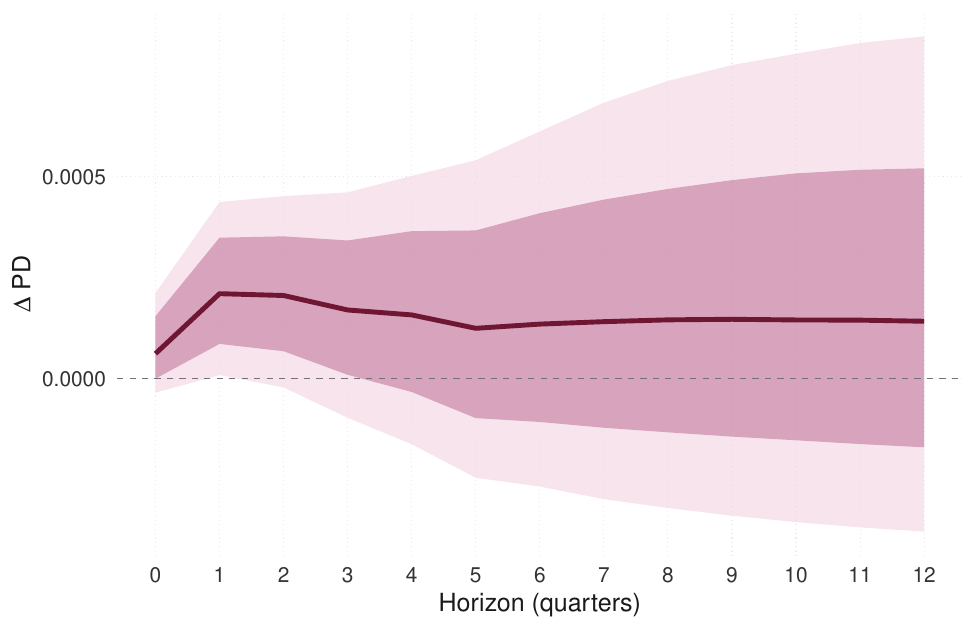}
    \caption{Lighter real-side.}
    \label{fig:robust_pd_girf_real_side_light}
\end{subfigure}
\hfill
\begin{subfigure}{0.31\textwidth}
    \centering
    \includegraphics[width=\linewidth]{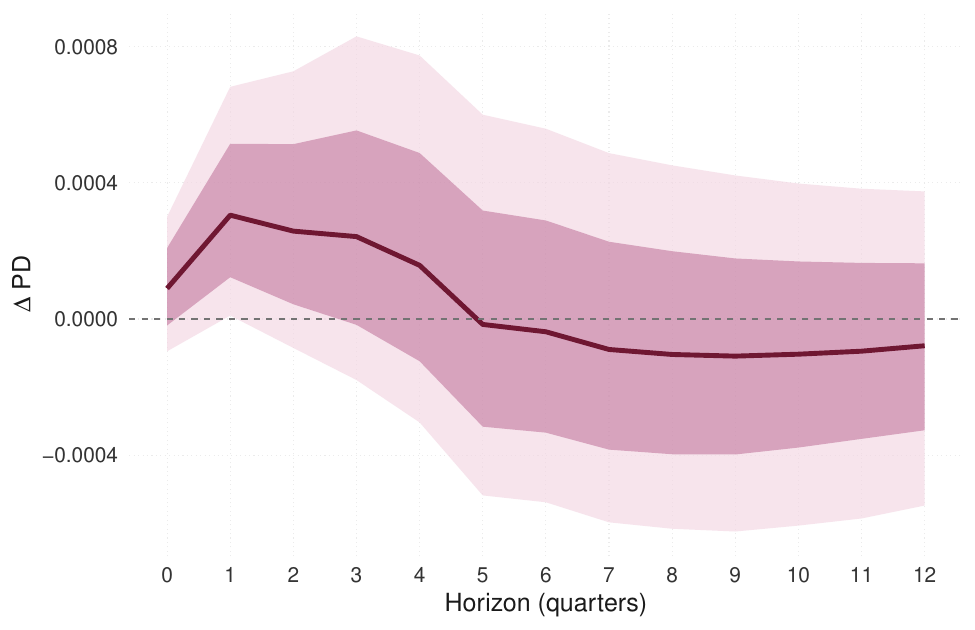}
    \caption{Monetary.}
    \label{fig:robust_pd_girf_monetary}
\end{subfigure}
\hfill
\begin{subfigure}{0.31\textwidth}
    \centering
    \includegraphics[width=\linewidth]{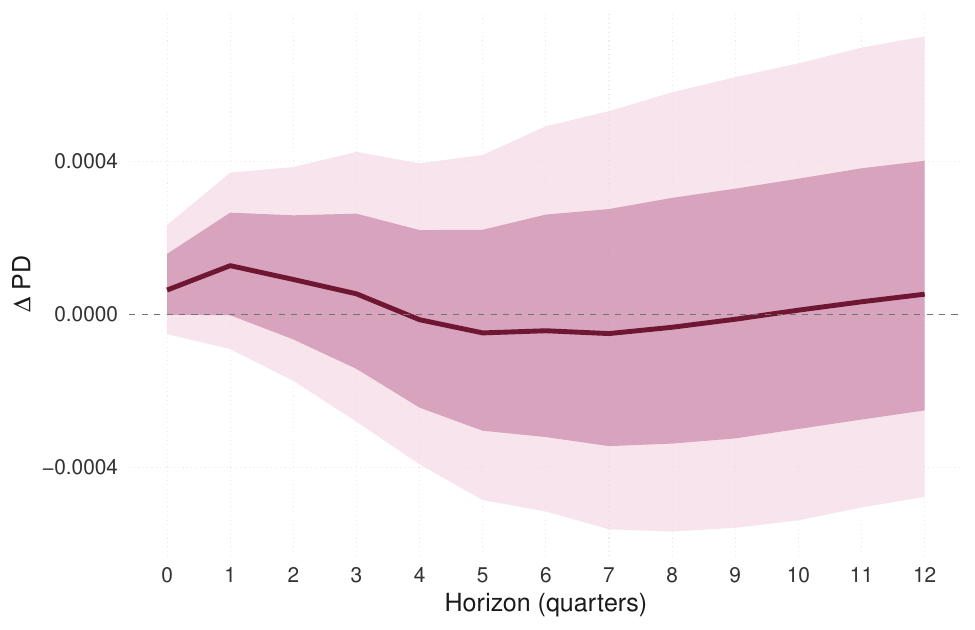}
    \caption{Uncertainty.}
    \label{fig:robust_pd_girf_uncertainty}
\end{subfigure}
\caption{\textbf{Portfolio default-probability responses under alternative VAR
information sets.}
\textit{Notes: Posterior-median generalized impulse responses of portfolio
default probabilities (in percentage points) to a one-standard-deviation GPR
innovation. Only the VAR information set and the selected satellite bridge
change; the Merton--Vasicek parameters are fixed across specifications.
Conventions as in Figure~\ref{fig:girf_pd_ar}, where the baseline real-side
response is reported.}}
\label{fig:robust_pd_girf_alternatives}
\end{figure}
\begin{table}[!htbp]
  \centering
  \scriptsize
  \tablestretch
  \compacttablecols
  \caption{\textbf{DRALACBN credit-risk responses across VAR information sets.}}
  \label{tab:dralacbn_information_set_robustness}
    \begin{tabular}{lcccc}
      \toprule
      Information set & $k$ & Sat. $R^2$ & OOS RMSE & Peak $\Delta$PD, one s.d. (pp) \\
      \midrule
      Real-side & 6 & 0.889 & 0.4203 & $0.033$ $(h=3)$ \\
      Real-side (lighter) & 4 & 0.865 & 0.4742 & $0.021$ $(h=1)$ \\
      Monetary & 5 & 0.633 & 0.8759 & $0.030$ $(h=1)$ \\
      Uncertainty & 5 & 0.769 & 0.6862 & $0.013$ $(h=1)$ \\
      \bottomrule
    \end{tabular}%
  \par
  \medskip
  \begin{minipage}{0.92\textwidth}
  \footnotesize
  \textit{Notes: The Merton--Vasicek systematic factor $Z$, asset
  correlation $\rho$, and through-the-cycle PD are fixed within the DRALACBN application.
  Across information sets, only the VAR transmission mechanism and the
  selected satellite bridge change. Peak $\Delta$PD is the largest
  posterior-median PD response to a one-standard-deviation innovation in the
  GPR equation, in percentage points, with the corresponding horizon in
  parentheses. $Z$ is reconstructed from the FRED delinquency proxy and kept
  in its original Merton--Vasicek scale.
  }
\end{minipage}
\end{table}
Table~\siref{tab:var_stability_info_sets} reports posterior stability
diagnostics for the BVAR draws. Stable posterior draws are obtained for all
retained information sets, but the stable share varies across specifications.
The monetary specification is the most stable, with a stable share of
$0.8147$, whereas the uncertainty and baseline real-side specifications have
the lowest stable shares, equal to $0.5196$ and $0.5868$; the lighter
real-side specification lies in between, at $0.6379$. Conditional on
stability, the largest-modulus eigenvalues remain close to but below one
across all specifications, with posterior medians between $0.9910$ and
$0.9961$ and 95th percentiles below one. The robustness results should
therefore be read as conditional on stable posterior draws.
\begin{table}[!htbp]
  \centering
  \scriptsize
  \tablestretch
  \compacttablecols
  \caption{\textbf{Posterior stability diagnostics for the BVAR information sets.}}
  \label{tab:var_stability_info_sets}
    \begin{tabular}{lccccccc}
      \toprule
      Spec & $k$ & $P$ & Draws & Stable share & Stable draws
      & $\varrho_{\max}$ median & $\varrho_{\max}$ p95 \\
      \midrule
      Real-side
      & 6 & 2 & 30\,000 & 0.5868 & 17\,605 & 0.9961 & 0.9996 \\
      Real-side (lighter)
      & 4 & 2 & 30\,000 & 0.6379 & 19\,137 & 0.9910 & 0.9991 \\
      Monetary
      & 5 & 2 & 30\,000 & 0.8147 & 24\,442 & 0.9944 & 0.9992 \\
      Uncertainty
      & 5 & 2 & 30\,000 & 0.5196 & 15\,588 & 0.9952 & 0.9995 \\
      \bottomrule
    \end{tabular}%
  \par
  \medskip
  \begin{minipage}{\textwidth}
  \footnotesize
  \textit{Notes: $k$ is the number of variables in the VAR,
  including the GPR index, and $P$ is the lag order. Stable share is the
  fraction of posterior draws whose companion-matrix spectral radius is
  strictly below one. The last two columns report the posterior median and
  95th percentile of the largest-modulus eigenvalue over stable draws.
  }
\end{minipage}
\end{table}
Overall, the robustness exercise supports the main empirical conclusion. The
estimated effect of a GPR innovation on aggregate U.S. credit risk is modest
in absolute terms, but it is positive across the retained information sets.
The amplitude varies because the VAR determines the macro-financial
propagation of the shock, while the satellite equation determines which
propagated variables are transmitted to the Merton--Vasicek credit block.
This sensitivity is informative: the closed-form response can be recomputed
under alternative macro-financial information sets while keeping the
credit-risk block fixed.

\subsection{Macro-financial variables: definitions and transformations}
\label{app:variables_detail}

\begin{table}[H]
  \centering
  \scriptsize
  \tablestretch
  \compacttablecols
  \caption{\textbf{Macro-financial variables, definitions, and transformations.}}
  \label{tab:variables_macro_all}
    \begin{tabularx}{\linewidth}{@{} l l X l @{}}
      \toprule
      Information set & Code & Definition & Transform / Unit \\
      \midrule
      \multicolumn{4}{@{}l}{\textit{Baseline real-side VAR}} \\
      \addlinespace[2pt]
      Baseline
      & \texttt{log\_GPRD}
      & Geopolitical Risk Index of \citet{caldara2022geopolitical}.
      & $\log$ level; quarterly average \\
      Baseline
      & \texttt{infl\_yoy\_pct}
      & CPI year-over-year inflation.
      & $100[\log(\mathrm{CPI}_t)-\log(\mathrm{CPI}_{t-4})]$ \\
      Baseline
      & \texttt{log\_inv\_pc}
      & Real private investment per capita.
      & $\log$ real level; per capita \\
      Baseline
      & \texttt{log\_private\_pc}
      & Private employment per capita.
      & $\log$ level; per capita \\
      Baseline
      & \texttt{log\_gdp\_pc}
      & Real GDP per capita.
      & $\log$ real level; per capita \\
      Baseline
      & \texttt{log\_oil\_real}
      & Real WTI oil price.
      & $\log$ real level \\
      \addlinespace[4pt]
      \multicolumn{4}{@{}l}{\textit{Alternative VAR information sets}} \\
      \addlinespace[2pt]
      Real-side (lighter)
      & \texttt{log\_GPRD}
      & Geopolitical Risk Index of \citet{caldara2022geopolitical}.
      & $\log$ level; quarterly average \\
      Real-side (lighter)
      & \texttt{infl\_yoy\_pct}
      & CPI year-over-year inflation.
      & $100[\log(\mathrm{CPI}_t)-\log(\mathrm{CPI}_{t-4})]$ \\
      Real-side (lighter)
      & \texttt{log\_inv\_pc}
      & Real private investment per capita.
      & $\log$ real level; per capita \\
      Real-side (lighter)
      & \texttt{log\_oil\_real}
      & Real WTI oil price.
      & $\log$ real level \\
      \addlinespace[2pt]
      Monetary
      & \texttt{log\_GPRD}
      & Geopolitical Risk Index of \citet{caldara2022geopolitical}.
      & $\log$ level; quarterly average \\
      Monetary
      & \texttt{gs2}
      & U.S. Treasury 2-year constant maturity yield.
      & Level; quarterly average (\%) \\
      Monetary
      & \texttt{t10Y3M}
      & 10-year Treasury yield minus 3-month Treasury bill rate.
      & Level; percentage points \\
      Monetary
      & \texttt{infl\_yoy\_pct}
      & CPI year-over-year inflation.
      & $100[\log(\mathrm{CPI}_t)-\log(\mathrm{CPI}_{t-4})]$ \\
      Monetary
      & \texttt{log\_gdp\_pc}
      & Real GDP per capita.
      & $\log$ real level; per capita \\
      \addlinespace[2pt]
      Uncertainty
      & \texttt{log\_GPRD}
      & Geopolitical Risk Index of \citet{caldara2022geopolitical}.
      & $\log$ level; quarterly average \\
      Uncertainty
      & \texttt{epu}
      & Economic Policy Uncertainty index of \citet{baker2016measuring}.
      & Level; quarterly average \\
      Uncertainty
      & \texttt{log\_gdp\_pc}
      & Real GDP per capita.
      & $\log$ real level; per capita \\
      Uncertainty
      & \texttt{log\_sp500\_real}
      & Real S\&P 500 index.
      & $\log$ real level \\
      Uncertainty
      & \texttt{vix}
      & CBOE Volatility Index.
      & Level; quarterly average \\
      \bottomrule
    \end{tabularx}%
  \par
  \medskip
  \begin{minipage}{\textwidth}
  \footnotesize
  \textit{Notes: The table reports the variables used in the
  baseline real-side VAR and in the alternative VAR information sets
  considered in Appendix~\siref{app:alternative_information_sets}. Variable
  codes match the replication repository. Real variables are deflated by the
  CPI for All Urban Consumers, and per-capita variables are divided by the
  civilian noninstitutional population aged sixteen and over.
  Higher-frequency series are aggregated to calendar quarters by
  within-quarter averaging.
  }
\end{minipage}
\end{table}

\subsection{Satellite model averaging: implementation details}
\label{app:satellite_bma}

This appendix documents the model-averaging scheme used for the satellite
equation in Section~\ref{sec:data_empirical_specification}. The candidate set
$\mathcal S$ is built from the non-GPR variables of the VAR information set,
with current values and lags up to four quarters, and is restricted to
combinations of at most six regressors. The GPR index is excluded from the
satellite, so that the innovation of interest affects the systematic factor
only through the macro-financial channel. For each candidate specification
$s\in\mathcal S$, we estimate the Gaussian satellite equation, compute its
Schwarz criterion $\mathrm{BIC}_s$, and assign the weight
\begin{equation}
    w_s
    =
    \frac{\exp\!\left(-\tfrac{1}{2}\mathrm{BIC}_s\right)}
    {\sum_{r\in\mathcal S}
     \exp\!\left(-\tfrac{1}{2}\mathrm{BIC}_r\right)} .
    \label{eq:bma_weights_appendix}
\end{equation}
Specifications are screened on the full overlap sample, ranked by the Schwarz
criterion, and retained up to a cumulative weight of $0.95$. Reported
coefficients are model-averaged posterior quantities: for a coefficient on
regressor $j$, the model-averaged posterior mean sets the coefficient to zero
in models that exclude $j$ and averages across models with the weights $w_s$.
The posterior inclusion probability of regressor $j$ is the total weight of
the retained models that contain it,
\begin{equation}
    \Pr(j\in s\mid \text{data})
    =
    \sum_{s\in\mathcal S:\, j\in s} w_s .
    \label{eq:bma_inclusion_appendix}
\end{equation}

Posterior draws are generated as a mixture: each retained specification
receives a number of draws proportional to its weight $w_s$, and within each
specification the coefficients and the error variance are drawn from the
conjugate Normal--Inverse--Gamma posterior under the non-informative prior of
Section~\ref{sec:posterior_implementation}. The resulting mixture integrates
parameter uncertainty within specifications and model uncertainty across
specifications, and it feeds the GIRF computations unchanged: terms excluded
from a specification contribute zero to the corresponding draws. The
model-averaged coefficients reported in Table~\ref{tab:regZ} are the means of
this mixture; the full set of candidate terms is reported in
Appendix~\ref{app:robustness_z}.

\subsection{Full model-averaged satellite estimates}
\label{app:robustness_z}

Table~\ref{tab:dralacbn_real_side_bma_satellite} reports the complete
model-averaged satellite for the baseline real-side information set,
including all candidate terms. Terms with low inclusion probabilities have
posterior distributions concentrated at zero, because the mixture assigns
them a zero coefficient in the specifications that exclude them; their
credible intervals therefore collapse to zero.

\begin{table}[!htbp]
  \centering
  \scriptsize
  \tablestretch
  \compacttablecols

  \caption{\textbf{DRALACBN: full model-averaged satellite regression, real-side information set.}}
  \label{tab:dralacbn_real_side_bma_satellite}

    \begin{tabular}{lcccc}
      \toprule
      Variable & Coef. & Post. SD & 90\% CI & Incl. prob. \\
      \midrule
      Intercept
      & $12.8266^{***}$ & 0.6357 & $[11.7989,\ 13.8680]$ & 1.00 \\
      Investment (p.c., log) (lag 0)
      & $0.0387^{***}$ & 0.0104 & $[0.0231,\ 0.0576]$ & 1.00 \\
      Investment (p.c., log) (lag 2)
      & $0.0164^{***}$ & 0.0139 & $[0.0000,\ 0.0371]$ & 0.64 \\
      Real oil price (log) (lag 4)
      & $-0.0036^{***}$ & 0.0030 & $[-0.0078,\ 0.0000]$ & 0.63 \\
      Inflation YoY (\%) (lag 4)
      & $0.0586^{***}$ & 0.0591 & $[0.0000,\ 0.1424]$ & 0.52 \\
      Real oil price (log) (lag 0)
      & $-0.0014^{***}$ & 0.0019 & $[-0.0049,\ 0.0000]$ & 0.39 \\
      Real oil price (log) (lag 1)
      & $-0.0016^{***}$ & 0.0026 & $[-0.0079,\ 0.0000]$ & 0.33 \\
      Real GDP (p.c., log) (lag 4)
      & $-0.0153^{***}$ & 0.0230 & $[-0.0571,\ 0.0000]$ & 0.32 \\
      Real GDP (p.c., log) (lag 0)
      & $-0.0165^{***}$ & 0.0264 & $[-0.0649,\ 0.0000]$ & 0.30 \\
      Inflation YoY (\%) (lag 3)
      & $0.0302^{***}$ & 0.0508 & $[0.0000,\ 0.1311]$ & 0.28 \\
      Real oil price (log) (lag 2)
      & $-0.0016^{***}$ & 0.0031 & $[-0.0090,\ 0.0000]$ & 0.24 \\
      Investment (p.c., log) (lag 3)
      & $0.0038^{**}$ & 0.0085 & $[0.0000,\ 0.0246]$ & 0.20 \\
      Real GDP (p.c., log) (lag 2)
      & $-0.0090^{*}$ & 0.0210 & $[-0.0582,\ 0.0000]$ & 0.18 \\
      Real oil price (log) (lag 3)
      & $-0.0010^{**}$ & 0.0024 & $[-0.0074,\ 0.0000]$ & 0.16 \\
      Real GDP (p.c., log) (lag 3)
      & $-0.0075^{**}$ & 0.0184 & $[-0.0536,\ 0.0000]$ & 0.16 \\
      Inflation YoY (\%) (lag 2)
      & $0.0160^{**}$ & 0.0397 & $[0.0000,\ 0.1189]$ & 0.15 \\
      Investment (p.c., log) (lag 1)
      & 0.0026 & 0.0090 & $[0.0000,\ 0.0262]$ & 0.09 \\
      Inflation YoY (\%) (lag 1)
      & 0.0090 & 0.0306 & $[0.0000,\ 0.1040]$ & 0.09 \\
      Real GDP (p.c., log) (lag 1)
      & $-0.0028$ & 0.0128 & $[-0.0403,\ 0.0000]$ & 0.07 \\
      Investment (p.c., log) (lag 4)
      & 0.0008 & 0.0038 & $[0.0000,\ 0.0034]$ & 0.05 \\
      Inflation YoY (\%) (lag 0)
      & 0.0005 & 0.0070 & $[0.0000,\ 0.0000]$ & 0.02 \\
      Private employment (p.c., log) (lag 1)
      & 0.0002 & 0.0021 & $[0.0000,\ 0.0000]$ & 0.01 \\
      Private employment (p.c., log) (lag 2)
      & 0.0001 & 0.0013 & $[0.0000,\ 0.0000]$ & 0.01 \\
      Private employment (p.c., log) (lag 4)
      & 0.0000 &0.0012 & $[0.0000,\ 0.0000]$ & 0.01 \\
      Private employment (p.c., log) (lag 0)
      & 0.0000 &0.0013 & $[0.0000,\ 0.0000]$ & 0.01 \\
      Private employment (p.c., log) (lag 3)
      & 0.0000 &0.0011 & $[0.0000,\ 0.0000]$ & 0.01 \\
      \midrule
      Weighting
      & \multicolumn{4}{r}{Schwarz (BIC), cumulative weight $0.95$} \\
      Models averaged
      & \multicolumn{4}{r}{627} \\
      Posterior draws
      & \multicolumn{4}{r}{10\,000} \\
      Observations
      & \multicolumn{4}{r}{152} \\
      \bottomrule
    \end{tabular}%
  \par
  \medskip
  \begin{minipage}{0.96\textwidth}
  \footnotesize
  \textit{Notes: Full set of candidate satellite coefficients under Schwarz
  (BIC) model averaging: posterior mean, posterior standard deviation, and
  $90\%$ credible interval. Coefficients and intervals include the zeros of the
  specifications that exclude the term, so intervals for low-inclusion terms may
  collapse to zero. The model-averaging construction and the star rule (sign
  probability conditional on inclusion, shown only for inclusion probability at
  least $0.10$) are as in the notes to Table~\ref{tab:regZ}.
  }
\end{minipage}
\end{table}

\clearpage

\renewcommand{\bibname}{References}

\renewcommand{\refname}{References}

\putbib[bibliography]

\end{bibunit}

\fi

\end{document}